\documentclass{article}
\usepackage{iclr2026_conference,times}

\iclrfinalcopy

\usepackage{graphicx}
\usepackage{float}
\usepackage{placeins}
\usepackage{algorithm}
\usepackage{algpseudocode}
\usepackage{bbm}
\usepackage{bm}
\usepackage{multirow}
\usepackage{booktabs}
\usepackage{xcolor}
\usepackage{tikz}
\usetikzlibrary{arrows.meta,decorations.pathreplacing}
\usepackage{amsmath}
\usepackage{amsthm}
\usepackage{amsfonts}
\usepackage{amssymb}
\usepackage{subcaption}
\usepackage{enumitem}
\usepackage{hyperref}
\hypersetup{hidelinks}
\usepackage{url}


\newtheorem{theorem}{Theorem}
\newtheorem{assumption}{Assumption}

\newtheorem{proposition}{Proposition}

\newtheorem{remark}{Remark}

\newcommand{\rev}[1]{#1}

\title{\protect\parbox{\textwidth}{\centering\Large Geometric Pareto Control:\\
Physics-Supervised Pareto Representation Learning\\
via Riemannian Energy-Gradient Flow}}

\author{Tong Wu$^1$ \qquad Anna Scaglione$^2$\\
$^1$University of Central Florida \qquad $^2$Cornell University\\
\texttt{tong.wu@ucf.edu, as337@cornell.edu}}

\begin{document}
\raggedbottom

\maketitle
\fancyhead{}

\begin{abstract}
We study multi-objective sequential control problems in physical
systems whose dynamics and operational constraints are known or can be
represented by accurate physics-based models. Although reinforcement
learning has been widely explored in such settings, standard policy
learning still faces high-dimensional action search, fixed or externally
supplied objective trade-offs, retraining under changed priorities, and
hard feasibility requirements at deployment. Our premise is that known
physics and constraints expose a compressible action structure: they
allow GPC to construct a homotopy latent space over which
Pareto-relevant actions can be navigated rather than searched directly.
The resulting feasible Pareto control geometry is organized before
deployment, so online control becomes navigation over this structure
rather than retraining a policy or repeatedly solving a nonlinear
program. We propose Geometric Pareto
Control (GPC), which embeds the supported family of dynamically
feasible Pareto-optimal control responses into a continuous latent
homotopy space using offline scalarized optimization under the known
dynamics. At deployment, the measured physical state induces a semantic
priority coordinate on this space, and GPC follows a geometry-aware
flow on the learned map to decode a feasible action. We state formal
assumptions under which decoded actions remain feasible and the
navigation error does not accumulate over the rollout horizon. Across
analytical control, safe multi-agent navigation, and optimal power flow,
GPC improves over strong optimization and safe-RL baselines
in safety, feasibility, and real-time multi-objective performance. These
results show that known dynamics and constraints can construct an
offline Pareto control geometry that can be navigated online under
changing priorities.
\end{abstract}

\section{Introduction}
\subsection{Background and Motivation}

Reinforcement learning is often proposed for sequential decision-making
in cyber-physical systems such as power grids, transportation networks,
and wireless systems \cite{sutton1998reinforcement, arulkumaran2017deep,
bertsekas2019reinforcement, su2025review, haydari2020deep,
zong2025deep, luong2019applications, feriani2021single}. These domains
are attractive for learning-based control because operating conditions
change over time and the controller must balance several objectives,
such as cost, tracking quality, congestion relief, and safety recovery.
At the same time, many of these systems are not unknown black-box
environments. Operators often have physical equations, network models,
engineering limits, and historical measurements. Even when some
parameters are not known exactly at deployment, their distributions can
often be estimated from offline data. This creates a structured regime
between classical model-free RL and online nonlinear optimization.

The structure is useful, but it does not by itself solve the
multi-objective control problem. A standard RL formulation still turns
the task into policy search: the agent learns an observation-to-action
map from sampled experience, while reward weights or preference inputs
specify how competing objectives are combined. In physical operation,
however, the set of admissible actions is already strongly shaped by
the dynamics, temporal coupling, and hard constraints. Treating these
objects only as an environment for rollouts leaves several bottlenecks:

\begin{itemize}
    \item \textbf{High-dimensional policy search.} RL must still learn
    a mapping from observations to high-dimensional continuous actions,
    even though the known dynamics and constraints already restrict the
    dynamically feasible control space.

    \item \textbf{Fixed or externally supplied objective trade-offs.}
    Standard RL typically optimizes a scalar reward, while
    preference-conditioned or multi-objective RL must cover many reward
    weights. In physical operation, objective priorities may shift
    online as the measured state and constraint margins change.

    \item \textbf{Hard deployment-time constraints.} A policy with high
    expected reward can still propose an infeasible action. Safety
    filters or projections can help, but they add a separate correction
    mechanism and may disrupt the learned policy near the physical
    feasibility boundary.

    \item \textbf{Retraining and distribution shift.} When operating
    conditions, uncertain physical parameters, or objective priorities
    change, a trained policy may require fine-tuning or a new
    preference-conditioning dataset.
\end{itemize}

This paper studies the complementary question suggested by this regime:
if the physics and constraints are available, can they be used to
organize the multi-objective solution structure before deployment, so
that online control becomes navigation over that structure rather than
fresh policy learning or repeated nonlinear optimization? The target is
a controller that reacts to the measured state, follows the appropriate
safety--performance trade-off, and returns a feasible action within a
strict decision deadline. The controller should remain useful when the
operating condition, uncertain physical parameters, or objective
priorities change within the modeled regime. \rev{\textit{GPC is most
useful when the finite-horizon optimal-control responses are
compressible; otherwise, the offline dataset may have to resolve the
same scenario-branching complexity that makes sequential policy
learning difficult. This work therefore focuses on physical systems with
known dynamics and constraints, where that search can be structured
offline.}}

\subsection{Related Work}

Several existing lines of work address parts of this problem. Classical
reinforcement learning frames sequential decision-making as learning
from interaction with a dynamic environment \cite{kaelbling1996reinforcement,
sutton1998reinforcement}. Safe and constrained RL methods use
constrained MDPs, trust regions, Lyapunov arguments, control barrier
functions, safe exploration rules, or runtime shields to reduce
constraint violations \cite{altman2021constrained, wachi2020safe,
altman1998constrained, achiam2017constrained, chow2018lyapunov,
wang2024control, garcia2012safe, wabersich2023data, yang2025cbf,
alshiekh2018safe, carr2023safe}. These methods are natural baselines
for safety-critical control, but they still train a policy for a
particular reward or constraint formulation and often need additional
machinery to guarantee hard feasibility at deployment.

Offline RL and imitation learning reduce the need for online
environment interaction by learning from stored trajectories or expert
data \cite{kostrikov2021offline, kumar2020conservative,
fujimoto2021minimalist, chen2021decision}; batch RL analyses also show
how representation choices affect bias and overfitting when learning
from fixed data \cite{francois2019overfitting}. Multi-objective
sequential decision-making and multi-objective RL study policies,
preference vectors, Pareto sets, and fairness-aware dominance criteria
\cite{roijers2013survey, van2014multi, abels2019dynamic,
reymond2022pareto, michailidis2026scalable}. These methods are closer
to our setting, but they still treat runtime control primarily as
policy evaluation or policy selection. They do not directly turn
offline Pareto solutions into a geometric object that can be traversed
by the controller as the physical state and operating priorities
change.

Model-based and physics-informed methods exploit known dynamics to
improve data efficiency or enforce physical consistency
\cite{raissi2019physics, karniadakis2021physics, de2018end,
hu2019difftaichi, chen2018neural, greydanus2019hamiltonian,
brunton2016discovering, korda2018linear}. Classical multi-objective
optimization methods can compute Pareto fronts directly, and MPC can
solve a constrained optimization problem online. The difficulty is that
front construction is usually offline and discrete, while online MPC or
nonlinear OPF solvers may be too slow for fast dispatch. GPC is designed
for the space between these approaches. It uses offline Pareto solutions
to build a continuous control map shaped by the known dynamics and
constraints, and it uses this map online in place of a repeated
nonlinear solve.

\subsection{Contributions}
We propose Geometric Pareto Control (GPC) for this setting. Relative to
RL, GPC moves the main search problem from online policy learning to
the offline construction of a dynamically feasible Pareto control
geometry. Relative to online optimization, it replaces repeated
nonlinear solves with deterministic movement on the constructed map.
The main contributions are:

\begin{itemize}
    \item \textbf{Compressibility-aware Pareto-solution embedding.} From sampled operating
    conditions and scalarized constrained solves under the known
    dynamics, GPC embeds the supported family of dynamically feasible
    Pareto-optimal control responses into a continuous latent homotopy
    space. Online control therefore starts from the geometry of offline
    Pareto solutions rather than from reward maximization through trial
    and error. \rev{This makes compressibility of the optimal-response
    family an explicit modeling target.}

    \item \textbf{State-dependent Pareto navigation.} At deployment, the
    measured physical state induces a semantic coordinate on the
    embedded Pareto space. GPC navigates the map through a
    geometry-aware flow, allowing the controller to move continuously
    between safety recovery and performance optimization as operating
    conditions and active constraints change.

    \item \textbf{Real-time navigation without an online nonlinear
    solver.} We derive a non-iterative online controller that moves on
    the embedded Pareto map and decodes an action in milliseconds. The
    online computation avoids both a full nonlinear optimization solve
    and retraining of an RL policy.

    \item \textbf{Formal guarantees for feasibility and
    non-accumulation.} We give the formal control objective,
    state the regularity assumptions needed by GPC, and prove
    front-loaded feasibility, drift, and end-to-end action-error
    guarantees. The conditions are expressed in terms of training-time
    feasibility margins, Lipschitz and local-injectivity constants of
    the learned decoders, and bounded latent velocity.
\end{itemize}

We validate GPC on analytical control, safe multi-agent navigation, and
real-time OPF benchmarks against RL, MPC, online scalarization, and
white-box optimization references.

The remainder of this paper is organized as follows.
Section~\ref{sec:problem} defines the constrained multi-objective
control problem. Section~\ref{sec:formal-guarantees} states the
assumptions, main theorems, and proof sketches before the algorithmic
details. Section~\ref{sec:offline} describes the offline Pareto map,
and Section~\ref{sec:online} gives the online navigation algorithm.
Section~\ref{sec:validation} defines the navigation and OPF
instantiations, and Section~\ref{sec:experiments} reports the
experiments. The appendix provides the RK2 predictor, expanded proofs,
the closed-loop non-accumulation result, and the action-decoder training
pipeline.

\begin{figure}[!htbp]
  \centering
\resizebox{\textwidth}{!}{\begin{tikzpicture}[
  font=\sffamily,
  >=Latex,
  box/.style={
    rounded corners=4pt,
    draw=#1,
    fill=white,
    line width=0.75pt,
    align=center,
    inner sep=5pt,
    minimum width=3.25cm,
    minimum height=1.30cm
  },
  band/.style 2 args={
    rounded corners=7pt,
    draw=#1!45,
    fill=#2,
    line width=0.7pt
  },
  flow/.style={->, line width=0.85pt, draw=#1},
  brace/.style={decorate, decoration={brace, amplitude=5pt}, line width=0.85pt, draw=#1}
]
\definecolor{gpcInk}{HTML}{17212B}
\definecolor{gpcMuted}{HTML}{5B6570}
\definecolor{gpcOrange}{HTML}{B86B18}
\definecolor{gpcOrangeLight}{HTML}{FFF4E4}
\definecolor{gpcTeal}{HTML}{167A70}
\definecolor{gpcTealLight}{HTML}{EAF8F5}
\definecolor{gpcBlue}{HTML}{4267B2}

\path[use as bounding box] (0,0) rectangle (18.6,7.05);

\node[font=\sffamily\bfseries\large, text=gpcInk, anchor=west] at (0.25,6.72)
  {Geometric Pareto Control (GPC)};
\node[font=\sffamily\scriptsize, text=gpcMuted, anchor=west] at (0.25,6.38)
  {Build a feasible Pareto geometry offline; navigate it online for real-time multi-objective control.};

\draw[band={gpcOrange}{gpcOrangeLight}] (0.25,3.65) rectangle (16.9,5.90);
\node[font=\sffamily\bfseries\small, text=gpcOrange, anchor=west] at (0.55,5.62)
  {Offline Pareto-map construction};

\draw[band={gpcTeal}{gpcTealLight}] (0.25,0.55) rectangle (16.9,2.80);
\node[font=\sffamily\bfseries\small, text=gpcTeal, anchor=west] at (0.55,2.52)
  {Online Pareto navigation};

\node[box=gpcOrange] (op) at (2.45,4.55)
  {\textcolor{gpcOrange}{\bfseries\scriptsize Operating}\\[-1pt]
   {\scriptsize sample $p\in\mathcal{P}$}\\
   {\scriptsize loads, limits, obstacles}\\
   {\scriptsize and initial states}};

\node[box=gpcOrange] (nlp) at (6.25,4.55)
  {\textcolor{gpcOrange}{\bfseries\scriptsize Constrained Pareto solves}\\[-1pt]
   {\scriptsize scalarized oracle}\\
   {\scriptsize $\min_{\mathbf u}\mathbf w^T\mathbf J$}\\
   {\scriptsize s.t. dynamics and safety}};

\node[box=gpcOrange] (train) at (10.05,4.55)
  {\textcolor{gpcOrange}{\bfseries\scriptsize Representation learning}\\[-1pt]
   {\scriptsize encode state and priority}\\
   {\scriptsize train $E_x,E_\sigma,D_u,D_s$}\\
   {\scriptsize with feasibility margins}};

\node[box=gpcOrange, fill=white!88!gpcOrangeLight] (map) at (13.85,4.55)
  {\textcolor{gpcOrange}{\bfseries\scriptsize Pareto geometry}\\[-1pt]
   {\scriptsize learned manifold}\\
   {\scriptsize $\mathcal{M}^{*}\subset G$}\\
   {\scriptsize feasible action decoder}};

\draw[flow=gpcOrange] (op) -- (nlp);
\draw[flow=gpcOrange] (nlp) -- (train);
\draw[flow=gpcOrange] (train) -- (map);

\node[box=gpcTeal] (obs) at (2.45,1.55)
  {\textcolor{gpcTeal}{\bfseries\scriptsize Observe}\\[-1pt]
   {\scriptsize measure $x_t,p_t$}\\
   {\scriptsize identify active}\\
   {\scriptsize operating condition}};

\node[box=gpcTeal] (pri) at (6.25,1.55)
  {\textcolor{gpcTeal}{\bfseries\scriptsize Set priority}\\[-1pt]
   {\scriptsize compute $\boldsymbol\sigma_t$}\\
   {\scriptsize shift between safety}\\
   {\scriptsize and performance}};

\node[box=gpcTeal] (nav) at (10.05,1.55)
  {\textcolor{gpcTeal}{\bfseries\scriptsize Geometric move}\\[-1pt]
   {\scriptsize update latent state}\\
   {\scriptsize $z_{t+1}=\mathrm{Retr}_{z_t}(\Delta t F)$}\\
   {\scriptsize along $\mathcal{M}^{*}$}};

\node[box=gpcTeal, fill=white!88!gpcTealLight] (act) at (13.85,1.55)
  {\textcolor{gpcTeal}{\bfseries\scriptsize Decode control}\\[-1pt]
   {\scriptsize output action}\\
   {\scriptsize $u_t=\mathrm{clip}(D_u(z_t))$}\\
   {\scriptsize enforce box limits}};

\draw[flow=gpcTeal] (obs) -- (pri);
\draw[flow=gpcTeal] (pri) -- (nav);
\draw[flow=gpcTeal] (nav) -- (act);

\draw[flow=gpcBlue] (map.south) -- ++(0,-0.43) -| (nav.north);
\node[font=\sffamily\scriptsize, text=gpcBlue, fill=gpcTealLight, inner sep=1pt]
  at (11.65,3.12) {reuse $\mathcal{M}^{*}$};

\draw[brace=gpcBlue] (17.22,5.62) -- (17.22,0.82);
\node[font=\sffamily\scriptsize, text=gpcBlue, align=center, anchor=west]
  at (17.42,3.22) {offline builds\\online navigates};

\node[font=\sffamily\scriptsize, text=gpcMuted, anchor=center] at (8.55,0.20)
  {Reported metrics: feasibility, reference-safe failures, percentage optimality gap, and closed-loop safety diagnostics.};
\end{tikzpicture}}
\caption{Overview of Geometric Pareto Control (GPC). The offline stage
  constructs a feasible Pareto geometry $\mathcal{M}^*$ from constrained
  multi-objective solves, while the online stage reuses this geometry to
  choose priorities, move on $\mathcal{M}^*$, and decode a feasible
  clipped control action.}
  \label{fig:gpc_framework}
\end{figure}

\section{Problem and Method}\label{sec:problem}\label{sec:formal-guarantees}

We consider a constrained multi-objective control problem with state
$\mathbf{s}_t$, observation $\mathbf{x}_t$, action $\mathbf{u}_t$, and
operating parameter $\mathbf{p}_t$.  For an objective vector
$\mathbf{J}=(J_1,\ldots,J_M)$ and priority
$\boldsymbol{\sigma}_t\in\Delta^{M-1}$, the deployment goal is to
select actions that are feasible for the known dynamics and close to
the Pareto-optimal response for the current state:
\begin{align}
  \min_{\mathbf{u}_{t:t+H-1}} \quad
  & \sum_{h=0}^{H-1}
  \boldsymbol{\sigma}_{t,h}^{\top}
  \mathbf{J}(\mathbf{s}_{t+h},\mathbf{u}_{t+h},\mathbf{p}_{t+h})
  \label{eq:iclr-control}\\
  \mathrm{s.t.}\quad
  & \mathbf{s}_{t+h+1}=F(\mathbf{s}_{t+h},\mathbf{u}_{t+h},\mathbf{p}_{t+h}),
  \quad
  \mathbf{g}=0,\quad
  \mathbf{f}\leq 0,\quad
  \mathbf{u}_{t+h}\in\mathcal{U},\quad
  h=0,\ldots,H-1.
  \notag
\end{align}
Instead of solving~\eqref{eq:iclr-control} online, GPC constructs a
latent Pareto map offline and performs deterministic navigation on it
at runtime.

\paragraph{Offline Pareto map.}
For sampled parameters $\mathbf{p}^{(i)}$ and weights
$\mathbf{w}^{(r)}\in\Delta^{M-1}$, we solve constrained scalarized
reference optimization problems
\begin{equation}
  \mathbf{u}^{*,i,r}
  \in
  \arg\min_{\mathbf{u}}
  (\mathbf{w}^{(r)})^\top \mathbf{J}(\mathbf{s}^{(i)},\mathbf{u},\mathbf{p}^{(i)})
  \quad
  \mathrm{s.t.}\quad
  \mathbf{g}=0,\ \mathbf{f}\le 0,\ \mathbf{u}\in\mathcal{U}.
  \label{eq:iclr-offline}
\end{equation}
Weighted scalarization recovers the supported Pareto subset.  This is
the subset used by GPC: the method does not claim to recover
non-supported Pareto points that no positive scalarization weight can
select.  In nonconvex domains, the same pipeline can be paired with
$\varepsilon$-constraint or Chebyshev sweeps if coverage of unsupported
regions is required.  The experiments use the scalarized supported front
because it is the portion naturally selected by online priority weights
and because the white-box references are evaluated with the same
objective family.
The resulting dataset contains state, semantic priority, optimal control
response, objective value, and feasibility margin information.  The
stored response $\mathbf{u}_k^*$ may be a single action or a
finite-horizon plan, depending on the domain.  In the
implementation, operating parameters are sampled by Latin hypercube
sampling so that the offline map covers the modeled envelope without
leaving large unsampled regions.  For each scenario and scalarization
weight we store
\begin{equation}
  \mathcal{D}
  =
  \{(\mathbf{x}_k,\mathbf{u}_k^*,\mathbf{s}_k,
  \boldsymbol{\sigma}_k,\mathbf{J}_k,\boldsymbol{\gamma}_k)\}_{k=1}^{N},
  \label{eq:iclr-dataset}
\end{equation}
where $\boldsymbol{\gamma}_k$ stores physical feasibility margins such
as obstacle clearance, branch-flow slack, voltage slack, and ramp slack.
The semantic label is not a manually chosen preference vector.  For
each objective or safety mode, a normalized violation indicator
$\delta_i(\mathbf{s}_k)\in[0,1]$ is converted to an urgency potential
\begin{equation}
  \phi_i(\mathbf{s}_k)
  =
  \exp\!\left(k_i\delta_i(\mathbf{s}_k)/\epsilon_i\right)-1,
  \qquad
  \sigma_{k,i}
  =
  \frac{\phi_i(\mathbf{s}_k)+\rho}
       {\sum_j(\phi_j(\mathbf{s}_k)+\rho)}.
  \label{eq:iclr-offline-semantic}
\end{equation}
Thus nominal states default to a balanced economic priority, while
states near a safety boundary continuously move toward the corresponding
recovery objective without a discrete mode switch.  GPC learns an
observation encoder $E_x$, an offline oracle encoder $E_o$,
and decoders $D_u,D_s$.  \rev{This gives a dual path: the $s$-path
$E_o$ organizes offline samples, while the $x$-path $E_x$ localizes
online observations on the same map.}  The deployment code
$z^x=E_x(\mathbf{x})$ is trained to match the oracle code
$z^o=E_o(\mathbf{s},\boldsymbol{\sigma},\mathbf{J})$ on a Pareto
manifold $\mathcal{M}^*\subset G$ in an ambient Lie group, while
$D_u(z)$ reconstructs feasible Pareto control responses: a single action
in one-action domains and a finite-horizon plan in receding-horizon
domains.  The training objective
combines action reconstruction, state reconstruction, cross-modal
consistency, neighborhood preservation, and margin-aware rollout
coverage; implementation details are expanded in Appendix~\ref{app:du_training}.
In the main experiments we use the compact objective
\begin{align}
  \mathcal{L}_{\mathrm{GPC}}
  =
  &\lambda_u\|D_u(z)-\mathbf{u}^*\|_2^2
  +\lambda_s\|D_s(z)-\mathbf{s}\|_2^2
  +\lambda_c\|E_x(\mathbf{x})-E_o(\mathbf{s},\boldsymbol{\sigma},\mathbf{J})\|_2^2
  \notag\\
  &+\lambda_{\mathrm{loc}}
  \sum_{j\in\mathcal{N}(i)}
  \left|
  \|z_i-z_j\|_2-\alpha\|\mathbf{u}^{*,i}-\mathbf{u}^{*,j}\|_2
  \right|
  +\lambda_{\mathrm{roll}}\mathcal{L}_{\mathrm{roll}} .
  \label{eq:iclr-training-loss}
\end{align}
The first two terms make the latent representation decode to the
reference action and physical state.  The consistency and locality terms
prevent the latent coordinate from becoming an arbitrary regression
code: nearby Pareto optima in action/state space must remain nearby on
the learned map; in the locality term, $i$ denotes the anchor sample
and $\mathcal{N}(i)$ its neighborhood on the offline Pareto map.  The
supervised loss definitions are given in
Appendix~\ref{sec:loss}.  The rollout term trains the decoder on latent
states generated by the same horizon-dependent update used at deployment
rather than only on isolated offline samples, which reduces the mismatch
between static offline supervision and online navigation.
The rollout-aware decoder refinement freezes the encoders and updates
$D_u$ on latent rollout states over the task horizon:
\begin{align}
  \mathcal{L}_{\mathrm{roll}}
  =
  \frac{1}{|\mathcal{B}|}
  \sum_{\xi\in\mathcal{B}}
  \sum_{h=1}^{H_{\xi}}
  \omega_{\xi,h}^{\mathrm{risk}}
  \left\|
  \Pi_hD_u(\tilde z_{\xi,h})-\mathbf{u}_{\xi,h}^{*}
  \right\|_2^2,
  \notag\\
  \tilde z_{\xi,h}
  =
  \mathrm{Retr}_{\tilde z_{\xi,h-1}}
  (\Delta z_{\xi,h-1}^{\mathrm{Euc}}+\gamma_{\mathrm{L}}\Delta z_{\xi,h-1}^{\mathrm{Lie}}).
  \label{eq:iclr-rollout-loss}
\end{align}
Here $\xi$ indexes a training rollout, $\tilde z_{\xi,0}$ is initialized
from the localized oracle code, $H_{\xi}$ is the horizon used by that
domain, and $\Pi_h$ extracts the $h$th action from the decoded plan.
Single-action domains are recovered by $H_{\xi}=1$ and $\Pi_1=I$, while
multi-agent navigation uses the full receding-horizon joint plan.
This refinement is important because deployment does not query only
training codes $E_x(\mathbf{x}_k)$; it queries intermediate codes created
by RK2 prediction, Lie-local correction, retraction, and output clamping.
We therefore evaluate the same guarded update used by the controller,
assign larger weights $\omega_{\xi,h}^{\mathrm{risk}}$ to low-margin
rollout states, and calibrate $\delta_{\mathrm{dec}}$ in
eq.~\eqref{eq:iclr-margin-decoder} on both oracle codes and rollout
codes.  The learned decoder is therefore trained on the actual latent
distribution induced by online navigation rather than only on isolated
offline Pareto samples.

\paragraph{Online navigation.}
At deployment, GPC first localizes the current observation on the
learned map, then updates a semantic coordinate from constraint
urgencies.  The measurement-consistent subset of the learned Pareto
manifold is
\begin{equation}
  \mathcal{M}_t
  =
  \left\{
  z\in\mathcal{M}^*
  \;\middle|\;
  \|\mathbf{x}_t-D_s(z)\|_2^2
  \le
  \tau_t
  \right\},
  \qquad
  \tau_t=\hat\sigma_t^2+\tau_{\mathrm{geom}},
  \label{eq:iclr-measurement-consistent}
\end{equation}
where $\hat\sigma_t^2$ is an online noise estimate and
$\tau_{\mathrm{geom}}$ is calibrated from held-out reconstruction
residuals.  Localization uses a smoothed projection
\begin{equation}
  z_t
  =
  (1-\alpha)z_{t-1}
  +
  \alpha
  \arg\min_{z\in\mathcal{M}_t}
  \|z-E_x(\mathbf{x}_t)\|_2^2 .
  \label{eq:iclr-localization}
\end{equation}
Given indicators $\delta_i(\hat{\mathbf{s}}_t)$ for the
objectives or active constraints, we use
\begin{equation}
  \phi_i(\hat{\mathbf{s}}_t)
  =
  \exp(k_i\delta_i(\hat{\mathbf{s}}_t)/\epsilon_i)-1,
  \qquad
  \sigma_{t,i}
  =
  \frac{\phi_i(\hat{\mathbf{s}}_t)+\rho}
       {\sum_j(\phi_j(\hat{\mathbf{s}}_t)+\rho)}.
  \label{eq:iclr-semantic}
\end{equation}
The latent state follows a geometry-aware field $\mathbf{F}(z_t)$ with a
manifold-restoration component and a tangential priority component:
\begin{equation}
\begin{aligned}
  V(z_t,\mathbf{x}_t)
  &=
  \frac{1}{\epsilon}\Phi_{\mathrm{normal}}(z_t,\mathbf{x}_t)
  +\Phi_{\mathrm{tangent}}(z_t),
  \quad
  \mathbf{F}_t
  =\mathbf{F}(z_t,\mathbf{x}_t;\boldsymbol{\sigma}_t)
  =-\nabla_z V(z_t,\mathbf{x}_t),\\[-0.2em]
  \Phi_{\mathrm{normal}}(z_t,\mathbf{x}_t)
  &=
  \|\mathbf{x}_t-D_s(z_t)\|_2^2,\quad
  \Phi_{\mathrm{tangent}}(z_t)
  =
  \sum_{i=1}^{M}
  \sigma_{t,i}J_i(D_s(z_t),D_u(z_t)).
\end{aligned}
\label{eq:iclr-energy}
\end{equation}
The $1/\epsilon$ term creates a fast restoring ``snap'' back toward
the measurement-consistent Pareto manifold, while the tangential term
creates a slower ``slide'' along the manifold toward the current
semantic priority.  A decoder-pullback metric
$G(z)=J_{D_s}(z)^\top J_{D_s}(z)+\lambda_{\mathrm{m}} I$ defines the local
geometry used by the optional Lie residual.  With $B_k$ and $\Lambda_k$
denoting the top-$k$ eigenvectors and corresponding eigenvalues of
$G(z_t)$, the implemented step is
\begin{align}
  \Delta z_{\mathrm{Euc}}
  &=
  \Delta t\,\mathbf{F}_t
  \quad\text{or the RK2 midpoint analogue}, \notag\\
  \Delta z_{\mathrm{Lie}}
  &=
  B_k\!\left[
  \mathrm{Rot}_{\mathfrak{so}(3)}
  \!\left(B_k^\top\Delta z_{\mathrm{Euc}},
  (B_k^\top \mathbf{F}_t)\oslash\sqrt{\Lambda_k};\Delta t\right)
  -B_k^\top\Delta z_{\mathrm{Euc}}\right],
  \label{eq:iclr-lie-residual}\\
  z_{t+1}
  &=
  \mathrm{Retr}_{z_t}
  \left(\Delta z_{\mathrm{Euc}}+\gamma_{\mathrm{L}}\Delta z_{\mathrm{Lie}}\right),
  \qquad
  \mathbf{u}_t=\mathrm{clip}(D_u(z_{t+1}),\mathcal{U}).
  \notag
\end{align}
Setting $\gamma_{\mathrm{L}}=0$ recovers the plain RK2 navigator, and setting the
midpoint correction to zero gives the first-order analysis step
$z_{t+1}=\mathrm{Retr}_{z_t}(\Delta t\mathbf{F}_t)$.  The experiments use the
geometry-aware RK2/Lie implementation in~\eqref{eq:iclr-lie-residual};
in all cases, the control output is clamped to physical box constraints
after decoding.

\begin{algorithm}[!htbp]
\caption{Online Geometric Pareto Control}
\label{alg:iclr-online}
\begin{algorithmic}[1]
\Require learned map $(E_x,D_u,D_s)$, observation $\mathbf{x}_t$, parameter $\mathbf{p}_t$
\State localize $z_t \leftarrow E_x(\mathbf{x}_t)$ and reconstruct $\hat{\mathbf{s}}_t\leftarrow D_s(z_t)$
\State compute violation indicators $\delta_i(\hat{\mathbf{s}}_t,\mathbf{p}_t)$
\State compute semantic priority $\boldsymbol{\sigma}_t$ using eq.~\eqref{eq:iclr-semantic}
\State form the manifold-restoring and tangential priority field $\mathbf{F}(z_t,\mathbf{x}_t,\boldsymbol{\sigma}_t)$
\State update $z_{t+1}\leftarrow \mathrm{Retr}_{z_t}(\Delta t F)$
\State decode and clamp $\mathbf{u}_t\leftarrow \mathrm{clip}(D_u(z_{t+1}),\mathcal{U})$
\State log feasibility margins and execute $\mathbf{u}_t$
\end{algorithmic}
\end{algorithm}

\paragraph{Guarantees.}
The main guarantee is a local tube result: if the learned Pareto map has
a positive physical margin and the online navigator is re-localized at
each control cycle, then decoded actions remain feasible and the action
error does not accumulate with horizon length.  Let
$\mathbf{f}_t(\mathbf{u})\le 0$ collect all non-box physical and temporal
constraints after conditioning on $(\mathbf{s}_t,\mathbf{u}_{t-1})$;
box constraints are enforced by the final clamp.  For
$z^*\in\mathcal{M}^*$ define the certified training margin and decoder
calibration error
\begin{align}
  \gamma_{\min}
  &:=
  \inf_{t,z^*\in\mathcal{M}^*}
  \min_i[-f_{t,i}(D_u(z^*))],
  \qquad
  \delta_{\mathrm{dec}}
  :=
  \sup_{z^*\in\mathcal{M}^*}
  \|D_u(z^*)-\mathbf{u}^*(z^*)\|_2 .
  \label{eq:iclr-margin-decoder}
\end{align}
\begin{assumption}[Certified Pareto tube]
\label{ass:iclr-certified-tube}
The offline optimizer returns feasible Pareto samples with
$\gamma_{\min}>0$, the learned map satisfies
$\mathcal{M}^*\subseteq\{z:\mathbf{f}_t(D_u(z))\le0\}$ over the modeled
operating envelope, and $D_u$ and $\mathbf{f}_t$ are respectively
$K_{D_u}$- and $L_f$-Lipschitz on the decoded tube around
$\mathcal{M}^*$.
\end{assumption}

\begin{assumption}[Observable localization and bounded geometric step]
\label{ass:iclr-local-step}
The state decoder is locally identifiable on the Pareto tube:
\begin{equation}
  \sigma_{\min}(J_{D_s}(z))\ge\sigma_0>0,
  \qquad
  \|z-\Pi_{\mathcal{M}^*}(z)\|_2
  \le
  \sigma_0^{-1}\sqrt{2\Phi_{\mathrm{normal}}(z,\mathbf{x}_t)} .
  \label{eq:iclr-identifiability}
\end{equation}
Each control cycle localizes to within $\delta_{\mathrm{loc}}$ of
$\mathcal{M}^*$, and the RK2/Lie update is velocity capped so that
$\|\dot z_t\|\le V_{\max}$.
\end{assumption}

\begin{proposition}[Decoded feasibility near the Pareto manifold]
\label{prop:iclr-decoded-feasibility}
Under Assumption~\ref{ass:iclr-certified-tube}, define the feasible
latent tube radius
\begin{equation}
  \delta_{\mathrm{feas}}
  =
  \frac{\gamma_{\min}}{L_fK_{D_u}} .
  \label{eq:iclr-feas-radius}
\end{equation}
If $\|z-\Pi_{\mathcal{M}^*}(z)\|_2\le\delta_{\mathrm{feas}}$, then
$\mathbf{f}_t(D_u(z))\le0$.  Consequently, if
\begin{equation}
  \delta_{\mathrm{loc}}+V_{\max}\Delta t
  \le
  \delta_{\mathrm{feas}},
  \label{eq:iclr-feasible-step-condition}
\end{equation}
then the decoded non-box constraints remain feasible at each localized
GPC cycle; box constraints are enforced by the final output clamp.
\end{proposition}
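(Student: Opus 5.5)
The argument is a composition of two Lipschitz bounds applied coordinatewise to the margin, followed by a triangle inequality that places each online latent state inside the tube. Fix a cycle $t$ and a latent point $z$ with $\|z-\Pi_{\mathcal{M}^*}(z)\|_2\le\delta_{\mathrm{feas}}$, and write $z^*:=\Pi_{\mathcal{M}^*}(z)\in\mathcal{M}^*$. By the definition of $\gamma_{\min}$ in \eqref{eq:iclr-margin-decoder}, every component satisfies $f_{t,i}(D_u(z^*))\le-\gamma_{\min}$. The segment from $z^*$ to $z$ lies within distance $\delta_{\mathrm{feas}}$ of $\mathcal{M}^*$, so Assumption~\ref{ass:iclr-certified-tube} gives
\begin{equation*}
  \|D_u(z)-D_u(z^*)\|_2\le K_{D_u}\|z-z^*\|_2\le K_{D_u}\delta_{\mathrm{feas}} .
\end{equation*}
Using the Lipschitz property of $\mathbf{f}_t$ on the decoded tube (each component $f_{t,i}$ is dominated by $\|\cdot\|$ of the vector map, hence $L_f$-Lipschitz), we get
\begin{equation*}
  f_{t,i}(D_u(z))\le f_{t,i}(D_u(z^*))+L_fK_{D_u}\delta_{\mathrm{feas}}\le-\gamma_{\min}+\gamma_{\min}=0 .
\end{equation*}
This is exactly the first claim, and it explains the choice $\delta_{\mathrm{feas}}=\gamma_{\min}/(L_fK_{D_u})$.

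For the second claim, I would bound the distance to $\mathcal{M}^*$ of the latent state actually decoded in a cycle, namely $z_{t+1}=\mathrm{Retr}_{z_t}(\Delta z)$. By Assumption~\ref{ass:iclr-local-step}, re-localization puts $z_t$ within $\delta_{\mathrm{loc}}$ of $\mathcal{M}^*$, so $\mathrm{dist}(z_t,\mathcal{M}^*)\le\delta_{\mathrm{loc}}$. The velocity cap $\|\dot z\|\le V_{\max}$ over one step of length $\Delta t$ gives $\|z_{t+1}-z_t\|_2\le V_{\max}\Delta t$. Here I interpret the retraction together with the RK2/Lie update as a capped displacement, and I would state this interpretation explicitly. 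Since $\mathrm{dist}(\cdot,\mathcal{M}^*)$ is $1$-Lipschitz, and $\|z-\Pi_{\mathcal{M}^*}(z)\|$ equals this distance when the projection is the nearest point, we obtain
\begin{equation*}
  \|z_{t+1}-\Pi_{\mathcal{M}^*}(z_{t+1})\|_2\le\delta_{\mathrm{loc}}+V_{\max}\Delta t\le\delta_{\mathrm{feas}}
\end{equation*}
by \eqref{eq:iclr-feasible-step-condition}. The first part then gives $\mathbf{f}_t(D_u(z_{t+1}))\le0$.

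The final clamp needs a separate remark. Clipping to $\mathcal{U}$ enforces the box constraints, but it could in principle move the action. I would handle this in the proposition's terms: $\mathbf{f}_t$ collects only the non-box constraints, and the statement asserts feasibility of the decoded action $D_u(z_{t+1})$. The clamp then independently guarantees the box part. If the clamped action itself must be non-box feasible, I would add the mild condition that clipping is inactive (or nonexpansive toward a feasible point) on the tube.

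Since re-localization resets the distance bound every cycle, the argument applies to each $t$ independently, and no errors accumulate.

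The main obstacle is the geometric bookkeeping rather than the inequalities. Three points need care. First, the projection $\Pi_{\mathcal{M}^*}$ must be well defined as a nearest-point map on the tube, so that the distance is $1$-Lipschitz. Second, the Lipschitz constants hold only on the decoded tube, so I must check that both $z$ and the path to $z^*$ stay inside it, which I would arrange by taking the tube radius at least $\delta_{\mathrm{feas}}$. Third, the retraction displacement must be bounded by $V_{\max}\Delta t$. I would make these explicit as standing conventions implicit in the assumptions.
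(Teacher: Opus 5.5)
Your proposal is correct and follows the paper's own proof. For the first claim, you use the same Lipschitz chain $f_i(D_u(z))\le L_fK_{D_u}\|z-z^*\|-\gamma_{\min}\le 0$. For the per-cycle claim, you use the same triangle inequality through $\Pi_{\mathcal{M}^*}(z_t)$, which you phrase as $1$-Lipschitzness of the distance to $\mathcal{M}^*$. Your caveats go beyond the paper's proof, which leaves them implicit: that the clamp is separate from the non-box claim, that the Lipschitz constants hold only on the tube, and that the retraction/Lie step must be treated as a displacement capped at $V_{\max}\Delta t$.
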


This is the feasibility mechanism used in both navigation and OPF: the
offline optimizer supplies a strict margin, and online control is safe as
long as localization keeps the decoded action inside that margin tube.

\begin{proposition}[Non-accumulation of geometric drift]
\label{prop:iclr-drift}
Assume the restoring normal flow contracts at rate $1/\epsilon$, and
the RK2/Lie discretization injects at most $F_\Delta$ normal energy per
step:
\begin{equation}
  \Phi_{\mathrm{normal}}(z+\delta_{\mathrm{Lie}})
  \le
  \Phi_{\mathrm{normal}}(z)+F_\Delta,
  \qquad
  \|\delta_{\mathrm{Lie}}\|_2
  \le
  \tfrac12 V_{\max}^2\Delta t^2 .
  \label{eq:iclr-lie-perturb-bound}
\end{equation}
Under Assumption~\ref{ass:iclr-local-step}, the per-cycle re-anchoring
gives the normal-error recursion
\begin{equation}
  \Phi_{\mathrm{normal}}(z_{t+1})
  \le
  e^{-\Delta t/\epsilon}\Phi_{\mathrm{normal}}(z_t)
  +F_\Delta+\sigma_0^2\delta_{\mathrm{loc}}^2 .
  \label{eq:iclr-normal-recursion}
\end{equation}
and therefore
\begin{equation}
  \limsup_{t\to\infty}\Phi_{\mathrm{normal}}(z_t)
  \le
  \frac{F_\Delta+\sigma_0^2\delta_{\mathrm{loc}}^2}
  {1-e^{-\Delta t/\epsilon}} .
  \label{eq:iclr-drift-bound}
\end{equation}
Thus off-manifold error is bounded by a steady-state radius that does
not grow with the rollout horizon.
\end{proposition}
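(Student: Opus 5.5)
The plan is to prove the one-step recursion~\eqref{eq:iclr-normal-recursion} by splitting each GPC cycle into three operations and bounding the change in $\Phi_{\mathrm{normal}}$ for each: (i) the continuous restoring flow over one interval of length $\Delta t$, (ii) the discretization/Lie perturbation $\delta_{\mathrm{Lie}}$, and (iii) the per-cycle re-anchoring (localization) step. The bound~\eqref{eq:iclr-drift-bound} then follows from unrolling a scalar affine recursion.

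For (i), I read the contraction hypothesis as a differential inequality along the flow $\dot z=\mathbf{F}$. Since $\mathbf{F}=-\nabla_z V$ with $V=\epsilon^{-1}\Phi_{\mathrm{normal}}+\Phi_{\mathrm{tangent}}$, the hypothesis takes the form $\frac{d}{dt}\Phi_{\mathrm{normal}}(z(t))\le-\epsilon^{-1}\Phi_{\mathrm{normal}}(z(t))$. Gr\"onwall's inequality then gives $\Phi_{\mathrm{normal}}$ after the exact flow step $\le e^{-\Delta t/\epsilon}\Phi_{\mathrm{normal}}(z_t)$. This is where the tangential term must not pump normal energy. I will state it explicitly as part of the contraction assumption, namely that $\langle\nabla\Phi_{\mathrm{normal}},\nabla\Phi_{\mathrm{tangent}}\rangle$ is dominated by the $1/\epsilon$ term, rather than try to derive it.

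For (ii), the implemented RK2/Lie update differs from the exact flow by $\delta_{\mathrm{Lie}}$, whose size is controlled through $V_{\max}$ by~\eqref{eq:iclr-lie-perturb-bound}. That display directly adds at most $F_\Delta$. Any retraction and clamp effects are absorbed into $F_\Delta$ by assumption.

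For (iii), re-anchoring moves $z$ to within $\delta_{\mathrm{loc}}$ of $\mathcal{M}^*$. I need to convert this latent distance into normal energy using the decoder. On $\mathcal{M}^*$ the measurement-consistent residual is at the calibrated level, and a first-order expansion of $D_s$ along the normal displacement gives an added term of order $\|J_{D_s}\|^2\delta_{\mathrm{loc}}^2$. Via the identifiability relation~\eqref{eq:iclr-identifiability}, which links $\Phi_{\mathrm{normal}}$ and distance through $\sigma_0$ (it is the comparison $\tfrac12\sigma_0^2\|z-\Pi z\|^2\le\Phi_{\mathrm{normal}}$ in reverse), I bound the re-anchoring contribution by $\sigma_0^2\delta_{\mathrm{loc}}^2$. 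Treating this as an additive injection, rather than as a reset, is conservative and yields exactly~\eqref{eq:iclr-normal-recursion}.

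I expect step (iii) to be the main obstacle, because~\eqref{eq:iclr-identifiability} provides only a lower bound on normal energy from distance, whereas here an upper bound is needed. If necessary, I will interpret $\sigma_0$ as a two-sided local constant on the tube, or simply state the re-anchoring injection bound as part of Assumption~\ref{ass:iclr-local-step}.

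Finally, write $a=e^{-\Delta t/\epsilon}<1$ and $c=F_\Delta+\sigma_0^2\delta_{\mathrm{loc}}^2$. Unrolling gives $\Phi_t\le a^t\Phi_0+c\sum_{j<t}a^j\le a^t\Phi_0+c/(1-a)$. Taking $\limsup$ gives~\eqref{eq:iclr-drift-bound}, which is independent of the horizon.
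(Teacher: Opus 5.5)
Your argument has the same skeleton as the paper's proof of the appendix version, Proposition~\ref{prop:drift}: a one-step contraction by $e^{-\Delta t/\epsilon}$, an additive per-step injection, and a geometric-series bound. The contraction is obtained differently.

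\begin{itemize}
\item The paper applies the descent lemma to the discrete Euler snap step $z-(\Delta t/\epsilon)\nabla_z\Phi_{\mathrm{normal}}$. It uses the PL inequality of Assumption~\ref{ass:pl} to get the factor $1-\mu_{\mathrm{PL}}\Delta t/\epsilon\le e^{-\mu_{\mathrm{PL}}\Delta t/\epsilon}$, and then absorbs $\mu_{\mathrm{PL}}$ into $\epsilon$. Its $\delta_{\mathrm{Lie}}$ is only the exponential-map-versus-Euler discrepancy, bounded by a Taylor remainder.
\item Your Gr\"onwall reading of ``contracts at rate $1/\epsilon$'' applies to the full field. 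It makes explicit the tangential cross term, which the paper ignores by analysing only the pure snap step. It also avoids the descent-lemma requirement $\Delta t/\epsilon\le 1/(2L_\Phi)$, which sits awkwardly with the regime $\Delta t/\epsilon\gg 1$ the paper later invokes.
\item The price is that your $\delta_{\mathrm{Lie}}$ is the gap between the capped RK2/Lie step and the exact stiff flow. For large $\Delta t/\epsilon$ that gap is generally not second order in $\Delta t$. So the norm bound in \eqref{eq:iclr-lie-perturb-bound} becomes a pure hypothesis in your reading rather than a Taylor estimate.
\end{itemize}
Since the statement takes \eqref{eq:iclr-lie-perturb-bound} as given, both readings are admissible. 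Your explicit unrolling $\Phi_t\le a^t\Phi_0+c/(1-a)$ is the right way to reach the $\limsup$. The paper's assertion that an inequality recursion ``converges to the fixed point'' is loose.

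For step (iii) you are doing more than the paper. Its recursion \eqref{eq:recursion-prop1} contains only $F_\Delta$, and $\sigma_0^2\delta_{\mathrm{loc}}^2$ appears only in the main-text statement, with no derivation. Your diagnosis is correct: \eqref{eq:iclr-identifiability} only lower-bounds $\Phi_{\mathrm{normal}}$, so this term has to be assumed. Two refinements follow.

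\begin{itemize}
\item You mention and then drop the on-manifold residual. An anchor in $\mathcal{M}_t$ is measurement-consistent only up to $\tau_t$ in \eqref{eq:iclr-measurement-consistent}. With $\delta_{\mathrm{loc}}$ measured to that anchor, the honest bound via the upper Lipschitz constant is $(\sqrt{\tau_t}+K_{D_s}\delta_{\mathrm{loc}})^2$. Getting exactly $\sigma_0^2\delta_{\mathrm{loc}}^2$ needs both a two-sided constant and zero residual at the anchor.
\item Because $\Phi_{\mathrm{normal}}$ depends on the observation, the two sides of the recursion refer to $\mathbf{x}_{t+1}$ and $\mathbf{x}_t$. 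The additive form alone does not bridge that change. A reset bound relative to the new observation does, and it implies the additive form because $\Phi_{\mathrm{normal}}\ge 0$. In fact it gives a uniform-in-$t$ bound directly, in the spirit of Proposition~\ref{prop:closed-loop}. That reset form is the one to postulate.
\end{itemize}
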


\begin{theorem}[End-to-end GPC deployment guarantee]
\label{thm:iclr-main}
Under Assumptions~\ref{ass:iclr-certified-tube}--\ref{ass:iclr-local-step}
and Propositions~\ref{prop:iclr-decoded-feasibility}--\ref{prop:iclr-drift},
decoded actions remain feasible whenever
eq.~\eqref{eq:iclr-feasible-step-condition} holds.  If the action
decoder has calibration error $\delta_{\mathrm{dec}}$ as in
eq.~\eqref{eq:iclr-margin-decoder}, then
\begin{equation}
  \limsup_{t\to\infty}\|\mathbf{u}_t-\mathbf{u}_t^*\|_2
  \le
  \frac{K_{D_u}}{\sigma_0}
  \sqrt{
  \frac{2(F_\Delta+\sigma_0^2\delta_{\mathrm{loc}}^2)}
  {1-e^{-\Delta t/\epsilon}}}
  +\delta_{\mathrm{dec}} .
  \label{eq:iclr-action-error-main}
\end{equation}
Equivalently,
\begin{equation}
  \sup_{t\ge0}\|\mathbf{u}_t-\mathbf{u}_t^*\|_2
  =
  O(\sqrt{\epsilon}V_{\max}\Delta t
  +K_{D_u}\delta_{\mathrm{loc}}/\sigma_0)
  +\delta_{\mathrm{dec}} .
  \label{eq:iclr-error}
\end{equation}
\end{theorem}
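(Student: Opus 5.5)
The theorem combines the two propositions, so the proof is a short chain of inequalities. The feasibility claim is Proposition~\ref{prop:iclr-decoded-feasibility} applied at each cycle. By Assumption~\ref{ass:iclr-local-step}, the localized code lies within $\delta_{\mathrm{loc}}$ of $\mathcal{M}^*$. The velocity-capped update moves it by at most $V_{\max}\Delta t$. Hence $\|z_{t+1}-\Pi_{\mathcal{M}^*}(z_{t+1})\|_2\le\delta_{\mathrm{loc}}+V_{\max}\Delta t\le\delta_{\mathrm{feas}}$ under eq.~\eqref{eq:iclr-feasible-step-condition}, and the proposition gives $\mathbf{f}_t(D_u(z_{t+1}))\le0$. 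The clamp handles the box constraints. Since clipping onto the convex set $\mathcal{U}$ is nonexpansive and $\mathbf{u}_t^*\in\mathcal{U}$, the action error before clipping upper-bounds the error after clipping, so I will work with $D_u(z_{t+1})$ directly.

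For the error bound, fix $t$ and let $z^\star_t=\Pi_{\mathcal{M}^*}(z_{t+1})$. I take $\mathbf{u}_t^*=\mathbf{u}^*(z^\star_t)$, i.e.\ the reference response is the one attached to the nearest manifold point; I expect to state this identification explicitly, since the statement leaves it implicit. The triangle inequality gives
\[
\|\mathbf{u}_t-\mathbf{u}_t^*\|_2\le\|D_u(z_{t+1})-D_u(z^\star_t)\|_2+\|D_u(z^\star_t)-\mathbf{u}^*(z^\star_t)\|_2 .
\]
The second term is at most $\delta_{\mathrm{dec}}$ by eq.~\eqref{eq:iclr-margin-decoder}. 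For the first term, the Lipschitz bound in Assumption~\ref{ass:iclr-certified-tube} (valid on the decoded tube) and the identifiability inequality eq.~\eqref{eq:iclr-identifiability} give
\[
\|D_u(z_{t+1})-D_u(z^\star_t)\|_2\le K_{D_u}\|z_{t+1}-z^\star_t\|_2\le \frac{K_{D_u}}{\sigma_0}\sqrt{2\Phi_{\mathrm{normal}}(z_{t+1},\mathbf{x}_t)} .
\]
Taking $\limsup$ and using continuity and monotonicity of $\sqrt{\cdot}$ lets me insert the steady-state bound eq.~\eqref{eq:iclr-drift-bound} from Proposition~\ref{prop:iclr-drift}. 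This yields eq.~\eqref{eq:iclr-action-error-main} exactly.

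The asymptotic form eq.~\eqref{eq:iclr-error} is the main obstacle, in two respects.

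\textbf{Sup versus limsup.} The form is stated as a $\sup$ over $t\ge0$ rather than a $\limsup$. I would handle this by unrolling the recursion eq.~\eqref{eq:iclr-normal-recursion}, which gives
\[
\Phi_t\le e^{-t\Delta t/\epsilon}\Phi_0+\frac{F_\Delta+\sigma_0^2\delta_{\mathrm{loc}}^2}{1-e^{-\Delta t/\epsilon}} .
\]
This holds uniformly in $t$ provided the initial localization gives $\Phi_0\le\sigma_0^2\delta_{\mathrm{loc}}^2$-type control. Re-anchoring at $t=0$ supplies such control, up to an $O(1)$ constant absorbed in the $O(\cdot)$.

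\textbf{Scaling of $F_\Delta$.} I need the scaling of $F_\Delta$ in terms of $V_{\max}\Delta t$. I plan to use eq.~\eqref{eq:iclr-lie-perturb-bound} with a local quadratic model $\Phi_{\mathrm{normal}}\approx\|J_{D_s}\,\cdot\|^2$, which gives $F_\Delta=O(V_{\max}^2\Delta t^2)$ at leading order. For $\Delta t\ll\epsilon$ the denominator satisfies $1-e^{-\Delta t/\epsilon}\approx\Delta t/\epsilon$. The square root then contributes $\sqrt{\epsilon V_{\max}^2\Delta t^2/\Delta t}$ from the $F_\Delta$ part. This is where matching the claimed $\sqrt{\epsilon}V_{\max}\Delta t$ is delicate, and the bookkeeping is tight: I would first try to argue that $F_\Delta$ scales as $V_{\max}^2\Delta t^3$ when the normal energy injected by the second-order RK2/Lie residual is measured after the $1/\epsilon$-weighted restoring step. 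If that does not hold, I would state eq.~\eqref{eq:iclr-error} in the regime $\Delta t\gtrsim\epsilon$, where the denominator is bounded below by a constant.

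For the localization part, the term $\sigma_0^2\delta_{\mathrm{loc}}^2$ inside the square root, divided by $\sigma_0$ and multiplied by $K_{D_u}$, gives $K_{D_u}\delta_{\mathrm{loc}}/\sigma_0$ up to the same denominator factor, which I would absorb into the constant. Finally, $\sqrt{a+b}\le\sqrt a+\sqrt b$ splits the two contributions, with $\delta_{\mathrm{dec}}$ carried through additively.
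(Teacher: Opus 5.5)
Your feasibility argument and your derivation of \eqref{eq:iclr-action-error-main} are sound and follow the paper. The feasibility part is the per-cycle argument of Proposition~\ref{prop:feasibility}. The error bound matches Stages~3--4 of the proof of Theorem~\ref{thm:error}: split off $\delta_{\mathrm{dec}}$ by the triangle inequality, apply the $K_{D_u}$-Lipschitz bound, convert $\|z-z^*\|$ to $\sigma_0^{-1}\sqrt{2\Phi_{\mathrm{normal}}}$, and insert the fixed point of the drift recursion. Your note that the clamp is nonexpansive is a useful addition.

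The gap is the $O$-form \eqref{eq:iclr-error}, and neither of your two regimes delivers it. First, $\frac{K_{D_u}}{\sigma_0}\sqrt{2\sigma_0^2\delta_{\mathrm{loc}}^2/(1-e^{-\Delta t/\epsilon})}=K_{D_u}\delta_{\mathrm{loc}}\sqrt{2/(1-e^{-\Delta t/\epsilon})}$: the $\sigma_0$ cancels, but the denominator factor remains. In your primary regime $\Delta t\ll\epsilon$ that factor is about $\sqrt{2\epsilon/\Delta t}$. It is unbounded as $\Delta t\to0$ and cannot be absorbed into a constant. So even your conjectured scaling $F_\Delta\sim V_{\max}^2\Delta t^3$ would rescue only the first term, and nothing in Assumptions~\ref{ass:iclr-certified-tube}--\ref{ass:iclr-local-step} or \eqref{eq:iclr-lie-perturb-bound} supports that scaling anyway. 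In your fallback regime $\Delta t\gtrsim\epsilon$ the denominator is harmless. However, $F_\Delta=O(V_{\max}^2\Delta t^2)$ then gives $O(V_{\max}\Delta t)$ with no $\sqrt{\epsilon}$, which is weaker than the claim.

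The paper does not derive the $\sqrt{\epsilon}$ at all. In Assumption~\ref{ass:lie-perturb} it bounds the injected normal energy by $2K_{D_s}^2\|\delta_{\mathrm{Lie}}\|^2+2K_{D_s}\|z-z^*\|\,\|\delta_{\mathrm{Lie}}\|$, treats the neighbourhood radius as a design parameter, and \emph{sets} $F_\Delta=\tfrac{\epsilon}{2}V_{\max}^2\Delta t^2$. It then works in the regime $\Delta t/\epsilon\gg1$ (e.g.\ $\Delta t/\epsilon=100$ for OPF). With that identification and $\Delta t/\epsilon\ge\ln 2$, you have $1-e^{-\Delta t/\epsilon}\ge\tfrac12$. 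Hence $\sqrt{2F_\Delta/(1-e^{-\Delta t/\epsilon})}\le\sqrt{2\epsilon}\,V_{\max}\Delta t$. The localization term is at most $2K_{D_u}\delta_{\mathrm{loc}}\le 2K_{D_s}K_{D_u}\delta_{\mathrm{loc}}/\sigma_0$, since $\sigma_0\le K_{D_s}$. Together these give the stated $O$-form. Because this identification does not follow from the main-text hypotheses, state it as an explicit extra hypothesis rather than trying to prove it.

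For $\sup$ versus $\limsup$, the paper appeals to Proposition~\ref{prop:closed-loop}, which assumes an exact hard reset plus tangential post-reset flow. Its bound, $K_{D_u}(\delta_{\mathrm{loc}}+\tfrac12\kappa_{\max}V_{\max}^2\Delta t^2)+\delta_{\mathrm{dec}}$, has a different form. Your unrolling gives $\sup_t\Phi_t\le\max\{\Phi_0,(F_\Delta+\sigma_0^2\delta_{\mathrm{loc}}^2)/(1-e^{-\Delta t/\epsilon})\}$ by induction. This is a legitimate alternative that avoids the tangentiality assumption. However, $\Phi_0$ must be bounded from above, e.g.\ by $(\sqrt{\tau_0}+K_{D_s}\delta_{\mathrm{loc}})^2$ using the measurement-consistent set and the Lipschitz constant of $D_s$. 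The constant $\sigma_0$ is a lower bound on $J_{D_s}$ and cannot supply ``$\Phi_0\le\sigma_0^2\delta_{\mathrm{loc}}^2$-type'' control.
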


Thus the guarantee is stated in quantities that are either validated
offline or monitored online: minimum training margin, decoder error,
nearest-neighbor/local Jacobian distortion, and localization residual.
Full statements and proofs are in Appendix~\ref{appendix:theory}.

\section{Instantiations}

\paragraph{Safe multi-agent navigation.}
The navigation benchmark uses up to five unicycle agents with state
$(\mathbf{q}_{i,t},\theta_{i,t})$ and control
$(v_{i,t},\omega_{i,t})$:
\begin{equation}
  \mathbf{q}_{i,t+1}
  =
  \mathbf{q}_{i,t}
  +\Delta t\,v_{i,t}(\cos\theta_{i,t},\sin\theta_{i,t}),
  \qquad
  \theta_{i,t+1}=\theta_{i,t}+\Delta t\,\omega_{i,t}.
  \label{eq:iclr-nav-dynamics}
\end{equation}
For a receding horizon $H$, the reference and training solves share
the same control limits, slew limits, moving-obstacle prediction, and
pairwise safety constraints:
\begin{align}
  v_i^{\min}\le v_{i,t+k}\le v_i^{\max},\qquad
  |\omega_{i,t+k}|\le \omega_i^{\max},\qquad
  \|\mathbf{u}_{i,t+k}-\mathbf{u}_{i,t+k-1}\|_\infty\le \Delta u^{\max}.
  \label{eq:iclr-nav-limits}
\end{align}
Moving obstacles are predicted by
$\mathbf{o}_{m,t+k+1}=\mathbf{o}_{m,t+k}+\Delta t\,\mathbf{v}_{m,t+k}$
with boundary reflection.  The hard safety margins are
\begin{align}
  h_{i,m}^{\mathrm{obs}}(t+k)
  &=
  \|\mathbf{q}_{i,t+k}-\mathbf{o}_{m,t+k}\|_2
  -r_i-r_m-d_{\mathrm{safe}}\ge 0,
  \notag\\
  h_{i,j}^{\mathrm{pair}}(t+k)
  &=
  \|\mathbf{q}_{i,t+k}-\mathbf{q}_{j,t+k}\|_2
  -r_i-r_j-d_{\mathrm{pair}}\ge 0 .
  \label{eq:iclr-nav-margins}
\end{align}
The dynamic setting is deliberately hard for a finite-budget optimizer:
the reference can fail to find a jointly safe rollout even though GPC
has learned a smooth feasible control map from many offline solves.
We therefore report both the reference-safe rate and the percentage
optimality gap conditioned on reference-safe episodes.
The full six-term objective is expanded in
Appendix~\ref{sec:multiagent-navigation}.  In the main text we write it
compactly as
\begin{equation}
  \mathbf{J}_{\mathrm{nav}}
  =
  (J_{\mathrm{goal}},J_{\mathrm{safe}},J_{\mathrm{effort}},
  J_{\mathrm{viol}},J_{\mathrm{term}},J_{\mathrm{smooth}}),
  \label{eq:iclr-multiagent-objectives-full}
\end{equation}
where the first three terms are preference-weighted and the last three
remain always active.  The white-box reference and the GPC offline solver solve
\begin{equation}
\begin{aligned}
  \mathbf{U}^{\star}(\mathbf{w};\xi_t)
  \in
  \arg\min_{\mathbf{U}_{t:t+H-1}}\quad
  &w_1\lambda_gJ_{\mathrm{goal}}
  +w_2\lambda_sJ_{\mathrm{safe}}
  +w_3\lambda_eJ_{\mathrm{effort}}\\
  &+\lambda_vJ_{\mathrm{viol}}
  +\lambda_TJ_{\mathrm{term}}
  +\lambda_{\Delta}J_{\mathrm{smooth}}\\
  \mathrm{s.t.}\quad
  &\eqref{eq:iclr-nav-dynamics}-\eqref{eq:iclr-nav-margins},
  \quad k=0,\ldots,H-1 .
\end{aligned}
\label{eq:iclr-multiagent-full-opt}
\end{equation}
At deployment, GPC decodes and executes only the first joint action and
then re-localizes at the next observation.
Let $\mathcal{I}_{\mathrm{ref}}$ denote the episodes in which the
white-box reference both reaches the joint goal and remains
collision-free.  The reported navigation gap is
\begin{equation}
  \mathrm{Gap}_{\mathrm{nav}}
  =
  100\cdot
  \frac{
  \sum_{e\in\mathcal{I}_{\mathrm{ref}}}
  \left(J_{\mathrm{GPC}}^{(e)}-J_{\mathrm{ref}}^{(e)}\right)}
  {\sum_{e\in\mathcal{I}_{\mathrm{ref}}}J_{\mathrm{ref}}^{(e)}} .
  \label{eq:iclr-nav-gap}
\end{equation}
This avoids the misleading comparison in which an unsafe reference
trajectory is treated as an optimal trajectory.  In the moving-obstacle
case, only $25\%$ of reference rollouts satisfy this safety condition.

\paragraph{Optimal power flow.}
The OPF benchmark uses the IEEE 30-bus AC network.  The action contains
generator dispatch and voltage-control variables, and the constraints
include AC power-flow equations, generator limits, voltage limits,
thermal branch limits, and ramp-rate limits.  The objective vector
contains thermal recovery, voltage recovery, and economic dispatch:
$\mathbf{J}=(J_{\mathrm{thermal}},J_{\mathrm{voltage}},
J_{\mathrm{econ}})$.  The first objective penalizes branch loading near
thermal ratings, the second penalizes voltage deviations outside the
admissible band, and the third is the quadratic economic generation
cost.  The full formulas are given in Appendix~\ref{sec:opf}.  The
semantic coordinate in~\eqref{eq:iclr-semantic} therefore shifts weight
toward $J_{\mathrm{thermal}}$ or $J_{\mathrm{voltage}}$ near constraint
boundaries and returns weight to $J_{\mathrm{econ}}$ during nominal
operation.
For the OPF instantiation, the real-time indicators used in
eqs.~\eqref{eq:iclr-offline-semantic} and~\eqref{eq:iclr-semantic} are
computed from reconstructed voltage phasors
$\hat{\mathbf{s}}_t=D_s(z_t)$ and decoded dispatch
$\hat{\mathbf{u}}_t=D_u(z_t)$:
\begin{equation}
  \boldsymbol{\delta}_{\mathrm{OPF}}(\hat{\mathbf{s}}_t,\hat{\mathbf{u}}_t)
  =
  (\delta_{\mathrm{th}},\delta_{\mathrm{v}},\delta_{\mathrm{e}})
  \in [0,1]^3 .
  \label{eq:iclr-opf-indicators}
\end{equation}
The thermal and voltage indicators measure proximity to hard operating
boundaries, while the economic indicator is normalized by offline cost
quantiles and remains a low-level background drive when the grid is
safe; their expanded definitions are given in
Appendix~\ref{sec:opf}.  Passing
$(\delta_{\mathrm{th}},\delta_{\mathrm{v}},\delta_{\mathrm{e}})$ through
the exponential urgency map gives a smooth priority shift from economic
dispatch to thermal or voltage recovery, which is the OPF counterpart of
the obstacle-clearance semantic shift in navigation.
For uncertainty-aware training, branch admittances are sampled as
$Y_{ij}=Y^0_{ij}(1+\xi^Y_{ij})$ with
$\xi^Y_{ij}\in[-0.03,0.03]$ and held fixed within a trajectory.
Equivalently, for each branch $\ell$,
\begin{equation}
  b_{\ell}=1+\xi_{\ell}^{Y},\qquad
  \xi_{\ell}^{Y}\sim\mathrm{clip}(\mathcal{N}(0,\sigma_Y^2),-\rho_Y,\rho_Y),
  \qquad
  Y_{\ell}^{(\mathrm{traj})}=b_{\ell}Y_{\ell}^{0},
  \label{eq:iclr-branch-uncertainty}
\end{equation}
with $(\sigma_Y,\rho_Y)=(0.01,0.03)$ in the uncertainty experiment.
The vector $\mathbf{b}$ is sampled once per trajectory and reused for
all dispatch steps in that trajectory, matching the physical separation
between slow line-parameter drift and fast OPF dispatch.  If any step
in the ramp-coupled offline optimization chain fails to converge, the trajectory is
discarded rather than partially included in the Pareto map.
For each horizon step, the offline solver solves the scalarized AC-OPF problem
\begin{align}
  \min_{\{P_G,Q_G,V,\theta\}_{0:H}}\quad
  & \sum_{h=0}^{H}
  \mathbf{w}_{h}^{\top}
  (J_{\mathrm{thermal},h},J_{\mathrm{voltage},h},J_{\mathrm{econ},h})
  \label{eq:iclr-opf-opt}\\
  \mathrm{s.t.}\quad
  & P_{G,i,h}-P_{D,i,h}
  =
  \sum_{j=1}^{n_b}
  V_{i,h}V_{j,h}
  (G_{ij}\cos\theta_{ij,h}+B_{ij}\sin\theta_{ij,h}),
  \notag\\
  & Q_{G,i,h}-Q_{D,i,h}
  =
  \sum_{j=1}^{n_b}
  V_{i,h}V_{j,h}
  (G_{ij}\sin\theta_{ij,h}-B_{ij}\cos\theta_{ij,h}),
  \notag\\
  & |S_{\ell,h}(V,\theta)|\le S_{\ell}^{\max},\quad
  V_i^{\min}\le V_{i,h}\le V_i^{\max},\quad
  P_{G,i}^{\min}\le P_{G,i,h}\le P_{G,i}^{\max},
  \notag\\
  & Q_{G,i}^{\min}\le Q_{G,i,h}\le Q_{G,i}^{\max},\quad
  |P_{G,i,h}-P_{G,i,h-1}|\le R_i .
  \notag
\end{align}
The online controller decodes the first dispatch action from the
learned Pareto geometry; it does not run IPOPT or solve a new AC-OPF
inside the real-time loop.

\section{Experiments}

The main text focuses on the two settings where the learned Pareto
geometry is most needed at deployment: safe multi-agent navigation, where
a finite-budget white-box reference often fails to remain safe, and IEEE
30-bus OPF under branch-admittance uncertainty, where the controller must
remain feasible when the network model drifts from nominal.  Auxiliary
Case~1 and Case~3 studies are reported in
Appendices~\ref{sec:exp:analytical} and~\ref{sec:exp:nobattery}.
We report feasibility,
percentage optimality gap, and wall-clock decision latency; the
navigation gap is conditioned on reference-safe episodes because unsafe
reference trajectories are not valid optima.

\begin{table}[!b]
\centering
\caption{Simulation organization.  Case~2 and Case~4 are the main-text
stress tests; Case~1 and Case~3 are retained in the appendix as
mechanism and sanity-check studies.}
\label{tab:iclr-setup}
\small
\setlength{\tabcolsep}{4pt}
\begin{tabular}{p{0.14\textwidth}p{0.22\textwidth}p{0.31\textwidth}p{0.21\textwidth}}
\toprule
\textbf{Case} & \textbf{System} & \textbf{Hard constraints} & \textbf{Role in paper}\\
\midrule
Case~1 & 2D dynamic navigation & Obstacle, input box, slew-rate coupling & Appendix mechanism check\\
Case~2 & Five-agent robot navigation & Static/moving obstacles, pairwise separation, control slew & Main safe-navigation stress test\\
Case~3 & IEEE 30-bus nominal OPF & AC flow, thermal, voltage, generator, ramp limits & Appendix OPF sanity check\\
Case~4 & IEEE 30-bus uncertain OPF & Same limits with trajectory-level branch-admittance drift & Main robustness stress test\\
\bottomrule
\end{tabular}
\end{table}

The navigation benchmark uses five-agent unicycle dynamics with
robot-obstacle, moving-obstacle, and robot-robot safety margins.  The OPF
benchmark uses the IEEE 30-bus AC network with power-flow, thermal,
voltage, generator, and ramp constraints under fixed trajectory-level
branch-admittance drift.  Implementation, training, and baseline details
are given in Appendix~\ref{sec:experiments}.

\paragraph{Metrics.}
For a navigation episode $e$, define the minimum closed-loop margins
\begin{equation}
  m_{\mathrm{obs}}^{(e)}
  =
  \min_{t,i,m}h_{i,m}^{\mathrm{obs}}(t),
  \qquad
  m_{\mathrm{pair}}^{(e)}
  =
  \min_{t,i<j}h_{i,j}^{\mathrm{pair}}(t).
  \label{eq:iclr-nav-metric-margins}
\end{equation}
An episode is counted as safely successful only if all active agents
reach their goals and both margins are nonnegative.  Scene-collision-free
counts only obstacle and pairwise safety, independent of terminal goal
completion.  The reference-safe set is
\begin{equation}
  \mathcal{I}_{\mathrm{ref}}
  =
  \{e:\mathrm{goal}_{\mathrm{ref}}^{(e)}=1,\ 
  m_{\mathrm{obs,ref}}^{(e)}\ge0,\ 
  m_{\mathrm{pair,ref}}^{(e)}\ge0\},
  \label{eq:iclr-refsafe-set}
\end{equation}
and the navigation optimality gap is the percentage aggregate gap in
eq.~\eqref{eq:iclr-nav-gap}.  For OPF, feasibility is the fraction of
dispatch steps satisfying AC balance residual tolerance, thermal limits,
voltage limits, generator limits, and ramp limits.  The OPF percentage
gap is
\begin{equation}
  \mathrm{Gap}_{\mathrm{OPF}}
  =
  100\cdot
  \frac{\sum_t J_{\mathrm{total}}^{\mathrm{method}}(t)
        -\sum_t J_{\mathrm{total}}^{\mathrm{MO\text{-}IPOPT}}(t)}
       {\sum_t J_{\mathrm{total}}^{\mathrm{MO\text{-}IPOPT}}(t)}.
  \label{eq:iclr-opf-gap}
\end{equation}
These definitions are deliberately conservative: an unsafe low-cost
trajectory is reported as a safety failure, not as an optimal solution.

\begin{table}[!htbp]
\centering
\caption{Main quantitative results with navigation baselines.  Navigation
rows report percentages over 100 evaluation episodes: Joint is all-agent
goal success, Coll.-free is scene-collision-free rollout rate, and Safe
requires both.  Navigation gaps are percentage gaps conditioned on
reference-safe episodes.  For the moving-obstacle GPC row, $\dagger$
reports the reachable-conditioned rate after excluding one
reference-unreachable scene.
For OPF, Feas. reports dispatch feasibility over 300 steps and Viol. is
the number of infeasible steps.}
\label{tab:iclr-main-results}
\scriptsize
\setlength{\tabcolsep}{2.5pt}
\resizebox{\textwidth}{!}{%
\begin{tabular}{llcccccc}
\toprule
\textbf{Case} & \textbf{Method} &
\textbf{Joint/feas.} &
\textbf{Coll.-free} &
\textbf{Safe/viol.} &
\textbf{Margin} &
\textbf{Gap (\%)}$\downarrow$ &
\textbf{Latency}\\
\midrule
Static nav. & \textbf{GPC} & \textbf{100\%} & \textbf{100\%} & \textbf{100\%} & \textbf{0.301} & 0.28 & --\\
Static nav. & Finite-budget ref. & -- & -- & 60\% & -- & 0.00 & solver\\
Static nav. & PPO & 0\% & 5\% & 0\% & -0.558 & -- & policy\\
Static nav. & PPO-Lagrangian & 0\% & 5\% & 0\% & -0.540 & -- & policy\\
Static nav. & CPO & 0\% & 13\% & 0\% & -0.467 & -- & policy\\
Static nav. & PPO-shielded & 0\% & 8\% & 0\% & -0.429 & -- & policy\\
\midrule
Moving nav. & \textbf{GPC} & \textbf{99\%}$^\dagger$ & \textbf{100\%} & \textbf{99\%}$^\dagger$ & \textbf{0.203} & -3.43 & --\\
Moving nav. & Finite-budget ref. & 83\% & 42\% & 35\% & -- & 0.00 & solver\\
Moving nav. & PPO & 0\% & 0\% & 0\% & -0.663 & -- & policy\\
Moving nav. & PPO-Lagrangian & 0\% & 0\% & 0\% & -0.672 & -- & policy\\
Moving nav. & CPO & 0\% & 0\% & 0\% & -0.637 & -- & policy\\
Moving nav. & PPO-shielded & 0\% & 0\% & 0\% & -0.567 & -- & policy\\
\midrule
Uncertain OPF & MO-IPOPT & 100.0\% & -- & 0 viol. & -- & 0.00 & $\gg10^3$ ms\\
Uncertain OPF & \textbf{GPC} & \textbf{100.0\%} & -- & \textbf{0 viol.} & -- & \textbf{0.25} & \textbf{12.6 ms}\\
Uncertain OPF & MPC & 99.67\% & -- & 1 viol. & -- & 0.23 & 1228.7 ms\\
Uncertain OPF & TD3/CPO & 0.0\% & -- & 300 viol. & -- & 91.8 / 30.1 & $<1$ ms\\
\bottomrule
\end{tabular}}
\end{table}

\paragraph{Safe multi-agent navigation.}
Table~\ref{tab:iclr-main-results} and
Figure~\ref{fig:iclr-multiagent} show the main safe-navigation result.
Over 100 evaluation episodes, GPC obtains $100\%$ safe success in static
scenes, whereas the finite-budget reference is safe in only $60\%$ and
the RL baselines have $0\%$ safe successes.  In moving scenes,
GPC reaches $99\%$ reachable-conditioned safe success and $100\%$
scene-collision-free rollouts after excluding one reference-unreachable
scene.  The
finite-budget reference is scene-collision-free in only $42\%$
and safe in only $35\%$; the RL baselines again have $0\%$ safe
success.  This is
the key reliability result: neither finite-budget online optimization
nor standard safe-RL variants are dependable deployment oracles in
dynamic multi-agent scenes.
Conditioned on the episodes where the reference is safe, GPC attains
a $-3.43\%$ aggregate cost gap.  Static and dynamic rollout
GIFs are included in the supplementary artifacts.

The margin columns in Table~\ref{tab:iclr-main-results} separate goal
success from geometric safety.  The dynamic benchmark has smaller
margins than the static benchmark, as expected, but the aggregate
minimum margins remain positive for scene safety, ego-obstacle
clearance, agent-obstacle clearance, and agent-agent separation.  This
supports the claim that the learned controller is not merely reaching the
goal with occasional near misses; it is preserving the explicit
geometric safety constraints throughout the rollout.

The RL results also clarify the role of shielding.  One-step shielded
PPO raises the scene-collision-free rate from roughly $5\%$ to $8\%$
in the static benchmark and remains near zero in the moving benchmark,
but it still obtains $0\%$ safe success because goal-reaching and
multi-agent safety must be satisfied simultaneously.

\begin{figure}[H]
\centering
\begin{subfigure}[t]{0.48\textwidth}
  \centering
  \includegraphics[width=\linewidth]{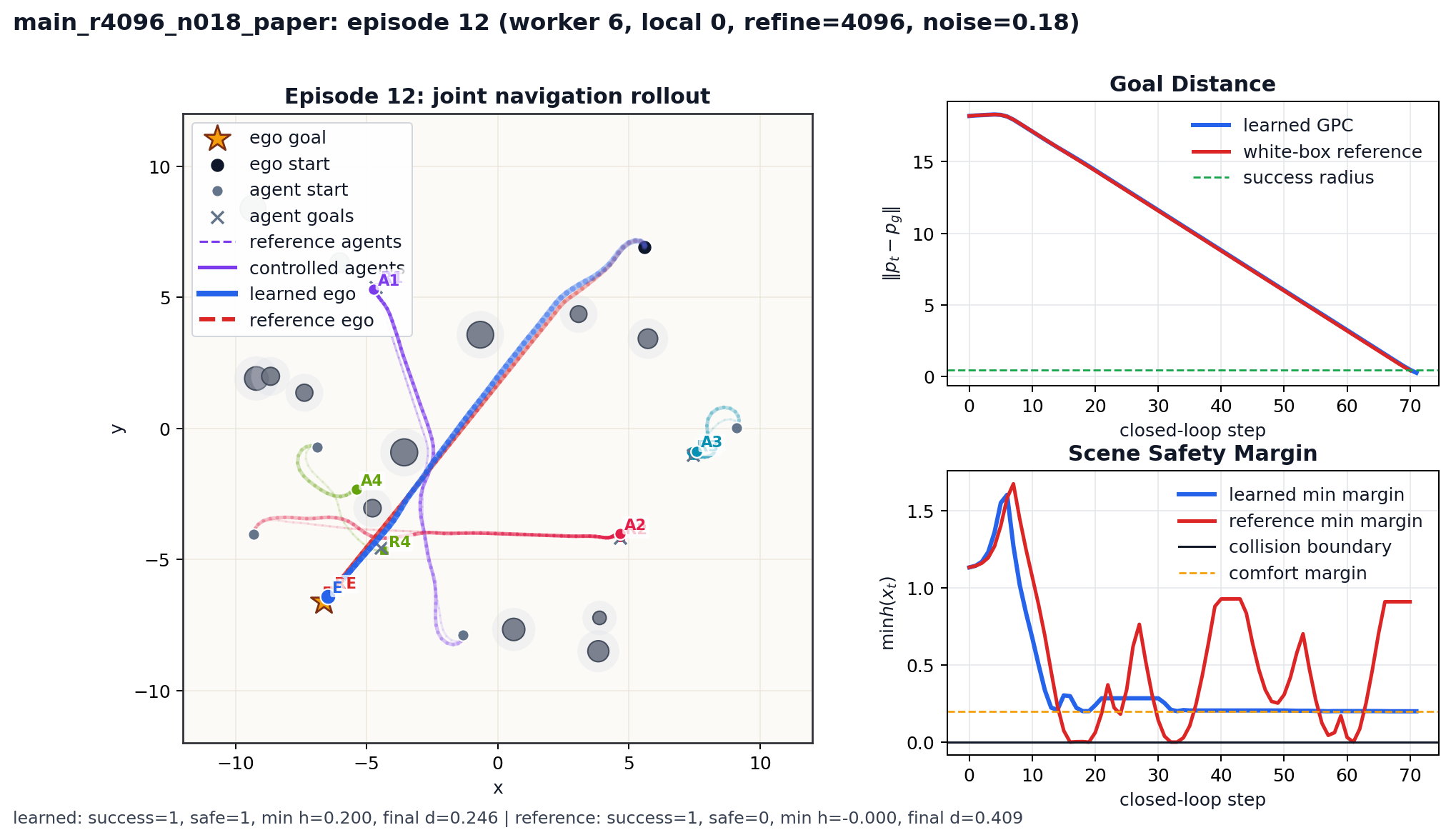}
  \caption{Static obstacles}
\end{subfigure}
\hfill
\begin{subfigure}[t]{0.48\textwidth}
  \centering
  \includegraphics[width=\linewidth]{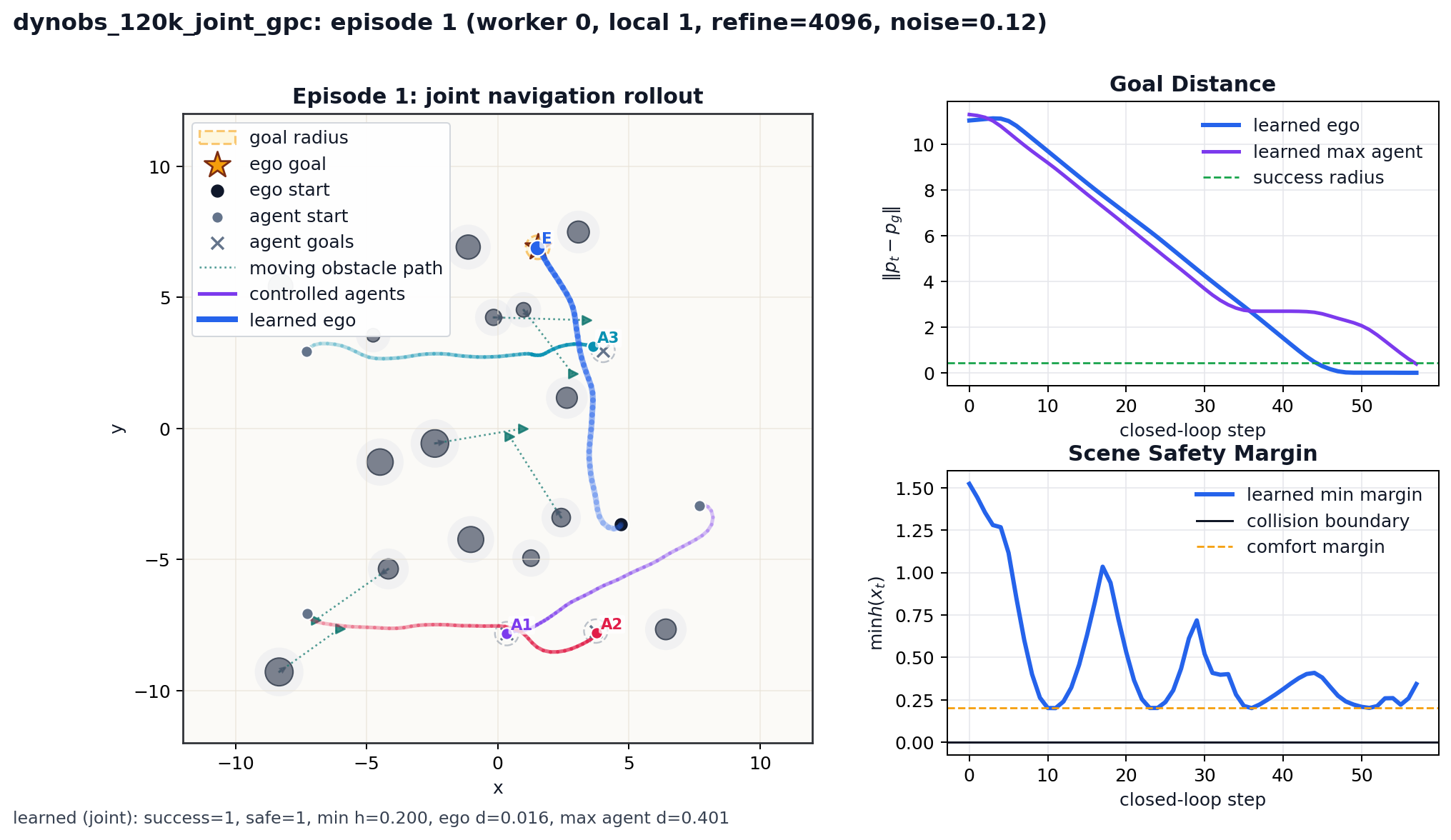}
  \caption{Moving obstacles: episode 1}
\end{subfigure}
\caption{Safe multi-agent navigation.  Each panel shows the joint
trajectory, goal-distance trace, and closed-loop scene-safety margin for
one held-out rollout.  The static case verifies the multi-agent
coordination pattern, while the moving-obstacle case shows that GPC
keeps positive safety margins in a dynamic scene.  Additional
moving-obstacle rollouts are shown in Appendix~\ref{sec:exp:multiagent-navigation}.
The finite-budget reference is safe in only $35\%$ of moving episodes,
whereas GPC is scene-collision-free in $100\%$ and reaches $99\%$
reachable-conditioned safe success.}
\label{fig:iclr-multiagent}
\end{figure}

\begin{figure}[!htpb]
\centering
\begin{subfigure}[t]{0.53\textwidth}
  \centering
  \includegraphics[height=4.0cm]{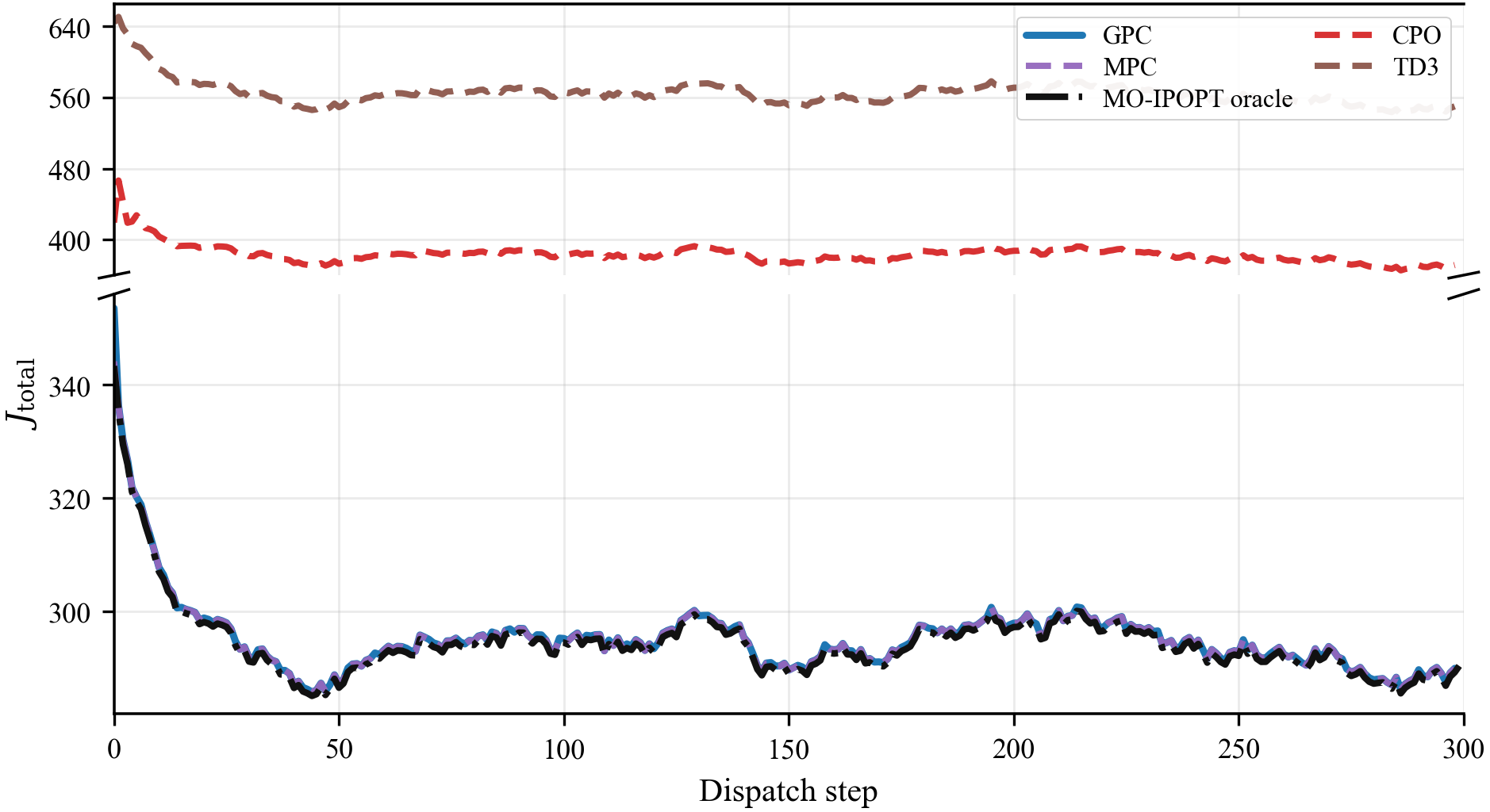}
  \caption{Uncertain OPF: total objective}
\end{subfigure}
\hfill
\begin{subfigure}[t]{0.42\textwidth}
  \centering
  \includegraphics[width=\linewidth]{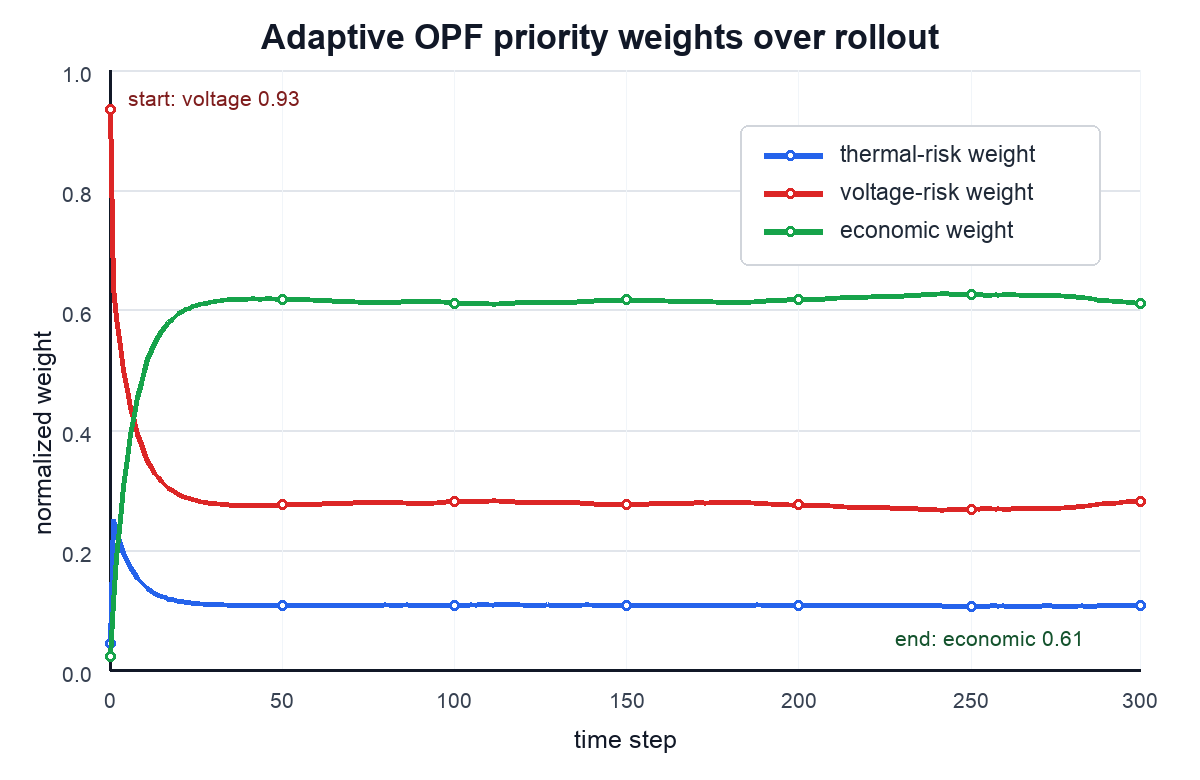}
  \caption{Uncertain OPF: adaptive weights}
\end{subfigure}
\caption{IEEE 30-bus OPF under branch-admittance uncertainty.  GPC
tracks the offline nonlinear oracle while avoiding repeated online
MO-IPOPT solves.  The adaptive semantic weights initially prioritize
voltage recovery and then return toward economic operation while
maintaining feasibility.}
\label{fig:iclr-opf}
\end{figure}

\paragraph{IEEE 30-bus OPF under branch-admittance uncertainty.}
The main OPF result evaluates the controller when each trajectory has a
fixed branch-admittance perturbation, so the online dispatch problem is
not identical to the nominal model used by a standard controller.  GPC
obtains 100\% empirical feasibility over the 300-step rollout and
0.25\% gap to the MO-IPOPT Pareto oracle at 12.6 ms per decision.  MPC
remains near-oracle in objective value but is about $97.5\times$ slower
and has one infeasible step.  TD3 and CPO, trained on the nominal
environment, produce no feasible dispatches under this protocol.  The
nominal OPF comparison is reported in Appendix~\ref{sec:exp:nobattery};
the uncertainty setting is the stronger test because the controller must
use the learned Pareto geometry while the physical network drifts from
the model used at training time.

\section{Conclusion}

GPC uses known dynamics and constraints to construct an offline Pareto
control geometry, then navigates this geometry online using a
state-dependent semantic coordinate.  The method is not a replacement
for physics models; it is a way to amortize constrained multi-objective
solves into a feasible real-time controller.  The formal margin and
non-accumulation results state when this amortized controller remains in
a feasible tube around the learned Pareto map.  The main navigation and
uncertainty-aware OPF results show that this geometry can improve
runtime reliability when finite-budget online optimization or model-free
RL fails to preserve safety; the analytical and nominal OPF appendix
studies verify the same snap, slide, semantic-priority, and
geometry-aware integration mechanisms in simpler oracle-comparison
settings.  Scaling GPC to larger systems will require many high-quality
Pareto fronts across operating conditions, making parallel Pareto-front
generation and uncertainty-aware training the natural next steps.

\FloatBarrier

\appendix
\section{Appendix Overview}
\label{app:extended-draft}
The appendix contains supplementary material that supports the main text without restarting the paper: extended Pareto-map construction details, the full online navigation update, benchmark definitions, appendix-only simulation cases, proof details, and the action-decoder training pipeline.

\section{Supplementary Offline Pareto Map Details}
\label{sec:offline}

The offline stage constructs a continuous, differentiable map 
$\mathcal{M}^* \subset \mathcal{Z}$ that encodes the supported family of
Pareto-optimal solutions as a low-dimensional latent manifold endowed 
with Lie group structure. The construction proceeds in three steps: 
dataset generation via multi-objective optimization, semantic coordinate 
labeling, and manifold learning via a structured loss system.

\subsection{Dataset Construction}
The offline dataset is constructed by sampling diverse operating  conditions from the parameter space and solving the multi-objective  optimization problem to Pareto-optimality for each scenario.
\subsubsection{Input-Space Sampling}
Let $\mathcal{P} \subseteq \mathbb{R}^{n_p}$ denote the disturbance or 
parameter space. We generate $N$ scenarios 
$\{\mathbf{p}_k\}_{k=1}^N \subset \mathcal{P}$ using Latin Hypercube 
Sampling (LHS), which partitions each dimension of $\mathcal{P}$ into $N$ 
equal-probability intervals and draws exactly one sample per interval. 
LHS provides uniform marginal stratification of $\mathcal{P}$ with 
$O(N)$ samples, avoiding the clustering artifacts of Monte Carlo sampling 
that would leave regions of the operational envelope unrepresented on 
$\mathcal{M}^*$.

\subsubsection{Pareto-Optimal Solution Generation}
For each scenario $\mathbf{p}_k$, the system is simulated or evaluated to 
obtain the corresponding true system state $\mathbf{s}_k \in \mathcal{S}$ 
and observation $\mathbf{x}_k = h(\mathbf{s}_k) \in \mathcal{X}$. 
We then solve the multi-objective optimization problem via weighted
scalarization:
\begin{equation}
  \mathbf{u}_k^*(\mathbf{w}) = \arg\min_{\mathbf{u} \in
  \mathcal{U}_{\mathbf{s}_k}^{\mathrm{feas}}} \sum_{i=1}^{M} w_i\,
  J_i(\mathbf{s}_k, \mathbf{u}), \qquad \mathbf{w} \in \Delta^{M-1}
\end{equation}
sweeping $\mathbf{w}$ uniformly across $\Delta^{M-1}$ to obtain diverse
Pareto-optimal solutions.
\begin{remark}[Supported Pareto Points]
Weighted scalarization recovers \emph{supported} Pareto-optimal solutions,
i.e.\ points on the convex hull of the Pareto front in objective space.
For problems with a nonconvex Pareto front, interior (non-supported) points
are not recoverable by any positive weight vector $\mathbf{w}\in\Delta^{M-1}$.
In such cases the dataset $\mathcal{D}$ covers only the supported subset of
$\mathcal{M}^*$; claims about the completeness of the learned Pareto manifold
should be understood as holding over this supported subset.
For the OPF benchmarks in this work the Pareto front is empirically
well-approximated by its supported points, so the distinction does not affect
practical performance; users with strongly nonconvex objectives should supplement
with $\varepsilon$-constraint or Chebyshev scalarization sweeps.
\end{remark} The semantic priority coordinate 
$\boldsymbol{\sigma}_k \in \Delta^{M-1}$ is computed from $\mathbf{s}_k$ 
via the violation indicator framework described in 
Section~\ref{sec:offline:semantic}. The dataset is:
\begin{equation}\label{eq:dataset}
  \mathcal{D} = \bigl\{(\mathbf{x}_k,\, \mathbf{u}_k^*,\, 
  \mathbf{s}_k,\, \boldsymbol{\sigma}_k,\, \mathbf{J}_k)\bigr\}_{k=1}^N
\end{equation}
where $\mathbf{s}_k$ is the true system state available only during 
offline data generation, $\mathbf{x}_k$ is the observation available 
at deployment, $\mathbf{u}_k^*$ is the Pareto-optimal action, 
$\boldsymbol{\sigma}_k$ is the semantic priority coordinate, and 
$\mathbf{J}_k = \mathbf{J}(\mathbf{s}_k, \mathbf{u}_k^*)$ is the 
objective vector.

\begin{remark}[Optional Next-Step Semantic Label]
When temporal coupling constraints are present, the dataset may be 
augmented to include the next-step true system state $\mathbf{s}_{k+1}$, 
from which the next-step semantic priority coordinate 
$\boldsymbol{\sigma}_{k+1}$ is computed offline via the same violation 
indicator framework. In this case the dataset expands to:
\begin{equation}
  \mathcal{D} = \bigl\{(\mathbf{x}_k,\, \mathbf{u}_k^*,\, 
  \mathbf{s}_k,\, \boldsymbol{\sigma}_k,\, \mathbf{J}_k,\,
  \mathbf{s}_{k+1})\bigr\}_{k=1}^N
\end{equation}
This captures temporal priority evolution when future safety or feasibility 
depends on current actions. In the absence of temporal coupling, 
$\mathbf{s}_{k+1}$ is redundant and may be omitted, reducing the 
dataset to \eqref{eq:dataset}.
\end{remark}

\begin{remark}[Pareto by Construction]
Every sample in $\mathcal{D}$ is obtained by solving the multi-objective 
problem to Pareto-optimality in the true system state space. The manifold 
$\mathcal{M}^*$ is therefore implicitly defined as the encoder image of 
$\mathcal{D}$, and explicit Pareto-embedding losses or dominance penalties 
are unnecessary for the training samples themselves. This avoids relying
on a soft Pareto penalty during training, although interpolation quality
between samples still depends on the learned manifold approximation.
\end{remark}

\subsection{Semantic Coordinate Generation}
\label{sec:offline:semantic}
Each Pareto-optimal solution is assigned a semantic priority coordinate
$\boldsymbol{\sigma}_k \in \Delta^{M-1}$ encoding the physical urgency 
of the operating condition under the true system state $\mathbf{s}_k$, 
which is available during offline data generation. For each objective $J_i$, 
define a normalized violation indicator 
$\delta_i(\mathbf{s}_k) \in [0,1]$, where $\delta_i = 0$ indicates 
nominal operation and $\delta_i = 1$ indicates the hard constraint boundary 
is reached. The violation indicators are mapped to urgency potentials via 
singular perturbation potential functions:
\begin{equation}
  \phi_i(\mathbf{s}_k) = \exp\!\left(\frac{k_i \cdot 
  \delta_i(\mathbf{s}_k)}{\epsilon_i}\right) - 1, \qquad 
  k_i > 0,\quad \epsilon_i \in (0,1]
\end{equation}
where $k_i$ encodes the intrinsic priority of objective $i$ and 
$\epsilon_i$ controls the steepness of the potential wall near the 
constraint boundary. The semantic coordinate is obtained by normalizing the urgency
potentials with a positive baseline $\rho > 0$:
\begin{equation}
  \sigma_i(\mathbf{s}_k)
  \;=\; \frac{\phi_i(\mathbf{s}_k) + \rho}
             {\displaystyle\sum_{j=1}^M \bigl(\phi_j(\mathbf{s}_k) + \rho\bigr)}
  \label{eq:semantic-coord}
\end{equation}
The baseline $\rho > 0$ prevents division by zero during
nominal operation (all $\delta_i = 0 \Rightarrow$ all $\phi_i = 0$),
and sets the default coordinate to the uniform point
$\boldsymbol{\sigma} = \mathbf{1}/M$ (equal economic priority) when
no constraint is active.
This construction ensures $\boldsymbol{\sigma}_k \in \Delta^{M-1}$ at all
times. As $\delta_i \to 1$, the exponential growth of $\phi_i$ causes
$\sigma_i \to 1$ continuously (since $\phi_i \gg \rho$ near the constraint
boundary), without any threshold or rule-based switching. 
The semantic coordinate thus encodes a smooth, physics-driven priority 
ordering that transitions autonomously from economic operation to emergency 
response as constraint violations intensify.

\begin{remark}[Semantic-to-Geometric Translation]
During training, $\boldsymbol{\sigma}_k$ provides supervision for the 
offline oracle encoder $E_o$, anchoring the geometry of $\mathcal{M}^*$ to the 
physical severity structure of the solution space. At deployment, 
the offline state-objective tuple
$(\mathbf{s}_t,\boldsymbol{\sigma}_t,\mathbf{J}_t)$
is not supplied to $E_o$.  Instead, $E_x$ localizes the observation
$\mathbf{x}_t$ on the learned map, while $\boldsymbol{\sigma}_t$ is
computed from online violation indicators as in
eq.~\eqref{eq:iclr-semantic}.  This enables real-time Pareto-optimal
control without querying the offline oracle encoder.
\end{remark}

\subsection{Manifold Learning via Structured Loss System}
\label{sec:loss}
The encoder-decoder system is trained on the offline dataset 
$\mathcal{D} = \{(\mathbf{x}_k, \mathbf{u}_k^*, \mathbf{s}_k, 
\boldsymbol{\sigma}_k, \mathbf{J}_k)\}_{k=1}^N$ to embed the 
Pareto-optimal solutions onto a geometrically structured latent manifold 
$\mathcal{M}^*$ via the following composite objective:
\begin{equation}
  \mathcal{L}_{\text{total}} = \mathcal{L}_{\text{recon}}^x
  + \mathcal{L}_{\text{action}}
  + \mathcal{L}_{\text{recon}}^o
  + \omega_1\,\mathcal{L}_{\text{consist}}
  + \omega_2\,\mathcal{L}_{\text{local}},  
\end{equation}
Each term addresses a distinct geometric requirement. For each sample 
$(\mathbf{x}_k, \mathbf{u}_k^*, \mathbf{s}_k, \boldsymbol{\sigma}_k, 
\mathbf{J}_k) \in \mathcal{D}$, the observation encoder and offline
oracle encoder produce latent codes:
\begin{equation}
  z_k^x = E_x(\mathbf{x}_k), \qquad 
  z_k^o = E_o(\mathbf{s}_k, \boldsymbol{\sigma}_k, \mathbf{J}_k)
\end{equation}
where $z_k^x$ is available at deployment and $z_k^o$ is the offline
state-objective code available only during training.

\paragraph{State Reconstruction ($\mathcal{L}_{\text{recon}}^x$):}
Forces $z_k^x = E_x(\mathbf{x}_k)$ to encode sufficient information 
from the observation $\mathbf{x}_k$ to reconstruct the true 
system state $\mathbf{s}_k$:
\begin{equation}
  \mathcal{L}_{\text{recon}}^x = \frac{1}{N}\sum_{k=1}^N 
  \|\mathbf{s}_k - D_s(z_k^x)\|^2
\end{equation}

\paragraph{Action Reconstruction ($\mathcal{L}_{\text{action}}$):}
Forces $z_k^x$ to recover the Pareto-optimal control action 
$\mathbf{u}_k^*$ from the observation $\mathbf{x}_k$ alone, 
which is the primary deployment objective. Applied to $z_k^x$ only:
\begin{equation}
  \mathcal{L}_{\text{action}} = \frac{1}{N}\sum_{k=1}^N 
  \|\mathbf{u}_k^* - D_u(z_k^x)\|^2
\end{equation}
Applying this loss to $z_k^o = E_o(\mathbf{s}_k, \boldsymbol{\sigma}_k, 
\mathbf{J}_k)$ would be trivial since $\mathbf{u}_k^*$ is already 
recoverable from the inputs to $E_o$ and would provide no useful 
gradient signal.

\paragraph{Omniscient State Reconstruction ($\mathcal{L}_{\text{recon}}^o$):}
Forces $z_k^o$ to encode a geometrically faithful representation of the 
full solution tuple $(\mathbf{s}_k, \boldsymbol{\sigma}_k, \mathbf{J}_k)$, 
providing a well-structured oracle signal for the consistency loss below:
\begin{equation}
  \mathcal{L}_{\text{recon}}^o = \frac{1}{N}\sum_{k=1}^N 
  \|\mathbf{s}_k - D_s(z_k^o)\|^2
\end{equation}
Both paths share the decoder $D_s$, ensuring that $z_k^x$ and $z_k^o$ 
occupy the same geometric space and are therefore directly comparable.

\paragraph{Cross-Modal Consistency ($\mathcal{L}_{\text{consist}}$):}
Enables $E_o$ to be discarded entirely at deployment by forcing $E_x$ 
and $E_o$ to produce coincident latent codes for the same sample 
$(\mathbf{x}_k, \mathbf{s}_k, \boldsymbol{\sigma}_k, \mathbf{J}_k) 
\in \mathcal{D}$:
\begin{equation}
  \mathcal{L}_{\text{consist}} = \frac{1}{N}\sum_{k=1}^N
  \left[\left(1 - \frac{\langle z_k^x, z_k^o\rangle}
  {\|z_k^x\|_2\|z_k^o\|_2}\right)
  + \beta\,\|z_k^x - z_k^o\|^2_2\right],  
\end{equation}
The cosine term enforces directional alignment in latent space; the 
squared Euclidean term enforces magnitude alignment. Together they 
align the deployment encoder $E_x$ with the offline oracle encoder $E_o$.

\paragraph{Local Structure Preservation ($\mathcal{L}_{\text{local}}$):}
This is the geometrically most critical term. It enforces a
continuous embedding from the parameter space $\mathcal{P}$ into the latent
manifold by requiring that physically neighboring scenarios
$\mathbf{p}_k, \mathbf{p}_j \in \mathcal{P}$ map to neighboring points
on $\mathcal{M}^*$:
\begin{equation}
  \mathcal{L}_{\text{local}} = \frac{1}{N}\sum_{k=1}^N
  \sum_{j\in\mathcal{N}_\kappa(k)}
  \exp\!\left(-\frac{\|\mathbf{p}_k - \mathbf{p}_j\|^2}{\sigma_p^2}\right)
  \cdot \|z_k^x - z_j^x\|^2
\end{equation}
where $\mathcal{N}_\kappa(k)$ is the $\kappa$-nearest neighborhood of 
scenario $k$ in the parameter space $\mathcal{P}$. The Gaussian kernel 
weight decays with physical distance, concentrating the topological 
constraint on the most immediate neighbors. Through 
$\mathcal{L}_{\text{local}}$, the temporal continuity requirement 
$\|\mathbf{u}_{t+1} - \mathbf{u}_t\| \le \delta$ is transformed into 
a static geometric neighborhood relation on $\mathcal{M}^*$, enabling 
the online navigator to satisfy ramp-rate constraints through latent 
geometry rather than explicit enforcement.

\paragraph{Rollout-aware action decoder refinement.}
After the manifold training above, and before the online stage is
deployed, we run a final action-decoder refinement step designed to
match the latent distribution encountered during rollout. In this
stage the manifold modules $(E_x,E_o,D_s)$ and the Lie-group embedding
are frozen, while only $D_u$ is updated using horizon-indexed rollout
tuples, trajectory-risk weights, intermediate rollout latents, and the
same Lie-local correction used by the online integrator.  This
rollout-aware training reduces the gap between static supervised
latents and post-integration latents generated by
Algorithm~\ref{alg:online}. The full objective and its relation to the
closed-loop controller are given in Appendix~\ref{app:du_training}.

\section{Supplementary Online Pareto Map Navigation}
\label{sec:online}

At deployment, the offline Pareto map $\mathcal{M}^*$ is fixed and the
online navigator operates in real time. Each control cycle consists
of five steps: manifold localization from the observation,
autonomous semantic target generation from real-time violation
indicators, nominal Riemannian gradient flow via a single autograd pass,
Lie-local residual correction in the decoder-pullback metric basis,
and Lie group integration with optional retraction and output clamping.
The entire cycle is deterministic and non-iterative, and in the
experiments executes in milliseconds without invoking an online
optimization solver.

\subsection{State Estimation and Manifold Localization}
At each time step $t$, the physical encoder produces an initial latent 
estimate:
\begin{equation}
  z_{\text{obs},t} = E_x(\mathbf{x}_t)
\end{equation}
This estimate may be corrupted by sensor noise, partial observability, 
or out-of-distribution inputs. Committing to a nearest-point projection 
onto $\mathcal{M}^*$ is insufficient: as $z_{\text{obs},t}$ drifts 
across a Voronoi boundary, the projection jumps discontinuously, 
producing an abrupt control transient. Instead, we admit only those 
manifold points that are consistent with the current observation within 
the sensor noise floor, defining the \emph{measurement-consistent 
manifold subset}:
\begin{equation}
  \mathcal{M}_t \;\triangleq\; \bigl\{z\in\mathcal{M}^* \;\big|\; 
  \|\mathbf{x}_t - D_s(z)\|^2 \le \tau_t \bigr\}, \qquad 
  \tau_t = \hat{\sigma}_t^2 + \tau_{\text{geom}}
\end{equation}
where $\hat{\sigma}_t^2$ is an online estimate of observation noise 
variance and $\tau_{\text{geom}}$ is a geometric tolerance calibrated 
to the generalization error of $\mathcal{L}_{\text{local}}$: 
specifically, $\tau_{\text{geom}}$ is set to match the empirical 
reconstruction residual $\|\mathbf{x}_k - D_s(z_k^x)\|^2$ averaged 
over a held-out validation partition of $\mathcal{D}$ after offline 
training. This calibration is intended to keep $\mathcal{M}_t$ nonempty 
for observations drawn from the training distribution, while remaining 
tight enough to exclude manifold regions geometrically inconsistent 
with $\mathbf{x}_t$. The threshold $\tau_t$ thus carries a principled 
two-part interpretation: $\hat{\sigma}_t^2$ accounts for sensor noise 
at deployment, and $\tau_{\text{geom}}$ accounts for the residual 
manifold approximation error that $\mathcal{L}_{\text{local}}$ did 
not fully collapse during training. The localized state is obtained 
as:
\begin{equation}
  z_t = (1-\alpha)\,z_{t-1} + \alpha\,\arg\min_{z\in\mathcal{M}_t}
  \|z - z_{\text{obs},t}\|^2
\end{equation}
with first-order exponential smoothing ($\alpha \in (0,1]$) that 
suppresses high-frequency sensor noise while preserving responsiveness 
to genuine state changes.

\subsection{Energy Function and Nominal Flow}
\label{sec:online:energy}
The navigator's direction of motion at each time $t$ is determined 
by a composite \emph{energy function} defined on 
$\mathcal{M}^*$:
\begin{equation}
  V(z_t,\mathbf{x}_t) \;=\; \frac{1}{\epsilon}\,\Phi_{\text{normal}}(z_t,\mathbf{x}_t) 
  \;+\; \Phi_{\text{tangent}}(z_t)
  \label{eq:total-energy}
\end{equation}
No runtime operator-supplied weight vector, reference trajectory, or 
mode-switching logic enters this expression: both terms are built 
from the same physically grounded urgency potentials used to 
supervise the manifold during offline training, and the only 
runtime inputs are the observation $\mathbf{x}_t$ and the 
current latent state $z_t$. Descending $V$ produces the nominal 
vector field:
\begin{equation}
  \mathbf{F}(z_t, \mathbf{x}_t) \;\triangleq\; -\nabla_z V(z_t,\mathbf{x}_t)
\end{equation}
computed in a single autograd pass through the frozen decoders $D_u$ 
and $D_s$. The remainder of this subsection specifies each component 
of $V$, then shows how the two terms interact to induce a 
two-timescale response.

\paragraph{Normal potential: manifold restoration.}
The normal potential measures the reconstruction residual between 
the current observation and the decoded manifold point:
\begin{equation}
  \Phi_{\text{normal}}(z_t, \mathbf{x}_t) = 
  \|\mathbf{x}_t - D_s(z_t)\|^2
\end{equation}
When $z_t \in \mathcal{M}^*$, $\Phi_{\text{normal}} \approx 0$. Any 
off-manifold drift raises this term, generating a restoring gradient 
that drives $z_t$ back onto $\mathcal{M}^*$.

\paragraph{Tangential potential: autonomous multi-objective scalarization.}
\label{sec:online:semantic}
The tangential potential is the multi-objective cost scalarized by 
a priority coordinate $\boldsymbol{\sigma}_t \in \Delta^{M-1}$ 
that is computed from the decoded state (not supplied as an operator
hyperparameter):
\begin{equation}
  \Phi_{\text{tangent}}(z_t) = \sum_{i=1}^{M} 
  \sigma_{t,i}\cdot J_i\!\bigl(D_s(z_t),\,D_u(z_t)\bigr)
\end{equation}
The priority coordinate is generated from real-time violation
indicators $\delta_i(\hat{\mathbf{s}}_t) \in [0,1]$ evaluated on the
reconstructed state $\hat{\mathbf{s}}_t = D_s(z_t)$ (using the same
definitions as Section~\ref{sec:offline:semantic}), passed through
singular-perturbation urgency potentials, and normalized via the
same baseline-shifted formula as the offline stage
(equation~\eqref{eq:semantic-coord}):
\begin{equation}
  \phi_i(\hat{\mathbf{s}}_t) = \exp\!\left(\frac{k_i\,
  \delta_i(\hat{\mathbf{s}}_t)}{\epsilon_i}\right) - 1, \qquad
  \sigma_{t,i}
  = \frac{\phi_i(\hat{\mathbf{s}}_t) + \rho}
         {\displaystyle\sum_{j=1}^M\bigl(\phi_j(\hat{\mathbf{s}}_t)+\rho\bigr)}
\end{equation}
where $\rho > 0$ is the same baseline as in the offline stage,
ensuring a well-defined coordinate during nominal operation
(all $\delta_i = 0$), and a first-order low-pass filter suppresses
chattering when any $\delta_i$ hovers near an activation threshold. The exponential 
structure of $\phi_i$ produces a smooth, physics-driven priority 
shift: under nominal operation where all $\delta_i$ are small, the 
slow performance component of $\boldsymbol{\sigma}_t$ dominates; as 
any $\delta_i \to 1$, the explosive growth of the corresponding 
$\phi_i$ drives $\sigma_{t,i} \to 1$ continuously, redirecting the 
flow toward constraint restoration without a discrete mode switch. 
This is the online counterpart of the 
offline semantic labeling of Section~\ref{sec:offline:semantic}: 
the same potential that shaped the manifold geometry during 
training now shapes the navigation direction at runtime.

\paragraph{Timescale separation from a single energy.}
The scalar $\epsilon > 0$ in \eqref{eq:total-energy} encodes the 
ratio between the fast and slow timescales of the control problem:
\begin{equation}
  \epsilon = \frac{T_{\text{fast}}}{T_{\text{slow}}}
\end{equation}
with $T_{\text{fast}}$ the required constraint-restoration time and 
$T_{\text{slow}}$ the performance-optimization horizon. Because 
$1/\epsilon \gg 1$, the normal potential is a steep well orthogonal 
to $\mathcal{M}^*$, while the tangential potential is a shallow 
landscape along it. Descending $V$ therefore produces two regimes 
from the same gradient:
\begin{itemize}
  \item \textbf{Snap} ($O(\epsilon)$, off-manifold): 
  $(1/\epsilon)\nabla_z\Phi_{\text{normal}}$ dominates and rapidly 
  restores $z_t$ to $\mathcal{M}^*$ before any significant tangential 
  motion occurs.
  \item \textbf{Slide} ($O(1)$, on-manifold): 
  $\Phi_{\text{normal}} \approx 0$ and 
  $\nabla_z\Phi_{\text{tangent}}$ navigates along $\mathcal{M}^*$ 
  toward the state satisfying the current priority $\boldsymbol{\sigma}_t(z_t)$.
\end{itemize}
The two regimes arise from the same energy expression rather than
from an explicit switching condition or controller handoff. The
relative fast/slow scale is still set by the design parameter
$\epsilon$.

\subsection{Geometry-Aware State Evolution}
\label{sec:online:evolution}
Given the nominal field $\mathbf{F}(z_t,\mathbf{x}_t)$, the latent 
state is advanced by a single geometry-aware integrator that combines 
a second-order Runge-Kutta predictor, a curvature-aware residual 
computed in a decoder-induced metric basis, an optional retraction 
onto $\mathcal{M}^*$, and a final output clamp on the decoded action. 
The integrator is deterministic and non-iterative.

The nominal field is evaluated at a midpoint following a standard 
second-order Runge-Kutta scheme (Appendix~\ref{app:rk2}); the resulting 
Euclidean step $\Delta z_{\text{Euc}}$ is then corrected as follows.

\paragraph{Lie-local residual correction.}
The Euclidean predictor treats all latent directions uniformly, 
ignoring the fact that the decoder $D_s$ induces a non-trivial 
Riemannian metric on $\mathcal{Z}$. We correct this with a small 
geometry-aware residual computed in a local metric basis. The 
decoder-pullback metric at $z_t$ is:
\begin{equation}
  G(z_t) = J_{D_s}(z_t)^\top J_{D_s}(z_t) + \lambda_{\mathrm{m}}\,
  \mathbf{I}
\end{equation}
whose eigendecomposition yields principal directions of decoder 
sensitivity. Retaining the top-$k$ eigenvectors $B \in 
\mathbb{R}^{n_z\times k}$ (typically $k=3$) and their scales 
$\mathbf{s} = \sqrt{\mathrm{diag}(\Lambda_{[-k:]})}$ identifies the 
directions along which latent motion produces the largest first-order 
change in the decoded state. Projecting the Euclidean step into this 
basis, partitioning into three-dimensional sub-blocks, and rotating 
each sub-block via an $\mathfrak{so}(3)$ matrix exponential gives a 
curvature-aware residual:
\begin{equation}
  \Delta z_{\text{Lie}} = B\,\bigl[\mathrm{Rot}_{\mathfrak{so}(3)}\!
  \bigl(B^\top\Delta z_{\text{Euc}},\; (B^\top k_2)\oslash\mathbf{s};
  \; s_{\mathrm{L}},\,\Delta t\bigr) - B^\top\Delta z_{\text{Euc}}\bigr]
\end{equation}
where $\oslash$ is elementwise division (yielding scale-invariant 
angular velocities), $s_{\mathrm{L}}$ is a Lie-scale parameter, and 
sub-blocks of size less than two are left unrotated. The residual is 
deliberately small and guarded: it is confined to the top-$k$ 
eigendirections of $G(z_t)$ and added to (rather than replacing) the 
Euclidean step, and a NaN-safe fallback $\Delta z_{\text{Lie}} 
\leftarrow \mathbf{0}$ applies whenever any component is non-finite. 
This design ensures the correction acts only where the decoder 
geometry is informative, while the RK2 step continues to drive the 
bulk of the latent motion.

\paragraph{Lie group integration and retraction.}
The combined update is applied via left translation on the Lie 
group $G$, which preserves manifold membership algebraically by the 
closure axiom, followed by an optional retraction onto $\mathcal{M}^*$:
\begin{equation}
  z_{t+1} = (1-\mu_{\mathrm{R}})\,\bar{z}_{t+1} + \mu_{\mathrm{R}}\,
  \Pi_{\mathcal{M}^*}(\bar{z}_{t+1}), \qquad
  \bar{z}_{t+1} \triangleq z_t \oplus \bigl(\Delta z_{\text{Euc}} + 
  \gamma_{\mathrm{L}}\,\Delta z_{\text{Lie}}\bigr)
  \label{eq:final-update}
\end{equation}
Here $\gamma_{\mathrm{L}} \in [0,1]$ controls the strength of the 
Lie residual and $\mu_{\mathrm{R}} \in [0,1]$ controls the retraction 
mix. Setting $\gamma_{\mathrm{L}} = \mu_{\mathrm{R}} = 0$ recovers 
pure RK2; small $\gamma_{\mathrm{L}} \in [0.05, 0.2]$ adds curvature 
awareness without destabilizing the primary dynamics; 
$\mu_{\mathrm{R}} > 0$ suppresses any residual off-manifold drift. 
The executed control action is decoded from the post-update latent and
clamped to its physical box 
bounds:
\begin{equation}
  \mathbf{u}_{t} = \mathrm{clip}\!\bigl(D_u(z_{t+1}),\;
  \mathbf{u}^{\min},\,\mathbf{u}^{\max}\bigr)
\end{equation}
which enforces the box constraints independently of the learned
decoder accuracy.

\begin{remark}[Feasibility of the Decoded Action]
\label{rem:feasibility}
Because every training sample is generated by solving the
multi-objective problem to feasibility, points on the learned
manifold are trained to decode to feasible actions.
Under the positive feasibility margin, Lipschitz continuity of
$D_u$, and the velocity cap
$V_{\max}\Delta t \le \delta_{\mathrm{feas}}$, Proposition~\ref{prop:feasibility}
gives a sufficient condition for
$\mathbf{f}(\mathbf{u}_t) \le \mathbf{0}$ at each cycle without
runtime constraint solving.
See Proposition~\ref{prop:feasibility} in the appendix for
the formal statement and proof, which covers both convex and
nonconvex objectives.
\end{remark}

\begin{remark}[Additions Fit Within Proposition~\ref{prop:drift}]
The velocity cap enforces $\|\dot{z}\| \le V_{\max}$ directly; the 
Lie residual is bounded by $\|\gamma_{\mathrm{L}}\Delta z_{\text{Lie}}\| 
\le \gamma_{\mathrm{L}}\Delta t\,V_{\max}$ by construction; the 
retraction is a non-expansion on $\mathcal{M}^*$. All three additions 
fit within the drift bound of Proposition~\ref{prop:drift} without 
modifying the non-accumulation argument, and the end-to-end action 
error bound of Theorem~\ref{thm:error} remains valid.
\end{remark}
\subsection{Online Algorithm}

\begin{algorithm}[!htbp]
\caption{Online Pareto Map Navigation}
\label{alg:online}
\small
\begin{algorithmic}[1]
\Require $\mathbf{x}_t$, $z_{t-1}$, $\hat\sigma_t^2$,
  $\tau_{\text{geom}}$, $\Delta t$, $\alpha$, $\epsilon$,
  $V_{\max}$, $\lambda_{\mathrm{m}}$, $(k,\gamma_{\mathrm{L}},
  s_{\mathrm{L}})$, $\mu_{\mathrm{R}}$,
  $(\mathbf{u}^{\min},\mathbf{u}^{\max})$
\Ensure $\mathbf{u}_{t}$, $z_{t+1}$
\State \textbf{// 1: Manifold Localization}
\State $z_{\text{obs},t} \!\leftarrow\! E_x(\mathbf{x}_t)$;\quad
  $\mathcal{M}_t \!\leftarrow\! \{z\!\in\!\mathcal{M}^*\mid
  \|\mathbf{x}_t{-}D_s(z)\|^2 \le \hat\sigma_t^2{+}\tau_{\text{geom}}\}$
\State $z_t \leftarrow (1{-}\alpha)\,z_{t-1} + \alpha\,
  \arg\min_{z\in\mathcal{M}_t}\|z - z_{\text{obs},t}\|^2$
\State \textbf{// 2: Energy Function \& Nominal Field}
  \hfill\Comment{single autograd pass}
\State $\hat{\mathbf{s}}_t \!\leftarrow\! D_s(z_t)$;\quad
  $\sigma_{t,i} \!\leftarrow\!
  (\phi_i{+}\rho)/\textstyle\sum_j(\phi_j{+}\rho)$\;(eq.\,\eqref{eq:semantic-coord});\quad
  $\mathbf{F}(z_t) \!\leftarrow\! -\nabla_z\!\bigl[
  \tfrac{1}{\epsilon}\|\mathbf{x}_t{-}D_s(z_t)\|^2
  {+} \textstyle\sum_i\sigma_{t,i}J_i(D_s,D_u)\bigr]$
\State \textbf{// 3: RK2 Predictor} \hfill\Comment{cap each slope at $V_{\max}$}
\State Compute $k_1,z_{\text{mid}},k_2,\Delta z_{\text{Euc}}$
  via RK2 (App.~\ref{app:rk2}) with $\Pi_{\|\cdot\|\le V_{\max}}$
\State \textbf{// 4: Lie-Local Residual Correction}
\State $G \!\leftarrow\! J_{D_s}^\top J_{D_s}{+}\lambda_{\mathrm{m}}\mathbf{I}$;\quad
  $(B,\mathbf{s})\!\leftarrow\!$ top-$k$ eigenbasis of $G$
\State $\Delta z_{\text{Lie}} \!\leftarrow\! B\bigl[
  \mathrm{Rot}(B^\top\!\Delta z_{\text{Euc}},
  (B^\top k_2){\oslash}\mathbf{s};\,s_{\mathrm{L}},\Delta t)
  - B^\top\!\Delta z_{\text{Euc}}\bigr]$
  \Comment{NaN$\to\mathbf{0}$}
\State \textbf{// 5: Lie Group Integration, Retraction \& Decode}
\State $\bar{z}_{t+1} \!\leftarrow\!
  z_t \oplus (\Delta z_{\text{Euc}}{+}\gamma_{\mathrm{L}}\Delta z_{\text{Lie}})$;\quad
  $z_{t+1} \!\leftarrow\! (1{-}\mu_{\mathrm{R}})\bar{z}_{t+1}
  {+}\mu_{\mathrm{R}}\Pi_{\mathcal{M}^*}(\bar{z}_{t+1})$
\State $\mathbf{u}_{t} \leftarrow
  \mathrm{clip}(D_u(z_{t+1}),\mathbf{u}^{\min},\mathbf{u}^{\max})$;\quad
  \Return $\mathbf{u}_{t},\,z_{t+1}$
\end{algorithmic}
\end{algorithm}

Algorithm~\ref{alg:online} summarizes the complete online navigation 
cycle. Each iteration proceeds through five sequential steps: 
manifold localization from the observation, energy-function 
evaluation with a single shared autograd pass yielding the nominal 
field $\mathbf{F}(z_t)$, an RK2 predictor with radial velocity cap, 
a Lie-local residual correction computed in the top-$k$ 
eigendirections of the decoder-pullback metric, and Lie group 
integration with optional retraction and output clamping. The entire 
cycle is deterministic and non-iterative in its outer loop: two 
autograd passes (for $k_1$ and $k_2$) and one small $k \times k$ 
eigendecomposition (typically $k\le 3$) together constitute the 
dominant computational cost, which is milliseconds in our analytical
 experiments. No online learning, parameter update, 
or iterative solver is invoked. All numerically sensitive operations 
are guarded by NaN-safe fallbacks, the geometric drift introduced by
Lie group integration is non-accumulating under the assumptions of
Proposition~\ref{prop:drift} (asymptotic) and
Proposition~\ref{prop:closed-loop} (uniform, closed-loop),
and feasibility of the decoded action follows from
Remark~\ref{rem:feasibility} and
Proposition~\ref{prop:feasibility} under the stated margin and
localization conditions.

\section{Second-Order Latent Integration via RK2}
\label{app:rk2}

Algorithm~\ref{alg:online} uses a second-order midpoint 
Runge-Kutta predictor to compute the Euclidean latent step before 
the Lie-local residual correction is applied. Let 
$\mathbf{F}(z,\mathbf{x}_t)=-\nabla_z V(z,\mathbf{x}_t)$ denote the 
nominal field for the current observation. The raw first slope is:
\begin{equation}
  \tilde{k}_1 = \mathbf{F}(z_t,\mathbf{x}_t).
\end{equation}
To enforce the bounded-velocity condition used in the drift analysis, 
the slope is radially capped:
\begin{equation}
  k_1 = \Pi_{\|\cdot\|\le V_{\max}}(\tilde{k}_1), \qquad
  \Pi_{\|\cdot\|\le V_{\max}}(v)
  = v\min\!\left\{1,\frac{V_{\max}}{\|v\|+\delta_v}\right\},
  \label{eq:rk2-cap}
\end{equation}
with a small $\delta_v>0$ preventing division by zero. The midpoint 
latent state is:
\begin{equation}
  z_{\text{mid}} = z_t + \frac{1}{2}\Delta t\,k_1.
\end{equation}
The nominal field is then evaluated at this midpoint and capped 
again:
\begin{equation}
  \tilde{k}_2 = \mathbf{F}(z_{\text{mid}},\mathbf{x}_t), \qquad
  k_2 = \Pi_{\|\cdot\|\le V_{\max}}(\tilde{k}_2).
\end{equation}
The Euclidean predictor passed to the geometry-aware residual 
correction is:
\begin{equation}
  \Delta z_{\text{Euc}} = \Delta t\,k_2.
  \label{eq:rk2-step}
\end{equation}

\paragraph{NaN-safe fallback.}
All midpoint computations are guarded. If any component of 
$z_{\text{mid}}$ is non-finite, the algorithm sets 
$z_{\text{mid}}\leftarrow z_t$ before evaluating the second slope. 
If $\tilde{k}_2$ is non-finite, the second slope falls back to the 
already capped first slope, $k_2\leftarrow k_1$. If $k_1$ is itself 
non-finite, the predictor returns $\Delta z_{\text{Euc}}=\mathbf{0}$ 
for that cycle. These fallbacks are conservative: they may temporarily 
reduce the integrator to Euler or to a no-op step, but they prevent 
non-finite decoder or autograd values from entering the Lie update and 
therefore preserve the bounded-displacement invariant 
$\|\Delta z_{\text{Euc}}\|\le V_{\max}\Delta t$.

\paragraph{Accuracy relative to Euler.}
The explicit Euler predictor is:
\begin{equation}
  z_{t+1}^{\text{Euler}} = z_t + \Delta t\,\mathbf{F}(z_t,\mathbf{x}_t).
\end{equation}
For a smooth field, the exact flow satisfies:
\begin{equation}
  z(t+\Delta t) = z_t + \Delta t\,\mathbf{F}(z_t)
  + \frac{\Delta t^2}{2}\,J_{\mathbf{F}}(z_t)\mathbf{F}(z_t)
  + O(\Delta t^3),
\end{equation}
so Euler omits the quadratic curvature term and has local truncation 
error $O(\Delta t^2)$ and global error $O(\Delta t)$. The midpoint 
predictor expands as:
\begin{align}
  \mathbf{F}\!\left(z_t+\frac{1}{2}\Delta t\,\mathbf{F}(z_t)\right)
  &=
  \mathbf{F}(z_t)
  + \frac{\Delta t}{2}\,J_{\mathbf{F}}(z_t)\mathbf{F}(z_t)
  + O(\Delta t^2),\\
  z_t+\Delta t\,k_2
  &=
  z_t + \Delta t\,\mathbf{F}(z_t)
  + \frac{\Delta t^2}{2}\,J_{\mathbf{F}}(z_t)\mathbf{F}(z_t)
  + O(\Delta t^3).
\end{align}
Thus RK2 matches the exact flow through second order, yielding local 
truncation error $O(\Delta t^3)$ and global error $O(\Delta t^2)$ 
whenever the velocity cap is inactive. When the cap is active, the 
same order statement applies to the capped vector field defined by 
\eqref{eq:rk2-cap}; the cap trades formal accuracy for the physical 
requirement that no single control cycle can move farther than 
$V_{\max}\Delta t$ in latent space.

The Lie-local residual correction in Section~\ref{sec:online:evolution} 
does not replace this predictor. It operates only as a small 
decoder-metric correction to $\Delta z_{\text{Euc}}$, so the RK2 step 
continues to provide the primary integration accuracy while the Lie 
term reduces curvature-induced drift on $\mathcal{M}^*$.

\section{Action Decoder Training Pipeline}
\label{app:du_training}

The offline manifold training of Section~\ref{sec:offline} supervises
$D_u$ with a pointwise static loss $\mathcal{L}_{\text{action}}$ at
observed dataset codes.  This is insufficient for online deployment,
where $D_u$ must produce accurate actions at \emph{rollout latents}
generated by the Lie-group integrator—codes that may not coincide with
any training-set point.  Here we describe the dedicated $D_u$ training
procedure.  All manifold modules ($E_x$, $E_o$, $D_s$, group
embedding) remain frozen; only $D_u$ is updated.

\paragraph{Dataset and trajectory-risk weighting.}
Training uses rollout-indexed tuples
$\{(z_{\xi,h}, \mathbf{u}_{\xi,h}^*,
z_{\xi,h+1}, \mathbf{u}_{\xi,h+1}^*)\}$ with
$z_{\xi,h} = E_x(\mathbf{x}_{\xi,h})$ computed from frozen $E_x$.
Here $h$ indexes a domain-dependent rollout horizon; for domains whose
decoder returns a full plan, $\mathbf{u}_{\xi,h}^*$ denotes the
corresponding horizon action extracted from that plan.  Below, $D_u$
denotes this extracted action, i.e., $\Pi_hD_u$ when the decoder output
is a plan.  To focus
learning on dynamically demanding transitions, each tuple receives a
normalized importance weight:
\begin{equation}
  w_{\xi,h} \;=\; \frac{\alpha_u\,\|\Delta\mathbf{u}_{\xi,h}^*\|
  + \alpha_p\,\|\Delta\mathbf{p}_{\xi,h}\|}
  {\mathbb{E}_{\mathcal{B}}\!\bigl[\alpha_u\,\|\Delta\mathbf{u}^*\|
  + \alpha_p\,\|\Delta\mathbf{p}\|\bigr]}
  \label{eq:traj_weight}
\end{equation}
where $\Delta\mathbf{u}_{\xi,h}^* = \mathbf{u}_{\xi,h+1}^* -
\mathbf{u}_{\xi,h}^*$
is the Pareto-optimal action increment, $\Delta\mathbf{p}_{\xi,h}$ is the
concurrent disturbance change, and $(\alpha_u, \alpha_p)$ are
balancing coefficients.  We write the resulting weighted MSE as
$\mathcal{W}(\hat{\mathbf{u}}, \mathbf{u}^*, w) =
\mathbb{E}_{\mathcal{B}}[w_{\xi,h}\,\|\hat{\mathbf{u}}_{\xi,h} - \mathbf{u}_{\xi,h}^*\|^2]$.

\paragraph{Rollout supervision.}
The key objective exposes $D_u$ to latent codes along the simulated
rollout path.  Let
$\hat{z}_{\xi,h+1}^\alpha =
z_{\xi,h} + \alpha(z_{\xi,h+1}-z_{\xi,h})$,
$\alpha \in (0,1]$.  Two terms supervise $D_u$ along this path:
\begin{align}
  \mathcal{L}_{\text{roll}} &\;=\;
  \mathcal{W}\!\bigl(D_u(\hat{z}_{\xi,h+1}^\alpha),\;
  \mathbf{u}_{\xi,h+1}^*,\; w_{\xi,h}\bigr)
  \label{eq:du_roll}\\
  \mathcal{L}_{\text{flow}} &\;=\;
  \mathcal{W}\!\bigl(D_u(\hat{z}_{\xi,h+1}^{\alpha_f}),\;
  \mathbf{u}_{\xi,h}^*
  + \alpha_f\,\Delta\mathbf{u}_{\xi,h}^*,\; w_{\xi,h}\bigr)
  \label{eq:du_flow}
\end{align}
$\mathcal{L}_{\text{roll}}$ trains $D_u$ to predict the next Pareto
action at a rolled-out latent; $\mathcal{L}_{\text{flow}}$ additionally
anchors intermediate latent codes to linearly interpolated targets,
enforcing smooth action variation along the latent path.

\paragraph{Lie-corrected rollout latent.}
Linear interpolation ignores the Riemannian geometry of $\mathcal{M}^*$.
To match the distribution of latents produced by the online integrator
(Algorithm~\ref{alg:online}), we construct a geometry-aware rollout
latent using the same decoder-pullback metric
$G(z_{\xi,h}) = J_{D_s}(z_{\xi,h})^\top J_{D_s}(z_{\xi,h})
+ \lambda_{\mathrm{m}}\mathbf{I}$,
with eigendecomposition $G = U\Lambda U^\top$.  Retaining top-$k$
eigenvectors $B \in \mathbb{R}^{n_z \times k}$ and scales
$\mathbf{s} = \sqrt{\mathrm{diag}(\Lambda_{[-k:]})}$, the Euclidean
step is projected into the local metric frame, rotated via
$\mathfrak{so}(3)$ block-diagonal matrix exponentials, and projected
back:
\begin{align}
  \Delta z_{\text{Euc}} &\;=\;
    \hat{z}_{\xi,h+1}^\alpha - z_{\xi,h} \notag\\
  v_{\text{local}} &\;=\; B^\top \Delta z_{\text{Euc}} \oslash \mathbf{s} \notag\\
  \tilde{v}_{\text{local}} &\;=\;
    \mathrm{Rot}_{\mathfrak{so}(3)}\!\bigl(B^\top\Delta z_{\text{Euc}},\,
    v_{\text{local}};\,s_{\mathrm{L}},\Delta t\bigr) \notag\\
  \Delta z_{\text{Lie}} &\;=\; B\,(\tilde{v}_{\text{local}} -
    B^\top\Delta z_{\text{Euc}}) \notag\\
  \tilde{z}_{\xi,h+1} &\;=\; z_{\xi,h} + \Delta z_{\text{Euc}}
    + \gamma_{\mathrm{L}}\,\Delta z_{\text{Lie}}
  \label{eq:du_lie_latent}
\end{align}
where $\oslash$ denotes elementwise division, $s_{\mathrm{L}}$ is the Lie scale,
and $\gamma_{\mathrm{L}} \in [0,1]$ matches the correction strength in
eq.~\eqref{eq:final-update}.

\paragraph{Training objective.}
The full loss combines pointwise supervision, ramp-rate consistency,
rollout coverage, and Lie-consistent supervision at
$\tilde{z}_{\xi,h+1}$:
\begin{align}
  \mathcal{L}_{D_u}
  &=
  \underbrace{\mathcal{L}_{\text{act},t}
    + \mathcal{L}_{\text{act},t+1}
    + \lambda_\delta\,\mathcal{L}_\delta}_{\text{pointwise}}
  +
  \underbrace{\lambda_{\text{roll}}\,\mathcal{L}_{\text{roll}}
    + \lambda_{\text{flow}}\,\mathcal{L}_{\text{flow}}}_{\text{rollout}}
  \notag\\
  &\quad
  + \lambda_{\text{Lie}}\!\left(
    \mathcal{L}_{\text{Lie-act}}
    + \lambda_{\delta}\,\mathcal{L}_{\text{Lie-}\delta}
    + \mathcal{L}_{\text{Lie-anch}}
    + \mathcal{L}_{\text{Lie-reg}}
    \right).
  \label{eq:du_obj}
\end{align}
where
$\mathcal{L}_{\text{act},t} =
\|D_u(z_{\xi,h})-\mathbf{u}_{\xi,h}^*\|^2$,
$\mathcal{L}_{\text{act},t+1} =
\|D_u(z_{\xi,h+1})-\mathbf{u}_{\xi,h+1}^*\|^2$
provide pointwise supervision at both endpoints;
$\mathcal{L}_\delta =
\|(D_u(z_{\xi,h+1})-D_u(z_{\xi,h}))
-\Delta\mathbf{u}_{\xi,h}^*\|^2$
enforces ramp-rate consistency;
$\mathcal{L}_{\text{Lie-act}} =
\mathcal{W}(D_u(\tilde{z}_{\xi,h+1}),
  \mathbf{u}_{\xi,h+1}^*,w_{\xi,h})$
and
$\mathcal{L}_{\text{Lie-}\delta} =
\mathcal{W}(D_u(\tilde{z}_{\xi,h+1})-D_u(z_{\xi,h}),
  \Delta\mathbf{u}_{\xi,h}^*,w_{\xi,h})$
supervise $D_u$ at the Lie-corrected latent;
$\mathcal{L}_{\text{Lie-anch}} =
\mathcal{W}(\tilde{z}_{\xi,h+1},
  \hat{z}_{\xi,h+1}^{\alpha\,\text{sg}},w_{\xi,h})$
prevents the Lie correction from drifting from the rollout path; and
$\mathcal{L}_{\text{Lie-reg}} = \mathbb{E}_{\mathcal{B}}[w_{\xi,h}\,\|\Delta z_{\text{Lie},\xi,h}\|^2]$
regularizes the magnitude of the curvature correction.

\begin{remark}[Rollout-Latent Coverage for the Closed-Loop Controller]
\label{rem:closed_loop}
Static supervision trains $D_u$ on the offline latent marginal
$\rho_{\mathrm{off}}$, whereas online deployment evaluates it on
the latent marginal $\rho_{\mathrm{on}}$ induced by
Algorithm~\ref{alg:online}.  The rollout losses do not make the
offline procedure fully closed-loop: training uses horizon-indexed
oracle transitions and synthetic rollout latents, rather than
recursively applying the learned controller to generate the next
physical state.  Their role is instead to reduce the distributional gap
between the static training support and the latents encountered by the
online closed-loop navigator.  The one-step quantities below refer to a
single control-cycle transition inside this horizon-indexed construction,
not to a restriction of the navigation problem to a one-step horizon.

Let
$e(z)=\|D_u(z)-\mathbf{u}^*(z)\|$ denote the action error and assume
that $e$ is $L_e$-Lipschitz on the relevant compact latent domain
($L_e \le K_{D_u}+K_{u^*}$ if the oracle action map
$\mathbf{u}^*$ is $K_{u^*}$-Lipschitz).  Then, for the 1-Wasserstein
distance $W_1$,
\begin{equation}
  \mathbb{E}_{z \sim \rho_{\mathrm{on}}}
  \bigl[e(z)\bigr]
  \;\leq\;
  \underbrace{\mathbb{E}_{z \sim \rho_{\mathrm{off}}}
  \bigl[e(z)\bigr]}_{\text{minimized by }\mathcal{L}_{\text{act}}}
  \;+\;
  L_e\,W_1(\rho_{\mathrm{on}},\rho_{\mathrm{off}}).
  \label{eq:dist_shift}
\end{equation}
The second term is the deployment-shift penalty left by purely static
training.

The Lie-corrected rollout objective replaces $\rho_{\mathrm{off}}$
with the auxiliary rollout-latent distribution $\rho_{\mathrm{Lie}}$
generated by $\tilde{z}_{\xi,h+1}$ in~\eqref{eq:du_lie_latent}.  If
this offline simulator is coupled to one control-cycle update of
Algorithm~\ref{alg:online} so that
$\mathbb{E}\|\tilde{z}_{\xi,h+1}-z_{\xi,h+1}^{\mathrm{on}}\|
\le \varepsilon_L$, then
$W_1(\rho_{\mathrm{Lie}},\rho_{\mathrm{on}})\le \varepsilon_L$ and
\begin{equation}
  \mathbb{E}\bigl[\|D_u(z_{\xi,h+1}^{\mathrm{on}}) -
  \mathbf{u}_{\xi,h+1}^*\|\bigr]
  \;\leq\;
  \underbrace{\mathbb{E}\bigl[\|D_u(\tilde{z}_{\xi,h+1}) -
  \mathbf{u}_{\xi,h+1}^*\|\bigr]}_{\text{minimized by }\mathcal{L}_{\text{Lie-act}}}
  \;+\; K_{D_u}\,\varepsilon_L.
  \label{eq:lie_rollout_bound}
\end{equation}
(The coefficient is $K_{D_u}$, not $L_e$: since the target
$\mathbf{u}_{\xi,h+1}^*$ is
fixed by the training tuple, the gap comes only from the Lipschitz continuity
of $D_u$ via the triangle inequality, without involving the oracle map
$\mathbf{u}^*$.)
Thus the method is \emph{closed-loop aware}, but not closed-loop
equivalent: it reduces the mismatch between the supervised latent
locations and the post-integration rollout latents encountered by
the online controller.

This gap should not be confused with an accumulating closed-loop
tracking error.  Proposition~\ref{prop:closed-loop} already gives the
actual online controller the uniform bound
\begin{equation}
  \sup_{t\ge 0}\|\mathbf{u}_t-\mathbf{u}_t^*\|
  \;\le\;
  K_{D_u}\!\left(\delta_{\mathrm{loc}}
    + \tfrac{1}{2}\kappa_{\max}V_{\max}^2\Delta t^2\right)
  + \delta_{\mathrm{dec}},
  \label{eq:du_training_closed_loop_bound}
\end{equation}
which is independent of the rollout horizon because Step~1
re-anchors the latent state at every cycle.  Under the same
per-cycle re-anchoring, if the one-step Lie surrogate satisfies
$\sup_t W_1(\rho_{\mathrm{Lie},t},\rho_{\mathrm{on},t})
\le\varepsilon_L$, then the training-to-deployment supervision gap is
also uniform:
\begin{equation}
  \sup_{t\ge0}\left|
  \mathbb{E}_{z\sim\rho_{\mathrm{on},t}} e(z)
  - \mathbb{E}_{z\sim\rho_{\mathrm{Lie},t}} e(z)
  \right|
  \;\le\; L_e\,\varepsilon_L .
  \label{eq:closed_loop_gap}
\end{equation}
Hence rollout training does not introduce a separate horizon-growing
closed-loop error term; it controls how well the offline
supervision points cover the rollout latents produced by the already
non-accumulating closed-loop algorithm.
\end{remark}

\section{Supplementary Benchmark Definitions}
\label{sec:validation}

\subsection{Analytical Dynamic Navigation with Convex Objectives}
\label{sec:analytical-navigation}

Before instantiating GPC on the full multi-objective optimal 
power flow problem of Section~\ref{sec:opf}, we define analytical 
navigation benchmarks where the Pareto manifold, semantic coordinate 
evolution, and latent trajectories can be visualized directly. These 
controlled settings isolate the mechanisms of 
Algorithm~\ref{alg:online}; the corresponding simulation results and 
ablations are reported in Section~\ref{sec:experiments}.

\textbf{Problem setup.}
We use a constrained dynamic navigation problem that has the same 
structure as a standard safe-RL benchmark but remains analytically 
visualizable. The physical state is 
$\mathbf{s}_t=(\mathbf{q}_t,\mathbf{v}_t)\in\mathbb{R}^4$, where 
$\mathbf{q}_t\in\mathbb{R}^2$ is position and 
$\mathbf{v}_t\in\mathbb{R}^2$ is velocity. The control action 
$\mathbf{u}_t\in\mathbb{R}^2$ is acceleration, and the system obeys 
double-integrator dynamics:
\begin{align}
  \mathbf{q}_{t+1} &= \mathbf{q}_t + \Delta t\,\mathbf{v}_t
  + \tfrac{1}{2}\Delta t^2\,\mathbf{u}_t, \\
  \mathbf{v}_{t+1} &= \mathbf{v}_t + \Delta t\,\mathbf{u}_t .
  \label{eq:analytical-dynamics}
\end{align}
Thus the current decision cannot instantaneously choose a point on 
the Pareto front; it must act through the system dynamics, as in an 
RL control environment.

The two objectives are evaluated on the dynamically propagated next 
state:
\begin{align}
  J_1(\mathbf{s}_t,\mathbf{u}_t,p_t)
  &= \|\mathbf{q}_{t+1}-\mathbf{q}_{\mathrm{safe}}(p_t)\|^2,\\
  J_2(\mathbf{s}_t,\mathbf{u}_t)
  &= \|\mathbf{q}_{t+1}-\mathbf{q}_{\mathrm{goal}}\|^2
  + \beta_{\mathrm{u}}\|\mathbf{u}_t\|^2.
  \label{eq:analytical-objectives}
\end{align}
Here $J_1$ is the safety objective, pulling the system toward a 
load-dependent recovery location $\mathbf{q}_{\mathrm{safe}}(p_t)$, 
while $J_2$ is the performance objective, pulling the system toward 
the mission goal with small control effort.

The scalar operating condition $p_t$ shifts a moving elliptical 
keep-out region, analogous to a load-driven security boundary in OPF. 
Let $\mathbf{c}(p_t)=\mathbf{c}_0+\rho p_t$ be its center. The 
nonconvex state constraint requires the next position to remain 
outside the ellipse:
\begin{equation}
  g_{\mathrm{obs}}(\mathbf{q}_{t+1},p_t)
  =
  1 -
  \frac{(q_{1,t+1}-c_1(p_t))^2}{a^2}
  -
  \frac{(q_{2,t+1}-c_2(p_t))^2}{b^2}
  \le 0.
  \label{eq:analytical-obstacle}
\end{equation}
The dynamic constraints are the input bound and a ramp-rate-like slew 
constraint:
\begin{align}
  \|\mathbf{u}_t\| &\le u_{\max}, \qquad
  \|\mathbf{u}_t-\mathbf{u}_{t-1}\| \le r_{\max}.
  \label{eq:analytical-dynamic-constraints}
\end{align}
The moving obstacle creates a curved, nonconvex, time-varying 
feasible region, while the slew constraint introduces temporal 
coupling between consecutive actions. This is the essential dynamic 
constraint structure that model-free constrained RL can represent, 
but GPC solves by navigating the precomputed Pareto manifold rather 
than by learning a policy through rollout.

For each context 
$\xi_t=(\mathbf{s}_t,\mathbf{u}_{t-1},p_t)$, the dynamically feasible 
Pareto set is obtained by sweeping $\mathbf{w}\in\Delta^1$ in:
\begin{equation}
  \mathbf{u}^*(\mathbf{w};\xi_t)
  =
  \arg\min_{\mathbf{u}_t}
  \; w_1J_1(\mathbf{s}_t,\mathbf{u}_t,p_t)
  + w_2J_2(\mathbf{s}_t,\mathbf{u}_t)
  \quad
  \text{s.t.}\quad
  \eqref{eq:analytical-dynamics}-\eqref{eq:analytical-dynamic-constraints}.
  \label{eq:analytical-dynamic-pareto}
\end{equation}
These solutions define a low-dimensional Pareto manifold 
$\mathcal{M}^*$ over dynamic operating contexts rather than a static 
curve in action space. During nominal operation, 
$\boldsymbol{\sigma}_t$ is performance dominated and GPC slides along 
the dynamically feasible arc toward $\mathbf{q}_{\mathrm{goal}}$. 
During a constraint event, the moving ellipse intersects the predicted 
trajectory; the normal potential first snaps the latent state back to 
the dynamically feasible manifold, and the tangential potential then 
slides along the manifold toward the safety-dominated region. A 
fixed-weight baseline is trapped by its constant scalarization and 
either reacts late or violates the obstacle or slew constraint.

\subsection{Analytical Dynamic Navigation with Nonconvex Objective}
\label{sec:analytical-nonconvex}

The convex-objective benchmark above tests dynamic constraints and 
semantic priority adaptation while keeping the scalarized objective 
landscape simple. To test navigation over a nonconvex objective 
landscape, we keep the same double-integrator dynamics, moving 
obstacle, input bound, and slew constraint, but replace the 
performance objective by a smooth multi-basin objective:
\begin{equation}
  J_2^{\mathrm{nc}}(\mathbf{s}_t,\mathbf{u}_t)
  =
  \|\mathbf{q}_{t+1}-\mathbf{q}_{\mathrm{goal}}\|^2
  + \beta_{\mathrm{u}}\|\mathbf{u}_t\|^2
  + \lambda_{\mathrm{nc}}\sum_{\ell=1}^{2}
  \bigl[1-\cos\bigl(\omega(q_{\ell,t+1}
  -q_{\mathrm{goal},\ell})\bigr)\bigr].
  \label{eq:analytical-nonconvex-objective}
\end{equation}
The sinusoidal term introduces local basins around the nominal goal 
while preserving differentiability. The corresponding Pareto set is:
\begin{equation}
  \mathbf{u}_{\mathrm{nc}}^*(\mathbf{w};\xi_t)
  =
  \arg\min_{\mathbf{u}_t}
  \; w_1J_1(\mathbf{s}_t,\mathbf{u}_t,p_t)
  + w_2J_2^{\mathrm{nc}}(\mathbf{s}_t,\mathbf{u}_t)
  \quad
  \text{s.t.}\quad
  \eqref{eq:analytical-dynamics}-\eqref{eq:analytical-dynamic-constraints}.
  \label{eq:analytical-nonconvex-pareto}
\end{equation}
This benchmark is still low-dimensional enough to visualize, but it 
requires the learned Pareto map to encode a genuinely nonconvex 
trade-off surface. In a constrained RL formulation the same problem 
can be represented by a reward and safety cost; in GPC the nonconvex 
Pareto family is solved offline and then traversed online by the 
same geometric navigator.

\subsection{Safe Multi-Agent Navigation with Moving Obstacles}
\label{sec:multiagent-navigation}

The analytical benchmarks above isolate the geometry in a two-dimensional
state space. To test whether the same construction scales to a more
demanding robotics setting, we also instantiate GPC on fixed-horizon
safe multi-agent navigation. The controlled system contains one ego
robot and up to five other disc robots in a bounded planar workspace.
Each robot has state $x_t^i=(p_t^i,\theta_t^i,v_t^i)$, control
$u_t^i=(a_t^i,\omega_t^i)$, radius $r_i$, and a goal $g^i$. The scene
contains up to 15 circular obstacles. In the static benchmark the
obstacles are fixed. In the dynamic benchmark obstacle $j$ has state
$o_t^j=(c_t^j,r_j,v_{o,t}^j)$ and moves either in a random direction or
toward the ego-goal corridor.

At decision time $t$, the white-box optimizer and the learned GPC
controller solve the same receding-horizon problem over the joint plan
$\mathbf{U}_{t:t+H-1}=\{u_{t+k}^i:k=0,\ldots,H-1,\ i=0,\ldots,N\}$.
Each active robot follows the discrete unicycle dynamics
\begin{equation}
\begin{aligned}
  \theta_{t+k+1}^i &= \theta_{t+k}^i+\Delta t\,\omega_{t+k}^i,\\
  v_{t+k+1}^i &=
  \mathrm{clip}\!\left(v_{t+k}^i+\Delta t\,a_{t+k}^i,\,
  v_{\min},v_{\max}\right),\\
  p_{t+k+1}^i &= p_{t+k}^i
  +\Delta t\,\bar v_{t+k}^i
  \begin{bmatrix}
  \cos \bar\theta_{t+k}^i\\
  \sin \bar\theta_{t+k}^i
  \end{bmatrix},
\end{aligned}
\label{eq:multiagent-unicycle}
\end{equation}
where $\bar \theta_{t+k}^i=\theta_{t+k}^i+\frac{1}{2}\Delta
t\,\omega_{t+k}^i$ and
$\bar v_{t+k}^i=(v_{t+k}^i+v_{t+k+1}^i)/2$. Controls satisfy
\begin{equation}
  a_{\min}\le a_{t+k}^i\le a_{\max},\qquad
  |\omega_{t+k}^i|\le \omega_{\max},\qquad
  \|u_{t+k}^i-u_{t+k-1}^i\|_{\infty}\le \Delta u_{\max}.
  \label{eq:multiagent-control-limits}
\end{equation}
Moving obstacles are predicted by
$c_{t+k+1}^j=c_{t+k}^j+\Delta t\,v_{o,t+k}^j$ with boundary reflection;
the static benchmark is the special case $v_{o,t+k}^j=0$.

The hard safety set is defined by all active robot-obstacle and
robot-robot margins along the prediction horizon:
\begin{equation}
\begin{aligned}
  h_{i,j}^{\mathrm{obs}}(t+k)
  &=\|p_{t+k}^i-c_{t+k}^j\|_2-r_i-r_j-d_{\mathrm{safe}}\ge 0,\\
  h_{i,\ell}^{\mathrm{rr}}(t+k)
  &=\|p_{t+k}^i-p_{t+k}^{\ell}\|_2-r_i-r_{\ell}
    -d_{\mathrm{safe}}\ge 0 .
\end{aligned}
\label{eq:multiagent-safety-margins}
\end{equation}
Let $\mathcal{H}_{t+k}$ collect these margins and let
$m_i\in\{0,1\}$ denote the active-agent mask, with $m_0=1$ for the ego
robot. The six-objective vector optimized offline is
\begin{equation}
\resizebox{0.98\textwidth}{!}{$
\begin{gathered}
J_{\mathrm{goal}}=\sum_{k=1}^{H}\!\left(
\|p_{t+k}^{0}-g^0\|_2^2+
\eta_g\frac{\sum_{i=1}^{N}m_i\|p_{t+k}^{i}-g^i\|_2^2}{\max(\sum_i m_i,1)}
\right),\\[-0.2em]
\begin{alignedat}{2}
J_{\mathrm{safe}}&=\sum_{k=1}^{H}\operatorname{mean}_{h\in\mathcal{H}_{t+k}}
e^{-\operatorname{clip}(h,-3,3)},\qquad&
J_{\mathrm{viol}}&=\sum_{k=1}^{H}
\sum_{h\in\mathcal{H}_{t+k}}\!\left[\max(0,-h)\right]^2,\\
J_{\mathrm{effort}}&=\sum_{k=0}^{H-1}
\frac{\sum_{i=0}^{N}m_i\|u_{t+k}^i\|_2^2}
{\max(1+\sum_{i=1}^{N}m_i,1)},&
J_{\mathrm{term}}&=\|p_{t+H}^{0}-g^0\|_2^2+
\eta_g\frac{\sum_{i=1}^{N}m_i\|p_{t+H}^{i}-g^i\|_2^2}{\max(\sum_i m_i,1)},\\
J_{\mathrm{smooth}}&=\frac{1}{H}\sum_{k=0}^{H-1}
\sum_{i=0}^{N}\|u_{t+k}^i-u_{t+k-1}^i\|_2^2 .
\end{alignedat}
\end{gathered}
$}
\label{eq:multiagent-objectives}
\end{equation}
In the experiments, $H=8$, $\Delta t=0.2$, $N\le 5$,
$d_{\mathrm{safe}}=0.18$, and $\eta_g=0.65$. The preference vector
$w\in\Delta^2$ controls goal progress, soft safety, and effort, while
hard violation, terminal error, and smoothness remain always active. The implementation uses
$(\lambda_g,\lambda_s,\lambda_e,\lambda_v,\lambda_T,\lambda_{\Delta})
=(2.0,12.0,0.08,260.0,5.0,0.10)$. The white-box solver samples
Pareto-optimal joint plans for \eqref{eq:multiagent-objectives};
GPC then learns the solution manifold over the current scene
$\xi_t$ and preference $w$, and online executes only the first joint
action before replanning.

A rollout is counted as \emph{scene collision-free} only if every
active obstacle and robot-robot margin remains nonnegative over the
closed loop. For the dynamic benchmark we report \emph{joint success},
which requires both the ego robot and all active agents to reach their
goals within the goal radius, and \emph{safe success}, which is joint
success multiplied by the scene collision-free indicator.

The offline reference optimizes this objective vector with the same predicted
obstacle motion and pairwise safety constraints.
It is used as a reference optimizer, not as a queried policy at deployment. This
distinction matters in the moving-obstacle experiment: the reference
often fails to find a safe closed-loop solution within its finite
optimization budget. We therefore report both reference-safe success and
the optimality gap on the subset of episodes where the reference itself
is safe:
\begin{equation}
  \mathrm{Gap}_{\mathrm{ref-safe}}(\%)
  =
  100\,
  \frac{\mathbb{E}[J_{\mathrm{GPC}}-J_{\mathrm{ref}}
  \mid \mathrm{ref\ safe}]}
  {\mathbb{E}[J_{\mathrm{ref}}\mid \mathrm{ref\ safe}]} .
  \label{eq:multiagent-ref-safe-gap}
\end{equation}
This metric separates near-optimality from reference brittleness in
dynamic safety-critical scenes.

\subsection{Multi-Objective Optimal Power Flow Control with 
Multi-Timescale Dynamics}
\label{sec:opf}

To make the GPC framework concrete, we instantiate it on 
multi-objective AC optimal power flow. The framework is not 
specific to power systems: it is applicable to settings where the
governing physics is available offline and the control problem can be
written in the form of~\eqref{eq:opf-general}. Wireless resource 
allocation, for instance, has related interference-coupling equations
and rate-quality constraints \cite{luong2019applications, feriani2021single}. 
Transportation network control involves flow conservation equations on
sparse graphs with capacity and ramp-rate constraints
\cite{haydari2020deep, zong2025deep}. Building energy management,
water distribution, and chemical process control also contain
nonlinear equality constraints, inequality safety constraints, and
temporal coupling, although each domain would require its own
instantiation and validation.

AC-OPF is chosen as the 
primary instantiation because it is a challenging and practically
important instance of this class: the feasibility set $\mathbf{g}(\mathbf{s}, 
\mathbf{u}) = \mathbf{0}$ is dense, non-separable, and nonconvex 
due to trigonometric voltage phasor interactions, and the three 
objectives span timescales differing by four orders of magnitude. 
Performance on AC-OPF therefore provides evidence for the value of the
geometric construction in a demanding CPS benchmark, while not by
itself establishing performance across all CPS domains. 
For readers unfamiliar with power systems, the problem takes the 
following general mathematical form.
Given a network of $n_{\text{bus}}$ nodes and $n_{\text{gen}}$ 
controllable inputs, find $\mathbf{u} \in \mathbb{R}^{n_u}$ solving:
\begin{equation}
  \min_{\mathbf{u}} \quad \mathbf{w}^\top \mathbf{J}(\mathbf{s}, 
  \mathbf{u}) \qquad
  \text{subject to} \quad
  \begin{cases}
    \mathbf{g}(\mathbf{s}, \mathbf{u}) = \mathbf{0} 
    & \text{(nonlinear equality)}\\
    \mathbf{f}(\mathbf{s}, \mathbf{u}) \le \mathbf{0} 
    & \text{(inequality constraints)}\\
    \mathbf{c}(\mathbf{u}_t, \mathbf{u}_{t-1}) \le \mathbf{0} 
    & \text{(temporal coupling)}
  \end{cases}
  \label{eq:opf-general}
\end{equation}
where $g$ is a set of dense, coupled nonlinear equalities 
(the AC power flow equations) that define the feasibility manifold 
in state space, $\mathbf{f}$ collects box and network safety 
constraints, and $\mathbf{c}$ encodes inter-step temporal coupling. 
The objective vector $\mathbf{J} \in \mathbb{R}^3$ has three 
components ordered by timescale and safety priority: $J_1$ 
(safety-critical, seconds), $J_2$ (quality-of-service, minutes), 
and $J_3$ (economic, tens of minutes to hours). The weight vector 
$\mathbf{w} \in \Delta^2$ is not fixed; in GPC it is generated 
from real-time observations through the 
semantic coordinate $\boldsymbol{\sigma}_t$ derived in 
Section~\ref{sec:offline}.

Problem~\eqref{eq:opf-general} is solved repeatedly at every 
dispatch interval $\Delta t$ under continuously varying exogenous 
disturbance $\mathbf{p}_t$, with a hard real-time deadline that 
can make iterative nonlinear solvers impractical in larger or faster
settings. The remainder of 
this section specifies the concrete instantiation of each GPC 
component for this problem class.

\subsubsection{System Model and Multi-Objective Formulation}

Consider a transmission network with $n_{\text{bus}}$ buses, 
$n_{\text{gen}}$ generators, and branch set $\mathcal{L}$. Let 
$V_i \in \mathbb{R}_{>0}$ and $\theta_i \in \mathbb{R}$ denote the 
voltage magnitude and phase angle at bus $i$, and let 
$Y_{\text{bus}} = G + jB \in \mathbb{C}^{n_{\text{bus}} \times 
n_{\text{bus}}}$ denote the network admittance matrix. The system 
state is the voltage phasor vector 
$\mathbf{s}_t = [\mathbf{V}_t,\,\boldsymbol{\theta}_t] \in 
\mathbb{R}^{2n_{\text{bus}}}$, the control action is the generator 
dispatch $\mathbf{u}_t = [P_{\text{gen}},\,Q_{\text{gen}}] \in 
\mathbb{R}^{2n_{\text{gen}}}$, and the exogenous disturbance is the 
load demand $\mathbf{p}_t = [P_{\text{load}},\,Q_{\text{load}}]$, 
which varies continuously and is not controllable. The OPF experiments use the full measured system observation:
\begin{equation}
  \mathbf{x}_t = h(\mathbf{s}_t) + \boldsymbol{\eta}_t 
  \in \mathbb{R}^{n_x}
\end{equation}
where $h$ is the identity map up to measurement noise, so
$\mathbf{x}_t=\mathbf{s}_t+\boldsymbol{\eta}_t$ and
$n_x=2n_{\text{bus}}$. The offline dataset takes the base form 
of Section~\ref{sec:offline}:
\begin{equation}
  \mathcal{D} = \bigl\{(\mathbf{x}_k,\,\mathbf{u}_k^*,\,
  \mathbf{s}_k,\,\boldsymbol{\sigma}_k,\,\mathbf{J}_k)
  \bigr\}_{k=1}^N
\end{equation}
where $\boldsymbol{\sigma}_k \in \Delta^2$ is the three-component 
semantic priority coordinate over thermal, voltage, and economic 
objectives, and $\mathbf{J}_k = [J_1(\mathbf{s}_k,\mathbf{u}_k^*),\,
J_2(\mathbf{s}_k,\mathbf{u}_k^*),\,J_3(\mathbf{u}_k^*)]^\top$ is the 
objective vector.

\textbf{AC power flow equations.} The physical state must satisfy 
the AC power balance equations at every bus $i$:
\begin{align}
  P_{\text{gen},i} - P_{\text{load},i} &= V_i 
  \sum_{j=1}^{n_{\text{bus}}} V_j \bigl[G_{ij}
  \cos(\theta_i - \theta_j) + B_{ij}\sin(\theta_i - \theta_j)\bigr] 
  \label{eq:pflow}\\
  Q_{\text{gen},i} - Q_{\text{load},i} &= V_i 
  \sum_{j=1}^{n_{\text{bus}}} V_j \bigl[G_{ij}
  \sin(\theta_i - \theta_j) - B_{ij}\cos(\theta_i - \theta_j)\bigr]
  \label{eq:qflow}
\end{align}
These nonlinear equality constraints define the AC feasibility manifold 
in state space. In GPC, they are not enforced explicitly at runtime; 
instead, every sample in $\mathcal{D}$ is generated by solving the 
full AC-OPF to feasibility, so the power flow equations are embedded 
implicitly in the geometry of $\mathcal{M}^*$. At runtime, the state 
decoder $D_s(z_t)$ reconstructs a power-flow-consistent state 
$\hat{\mathbf{s}}_t = [\hat{\mathbf{V}}_t,\,\hat{\boldsymbol{\theta}}_t]$ 
directly from the latent state, and all violation indicators are 
evaluated on $\hat{\mathbf{s}}_t$.

\textbf{Branch flows.} The apparent power flow on branch $(i,j)$ is 
$S_{ij} = \sqrt{P_{ij}^2 + Q_{ij}^2}$, where:
\begin{align}
  P_{ij} &= V_i^2 G_{ij} - V_i V_j \bigl[G_{ij}\cos(\theta_i
  -\theta_j) + B_{ij}\sin(\theta_i-\theta_j)\bigr] \\
  Q_{ij} &= -V_i^2 B_{ij} - V_i V_j \bigl[G_{ij}\sin(\theta_i
  -\theta_j) - B_{ij}\cos(\theta_i-\theta_j)\bigr]
\end{align}
Branch flows are not directly used as learned inputs. They are
reconstructed from the decoded voltage phasors 
$(\hat{V}_i,\,\hat{\theta}_i)$ extracted from $\hat{\mathbf{s}}_t = 
D_s(z_t)$ via the known $Y_{\text{bus}}$, eliminating the need for 
branch-flow supervision in the observation vector.

\textbf{Inequality constraints.} The feasible action set is further 
restricted by:
\begin{alignat}{2}
  &P_{\text{gen},i}^{\min} \le P_{\text{gen},i} \le 
  P_{\text{gen},i}^{\max}, \quad 
  &&Q_{\text{gen},i}^{\min} \le Q_{\text{gen},i} \le 
  Q_{\text{gen},i}^{\max} 
  \quad\text{(generator capacity)}\\
  &V_i^{\min} \le V_i \le V_i^{\max} 
  &&\text{(voltage limits)}\\
  &S_{ij} \le S_{\max,ij} 
  &&\text{(branch thermal limits)}\\
  &|P_{\text{gen},i,t} - P_{\text{gen},i,t-1}| \le \delta_{\text{ramp}} 
  &&\text{(ramp-rate limits)}
\end{alignat}
Ramp-rate limits couple consecutive dispatch decisions and constitute 
the primary source of multi-timescale complexity: they prevent the 
fast thermal objective from being satisfied instantaneously, requiring 
the navigator to plan a feasible trajectory across multiple steps. 
Inequality constraints are handled by the geometry encoded in 
$\mathcal{M}^*$, the output clamp for box constraints, and, when 
hard per-step coupling enforcement is enabled, a local Jacobian 
projection via the action Jacobian $\mathbf{J}_t$, with each active 
constraint type contributing rows to $A_t$ and $b_t$.

\textbf{Three-objective hierarchy.} The MO-OPF instantiates the 
general objective vector $\mathbf{J}(\mathbf{s}_t, \mathbf{u}_t)$ 
with three terms ordered by timescale and safety priority.

$J_1$ penalizes thermal overloading via a quartic potential that 
rises explosively as branch loading approaches the thermal rating:
\begin{equation}
  J_1(\mathbf{s}_t) = \sum_{(i,j) \in \mathcal{L}} \left[\max\!\left(0, 
  \frac{S_{ij}}{S_{ij}^{\max}} - \tau\right)\right]^4, \qquad 
  \tau = 0.85
\end{equation}
The quartic exponent and the $\tau = 0.85$ margin create a steep 
potential wall that activates before the hard thermal limit is 
reached, consistent with the singular perturbation structure of the 
semantic potentials in Section~\ref{sec:offline:semantic}.

$J_2$ penalizes voltage deviations outside an acceptable deadband 
around the nominal value of $1.0$ p.u.:
\begin{equation}
  J_2(\mathbf{s}_t) = \sum_{i=1}^{n_{\text{bus}}} \max\!\left(0,\, 
  |V_i - 1.0| - \delta_{\text{dead}}\right)^2, \qquad 
  \delta_{\text{dead}} = 0.05\;\text{p.u.}
\end{equation}
The deadband avoids penalizing small deviations within the acceptable 
operating range, concentrating the gradient signal on genuine voltage 
quality violations.

$J_3$ models the economic cost of generation via the standard 
quadratic heat rate characteristic:
\begin{equation}
  J_3(\mathbf{u}_t) = \sum_{i=1}^{n_{\text{gen}}} \left[a_i 
  P_{\text{gen},i}^2 + b_i P_{\text{gen},i} + c_i\right]
\end{equation}
where $a_i$, $b_i$, $c_i$ are generator-specific cost coefficients 
determined from historical fuel cost data.

\subsubsection{GPC Instantiation: Semantic Violation Indicators}

The general violation indicators $\delta_i(\hat{\mathbf{s}}_t) 
\in [0,1]$ of Section~\ref{sec:offline:semantic} are instantiated 
on the reconstructed state $\hat{\mathbf{s}}_t = D_s(z_t)$ as 
follows. The thermal indicator measures the worst-case branch 
loading across the network:
\begin{equation}
  \delta_f(\hat{\mathbf{s}}_t) = \mathrm{clip}\!\left(
  \max_{(i,j)\in\mathcal{L}} 
  \frac{\hat{S}_{ij}}{S_{ij}^{\max}},\;0,\;1\right)
\end{equation}
The voltage indicator measures the worst-case normalized deviation 
from nominal across all buses:
\begin{equation}
  \delta_v(\hat{\mathbf{s}}_t) = \mathrm{clip}\!\left(
  \frac{\max_i|\hat{V}_i - V_{\text{nom}}|}
  {\Delta V_{\text{limit}}},\;0,\;1\right)
\end{equation}
The economic indicator $\delta_e$ is set to a small positive constant 
$\delta_e \ll 1$, encoding a persistent background drive toward 
economic operation that remains active whenever thermal and voltage 
conditions are benign. Together, the three indicators 
$(\delta_f, \delta_v, \delta_e)$ feed the autonomous semantic 
generation of Section~\ref{sec:online:semantic} to produce the 
urgency potentials:
\begin{equation}
  \phi_i(\hat{\mathbf{s}}_t) = \exp\!\left(\frac{k_i \cdot 
  \delta_i(\hat{\mathbf{s}}_t)}{\epsilon_i}\right) - 1
\end{equation}
and the semantic priority coordinate
$\sigma_{t,i} = (\phi_i(\hat{\mathbf{s}}_t)+\rho)/\sum_j(\phi_j(\hat{\mathbf{s}}_t)+\rho)
\in \Delta^2$ (equation~\eqref{eq:semantic-coord}), which shifts
continuously from $\sigma_{t,3} \approx 1$ (economic dominance) to
$\sigma_{t,1} \approx 1$ (thermal emergency) without any
rule-based switching.

\section{Supplementary Simulation Details and Results}
\label{sec:experiments}
\begin{table}[ht]
\centering
\caption{Simulation cases.}
\label{tab:setup}
\small
\begin{tabular}{p{0.15\textwidth}p{0.21\textwidth}p{0.29\textwidth}p{0.19\textwidth}}
\toprule
\textbf{Case} & \textbf{System} & \textbf{Constraints} &
\textbf{Purpose}\\
\midrule
Case~1: Analytical & 2D double integrator & Obstacle, input, slew & Mechanism visualization\\
Case~2: Multi-agent nav. & Five-agent robot navigation & Robot-obstacle, pairwise robot safety, moving obstacles & Reliability under dynamic safety coupling\\
Case~3: Nominal OPF & IEEE 30-bus & Thermal, voltage, ramp & OPF baseline comparison\\
Case~4: Uncertain OPF & IEEE 30-bus & Same + branch admittance noise & Robustness to network uncertainty\\
\bottomrule
\end{tabular}
\end{table}

\subsection{Experimental Setup}

We evaluate GPC on three simulation groups: analytical dynamic
navigation, safe multi-agent robot navigation, and IEEE 30-bus OPF.
The analytical simulations include both the
convex-objective benchmark of Section~\ref{sec:analytical-navigation}
and the nonconvex-objective benchmark of
Section~\ref{sec:analytical-nonconvex}. These cases isolate the
geometry and integration mechanisms in a fully visualizable setting.
The multi-agent navigation simulations of
Section~\ref{sec:multiagent-navigation} test the same online navigator
in higher-dimensional static- and moving-obstacle scenes with
ego-agent and agent-agent safety coupling.
The OPF case tests the same mechanisms on the standard MATPOWER
30-bus network under thermal/voltage/economic objectives. We additionally run the training-time system
uncertainty simulations of Section~\ref{sec:exp:uncertainty}; these
stress tests perturb the physical parameters of the same network
rather than introducing additional larger-network cases.

All ground-truth AC-OPF solutions are generated via IPOPT through 
MATPOWER with convergence tolerance $10^{-6}$. For the IEEE 30-bus 
case, the offline dataset $\mathcal{D}$ is constructed via Latin 
Hypercube Sampling with $N = 5000$ load scenarios, each solved for 
$|\mathcal{W}| = 50$ uniformly spaced weight vectors across 
$\Delta^2$, yielding $250{,}000$ certified Pareto-optimal samples. 
Load profiles span $[0.6,\,1.4]$ times the nominal MATPOWER loading.

For the uncertainty-aware training (Case~4), the physical
parameter vector $\theta$ is sampled jointly with load and objective
weights during offline data generation.
Case~4 restricts perturbations to branch admittances only
($\xi^S=\xi^c=0$):
\begin{equation}
  Y_{ij}(\theta) = Y_{ij}^{0}(1+\xi^{Y}_{ij}),
  \qquad \xi^{Y}_{ij}\in[-\rho_Y,\rho_Y],
  \label{eq:uncertain-training-parameters}
\end{equation}
with $(\sigma_Y,\rho_Y)=(0.01,0.03)$.
Each trajectory samples a single admittance realization held fixed
for all $H+1$ dispatch steps, reflecting the physical timescale
separation between line-parameter drift (hours) and OPF dispatch
(seconds).
The sampled $\theta$ is appended to the physical input $\mathbf{x}_k$
so the offline map learns a family of Pareto manifolds indexed by
uncertain network realizations. Training samples that fail the
ramp-coupled MO-IPOPT chain are discarded rather than partially inserted
into the map, so every retained Pareto sample contributes a feasible
decoded target and a positive margin certificate.

The encoders $E_x$, $E_o$ and decoders $D_u$, $D_s$ are three-layer 
MLPs with hidden dimension 256, ReLU activations, and spectral 
normalization, trained for 500 epochs with Adam ($\text{lr}=10^{-3}$, 
cosine annealing), loss weights $\omega_1=1.0$, $\omega_2=0.5$, and 
neighborhood size $\kappa=10$. The latent dimension is $n_z=32$ for 
the 30-bus OPF cases. Online navigation uses $\epsilon=0.05$, 
$\alpha=0.3$, and $\tau_{\text{geom}}$ calibrated to 
$\mathcal{L}_{\text{local}}^{\text{val}}$, running at 
$\Delta t=5$ seconds on a single CPU core. At deployment, no neural
weights are updated; the online controller uses the fixed trained map,
velocity cap, RK2 midpoint evaluation, and the guarded Lie residual in
eq.~\eqref{eq:iclr-lie-residual}.

We compare GPC against three baselines throughout all cases.
\textbf{TD3}~\cite{fujimoto2018addressing} is a model-free deep RL
agent trained on the nominal environment and evaluated without
fine-tuning on the test settings.
\textbf{CPO}~\cite{achiam2017constrained} is a constrained policy
optimization method trained with the same constraint specification.
\textbf{MPC} is a receding-horizon controller that solves a
linearised version of the problem online at each dispatch step using
the current system state.
For the analytical cases (Case~1) we additionally compare against
\textbf{Online scalarization}, which solves the scalarized objective
at each step via the oracle solver, and evaluate against a
finite-horizon dynamic-programming (DP) oracle.
For multi-agent navigation (Case~2), the reference is a finite-budget
white-box Pareto solver with the same predicted dynamics and safety
constraints; because this solver can itself fail under moving
obstacles, we report both reference-safe success and the optimality
gap conditioned on reference-safe episodes. We additionally evaluate
PPO, PPO-Lagrangian, and one-step shielded PPO on the same static and
moving-obstacle distributions, reported on the same 40-episode scale as
the GPC/reference comparison.
For the OPF cases (Cases~3 and 4) the optimality reference is the
full \textbf{MO-IPOPT} Pareto solver and the single-objective
\textbf{runopf} economic dispatch, both run offline.
All RL baselines are trained for $10^6$ environment interactions.
Performance is reported as feasibility rate, suboptimality relative
to the oracle, and mean wall-clock decision latency.

\subsection{Case 1: Analytical Dynamic Navigation}
\label{sec:exp:analytical}

The analytical simulations use the double-integrator system and
moving obstacle of Sections~\ref{sec:analytical-navigation}
and~\ref{sec:analytical-nonconvex}. We evaluate the ability of GPC to
trace the dynamically feasible Pareto front, recover from
off-manifold perturbations, and shift $\boldsymbol{\sigma}_t$ from
performance to safety when the obstacle intersects the predicted
trajectory. The nonconvex variant tests whether the same latent
navigator remains stable when the objective landscape contains
multiple local basins.

Because this benchmark is low-dimensional, we can compute an explicit
finite-horizon dynamic-programming (DP) oracle by discretizing
$(\mathbf{q},\mathbf{v},\mathbf{u})$ and solving the constrained
Bellman recursion with the same moving obstacle, input bound, and
slew constraint. DP is used only as an evaluation oracle; it is not
available for the higher-dimensional OPF cases. We compare GPC
against this oracle and four baselines: \textbf{TD3}~\cite{fujimoto2018addressing}, a model-free
deep RL agent; \textbf{CPO}~\cite{achiam2017constrained}, a constrained
policy optimization method; \textbf{MPC}, a receding-horizon model
predictive controller with the same physics model; and \textbf{Online
scalarization}, which solves the scalarized objective at each step via
the oracle solver. The primary optimality metric is the normalized
dynamic regret
\begin{equation}
  R_t =
  \frac{\sum_{\tau=1}^{t}
  \boldsymbol{\sigma}_\tau^\top
  \bigl(\mathbf{J}(\mathbf{s}_\tau,\mathbf{u}_\tau)
  -\mathbf{J}^{\mathrm{DP}}(\mathbf{s}_\tau)\bigr)}
  {\sum_{\tau=1}^{t}
  \boldsymbol{\sigma}_\tau^\top
  \mathbf{J}^{\mathrm{DP}}(\mathbf{s}_\tau)+\delta_R},
  \label{eq:analytical-regret}
\end{equation}
where $\delta_R>0$ prevents division by zero. This curve directly
tests whether the geometric navigator remains close to the
trajectory-level constrained optimum, not merely whether it stays on
the learned manifold.

Figures~\ref{fig:analytical-convex} and~\ref{fig:analytical-nonconvex}
report the convex- and nonconvex-objective variants. Panel~(a) shows
terminal regret against the adaptive oracle; panel~(b) shows the
autonomous semantic coordinate; panel~(c) shows the closed-loop
state-space trajectory; and panel~(d) shows the speed-quality
frontier. Table~\ref{tab:analytical} summarizes the same comparison
over $n=100$ trajectories.

\begin{table}[!htbp]
\centering
\caption{Analytical dynamic navigation: optimality and feasibility
comparison ($n=100$ trajectories). $R_T$ reported as mean $\pm$
s.e.m.\ (\%).  Lower $R_T$, fewer violations, lower Fail, and lower runtime
are better. Best non-oracle result in \textbf{bold}.}
\label{tab:analytical}
\small
\begin{tabular}{llccccc}
\toprule
\textbf{Obj.} & \textbf{Method} &
\textbf{$R_T$ (\%)} &
\textbf{Obs.\ viol.\ (\%)} &
\textbf{Slew viol.\ (\%)} &
\textbf{Fail (\%)} &
\textbf{Time (ms)}\\
\midrule
Convex & Oracle       & $0.00$                  & 0.0  & 0.0  & 0   & offline\\
Convex & TD3          & $8.95\pm0.40$           & 0.08 & 52.5 & 99  & 0.11\\
Convex & CPO                    & $382.5\pm9.91$          & 0.0  & 0.0  & 0   & 0.10\\
Convex & MPC                    & $9.93\pm0.51$           & 0.0  & 0.0  & 0   & 1025\\
Convex & Online       & $7.90\pm0.35$           & 0.0  & 0.0  & 67  & 0.03\\
Convex & \textbf{GPC} & $\mathbf{0.043\pm0.011}$& \textbf{0.0} & \textbf{0.0} & 0 & 0.78\\
\midrule
Nonconvex & Oracle     & $0.00$                  & 0.0  & 0.0  & 0   & offline\\
Nonconvex & TD3         & $15.11\pm0.68$          & 0.07 & 75.6 & 100 & 0.21\\
Nonconvex & CPO                  & $370.6\pm9.59$          & 0.0  & 0.0  & 0   & 0.24\\
Nonconvex & MPC                  & $9.43\pm0.52$           & 0.0  & 0.0  & 0   & 1230\\
Nonconvex & Online     & $7.90\pm0.38$           & 0.0  & 0.0  & 79  & 14.65\\
Nonconvex & \textbf{GPC} & $\mathbf{0.050\pm0.011}$ & \textbf{0.0} & \textbf{0.0} & 0 & 0.97\\
\bottomrule
\end{tabular}
\end{table}

\begin{figure}[!htpb]
\centering
\captionsetup[subfigure]{font=footnotesize,skip=1pt}
\begin{subfigure}[b]{0.43\textwidth}
  \centering
  \includegraphics[width=\textwidth]{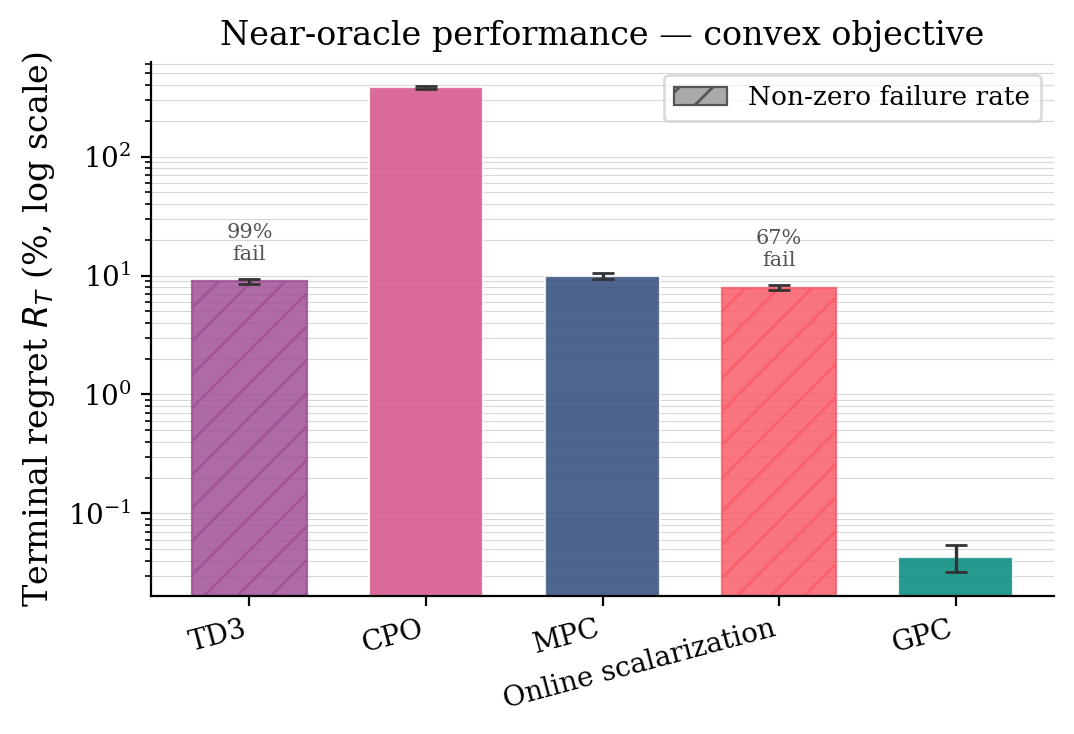}
  \caption{Terminal regret}
\end{subfigure}
\hfill
\begin{subfigure}[b]{0.43\textwidth}
  \centering
  \includegraphics[width=\textwidth]{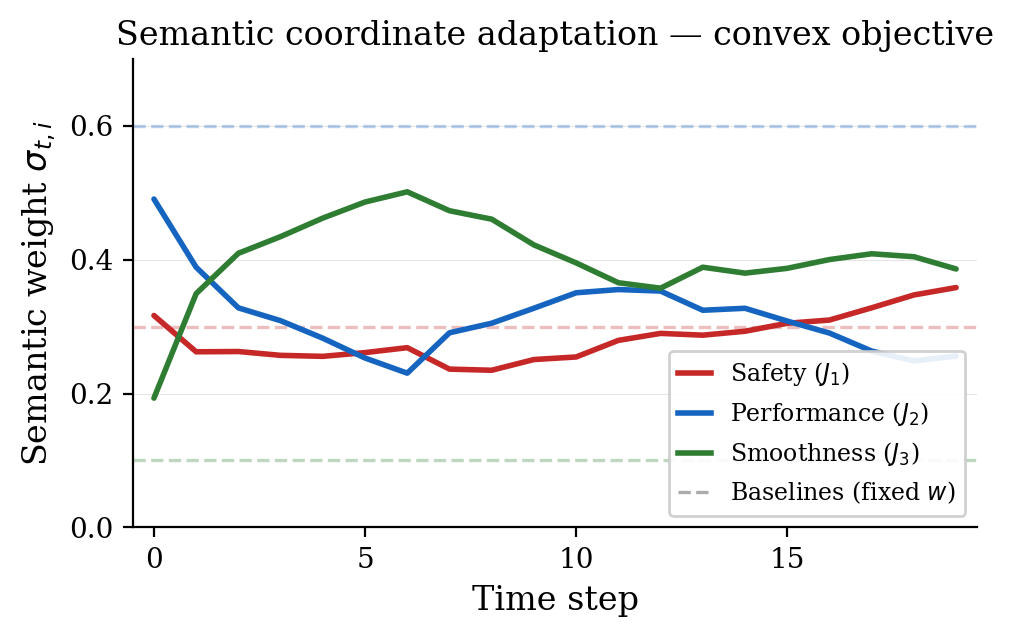}
  \caption{Semantic coordinate}
\end{subfigure}
\vspace{0.1em}

\begin{subfigure}[b]{0.43\textwidth}
  \centering
  \includegraphics[width=\textwidth]{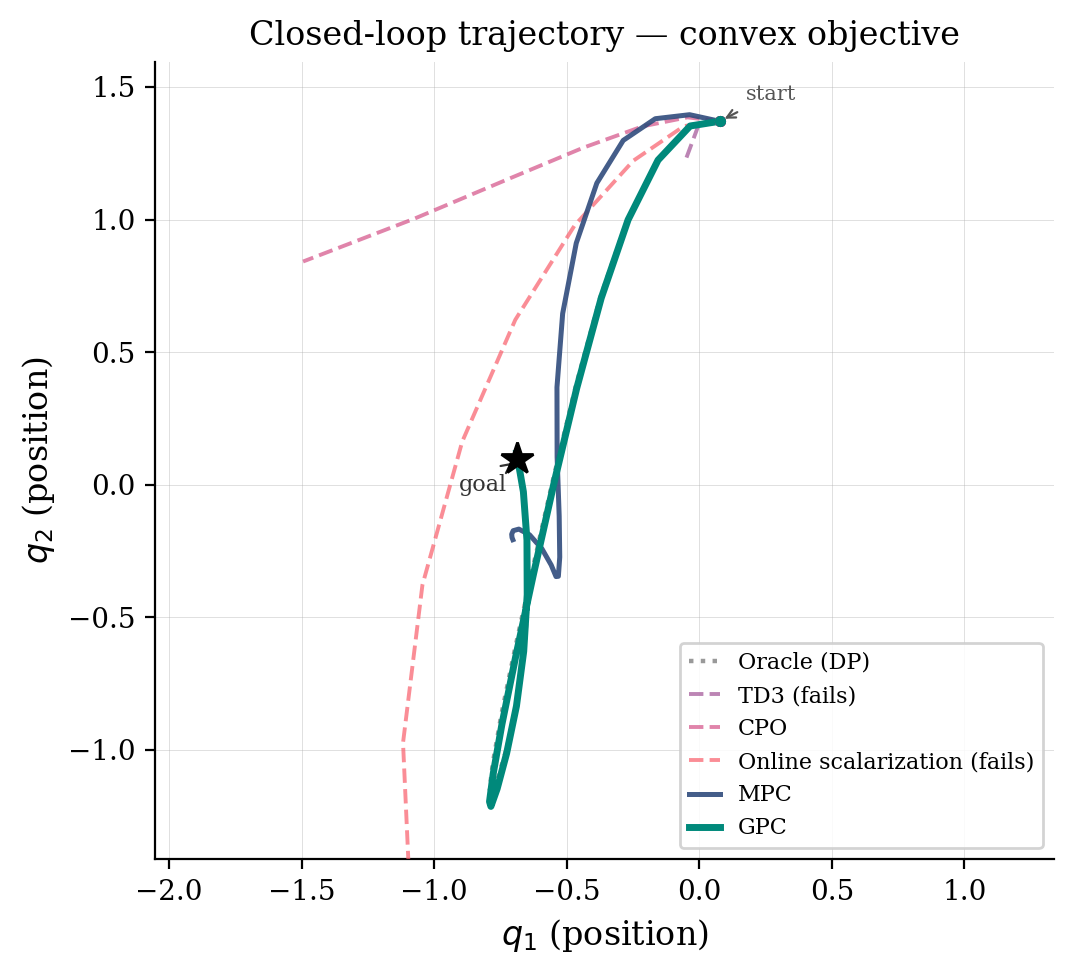}
  \caption{Closed-loop trajectory}
\end{subfigure}
\hfill
\begin{subfigure}[b]{0.43\textwidth}
  \centering
  \includegraphics[width=\textwidth]{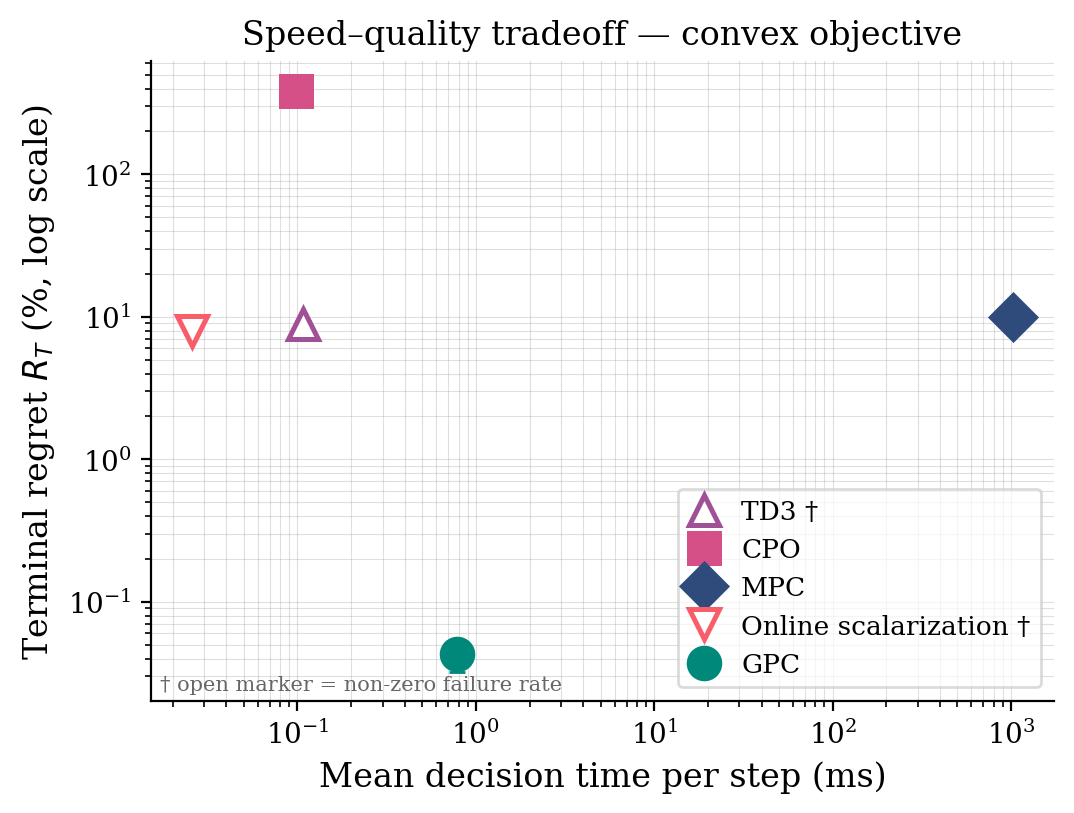}
  \caption{Speed-quality frontier}
\end{subfigure}
\caption{Analytical dynamic navigation: \textbf{convex objective}
($n=100$ independent trajectories). GPC achieves $0.043\%$ terminal
regret with zero obstacle or slew violations and $0.78$~ms online
decision time.}
\label{fig:analytical-convex}
\end{figure}

\subsection{Case 2: Safe Multi-Agent Navigation}
\label{sec:exp:multiagent-navigation}
We next evaluate the safe multi-agent navigation benchmark of
Section~\ref{sec:multiagent-navigation}. The learned controller is
trained from $60{,}000$ static-obstacle Pareto samples for the static
case and $120{,}000$ moving-obstacle Pareto samples for the dynamic
case. Both settings use horizon $H=8$, up to five active agents, and
up to 15 obstacles. Online GPC uses a finite refinement budget of
4096 candidates per receding-horizon step. The white-box reference
uses the same objective and safety constraints with 384 finite-budget
candidates. We report safe success, scene collision-free rate, and
optimality gap relative to the reference. For dynamic scenes, the
reported gap is conditioned on episodes where the reference itself
is safe, following~\eqref{eq:multiagent-ref-safe-gap}.

\begin{figure}[!htpb]
\centering
\captionsetup[subfigure]{font=footnotesize,skip=1pt}
\begin{subfigure}[b]{0.43\textwidth}
  \centering
  \includegraphics[width=\textwidth]{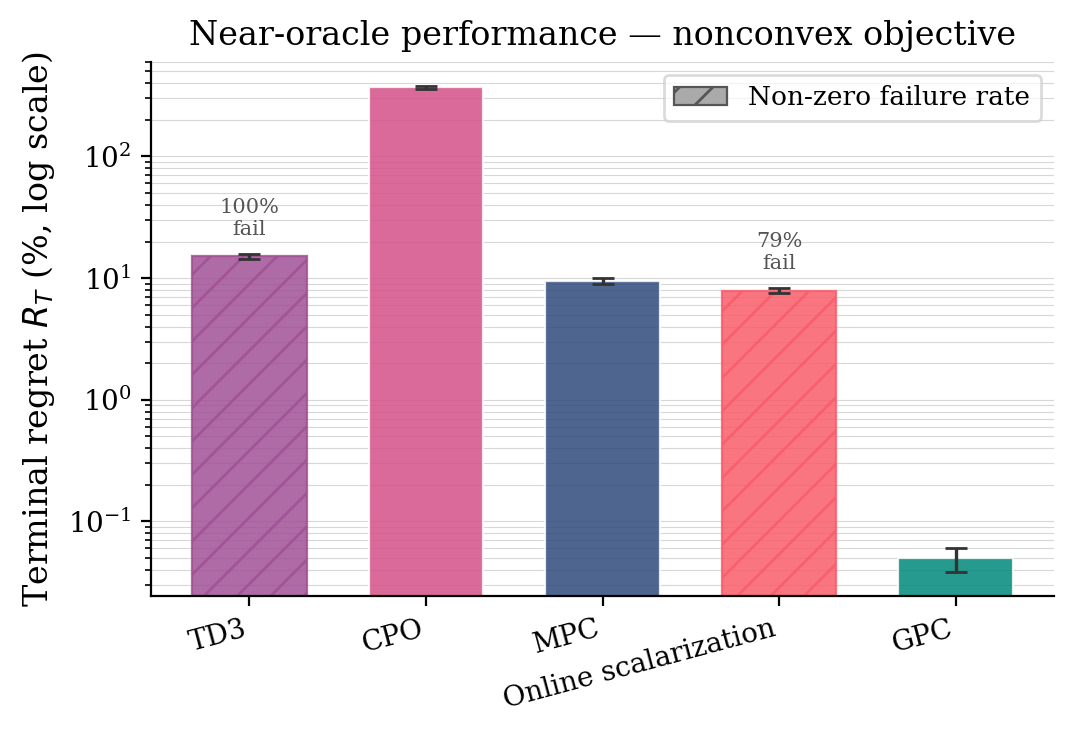}
  \caption{Terminal regret}
\end{subfigure}
\hfill
\begin{subfigure}[b]{0.43\textwidth}
  \centering
  \includegraphics[width=\textwidth]{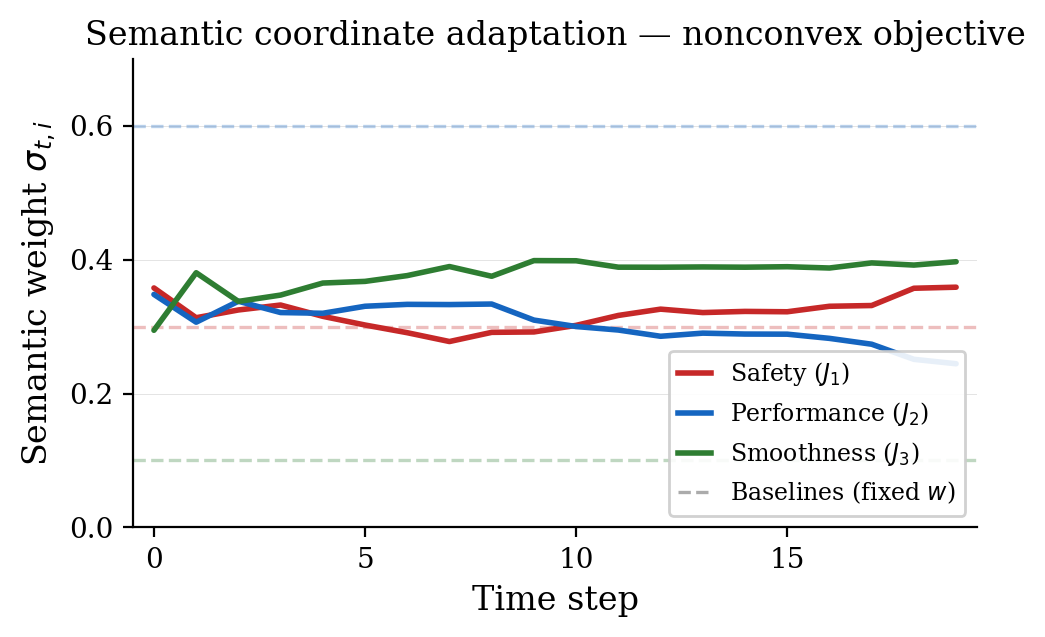}
  \caption{Semantic coordinate}
\end{subfigure}
\vspace{0.1em}

\begin{subfigure}[b]{0.43\textwidth}
  \centering
  \includegraphics[width=\textwidth]{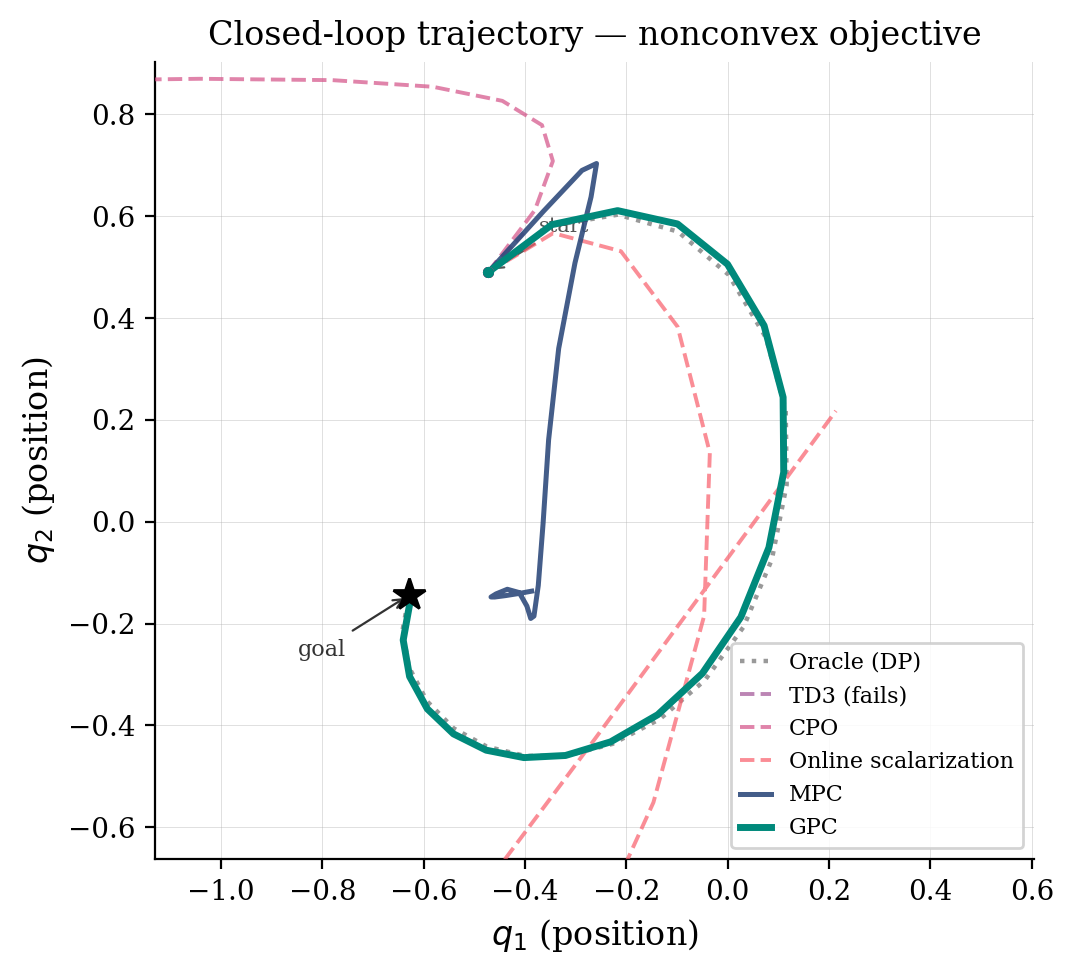}
  \caption{Closed-loop trajectory}
\end{subfigure}
\hfill
\begin{subfigure}[b]{0.43\textwidth}
  \centering
  \includegraphics[width=\textwidth]{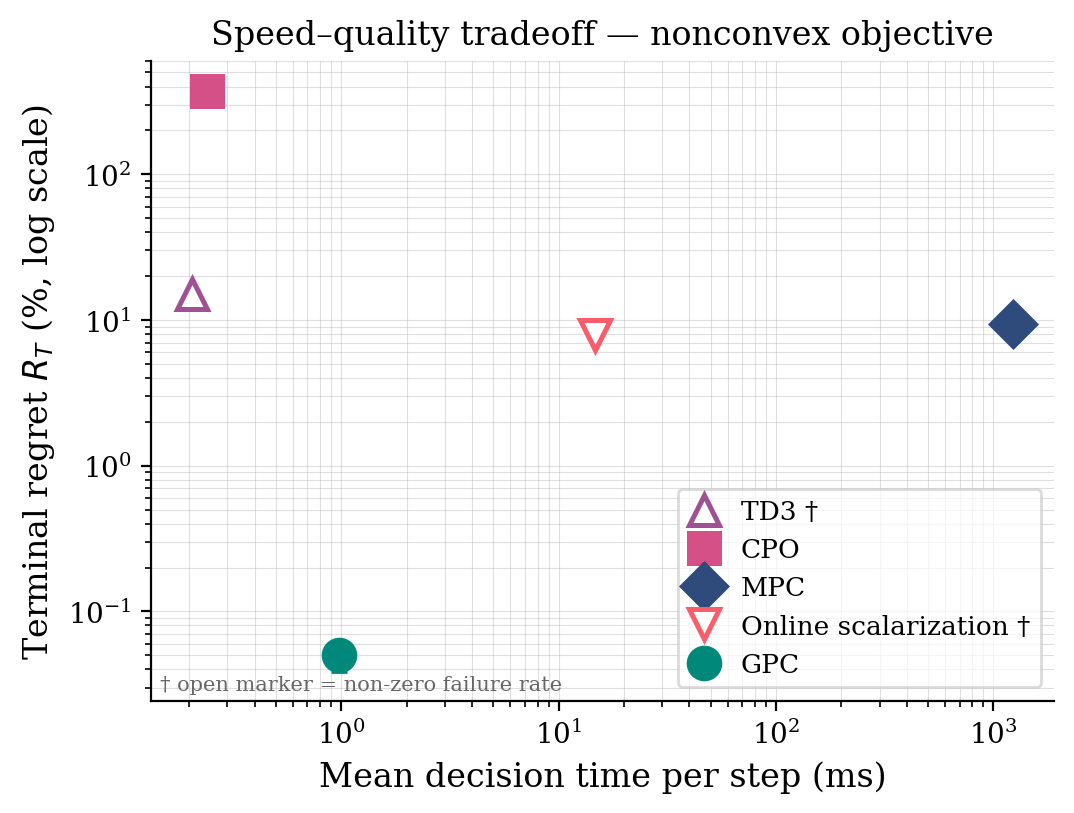}
  \caption{Speed-quality frontier}
\end{subfigure}
\caption{Analytical dynamic navigation: \textbf{nonconvex objective}
($n=100$ independent trajectories). The multi-basin objective does not
degrade GPC: terminal regret is $0.050\%$ with zero violations and
$0.97$~ms online decision time.}
\label{fig:analytical-nonconvex}
\end{figure}

\begin{table}[!htpb]
\centering
\caption{Safe multi-agent navigation reliability over 100 held-out
closed-loop simulations. Entries are percentages. ``GPC safe'' requires
joint goal reaching and scene collision freedom. ``Ref.-safe'' is the
corresponding percentage for the finite-budget white-box reference.
``Gap'' is the percentage optimality gap to that reference, reported
only on reference-safe episodes. For the moving-obstacle GPC row,
$\dagger$ reports the reachable-conditioned rate after excluding one
reference-unreachable scene.}
\label{tab:multiagent-navigation}
\small
\setlength{\tabcolsep}{4pt}
\begin{tabular}{lcccccc}
\toprule
\textbf{Setting} &
\textbf{Steps} &
\textbf{GPC safe}$\uparrow$ &
\textbf{Scene safe}$\uparrow$ &
\textbf{Ref.-safe}$\uparrow$ &
\textbf{Ref. fail}$\downarrow$ &
\textbf{Gap (\%)}$\downarrow$\\
\midrule
Static obstacles & 150 & \textbf{100\%} & \textbf{100\%} & 60\% & 40\% & 0.28\\
Moving obstacles & 600 & \textbf{99\%}$^\dagger$ & \textbf{100\%} & 35\% & 65\% & -3.43\\
\bottomrule
\end{tabular}
\end{table}

\begin{table}[!htpb]
\centering
\caption{Closed-loop safety diagnostics. Entries are aggregate
per-episode minimum margins over the evaluation set; positive values
indicate that the corresponding safety constraints are satisfied.}
\label{tab:multiagent-navigation-margins}
\small
\setlength{\tabcolsep}{5pt}
\begin{tabular}{lcccc}
\toprule
\textbf{Setting} &
\textbf{Scene margin}$\uparrow$ &
\textbf{Ego margin}$\uparrow$ &
\textbf{Agent-obst.}$\uparrow$ &
\textbf{Agent-agent}$\uparrow$\\
\midrule
Static obstacles & 0.301 & 0.900 & 0.503 & 2.477\\
Moving obstacles & 0.203 & 0.528 & 0.253 & 1.065\\
\bottomrule
\end{tabular}
\end{table}
The main result is reliability under a finite optimization budget, not
just a small reference gap. In the static benchmark, GPC is safe in all
evaluations, while the white-box reference is safe in only $60\%$.
In the moving-obstacle benchmark, GPC obtains $99\%$ reachable-conditioned
ego, all-agent joint, and safe success after excluding one
reference-unreachable scene. All learned rollouts are scene
collision-free. By contrast, the same white-box reference reaches
joint success in $83\%$ of dynamic episodes, is scene collision-free in
$42\%$, and satisfies both conditions in only $35\%$. The all-episode
cost gap is therefore not a meaningful
near-optimality number in the dynamic case, because many reference
rollouts collide or fail to complete. Conditioned on the $35$
reference-safe episodes, the aggregate GPC-reference gap is
$-3.43\%$. Thus the
learned Pareto map improves closed-loop reliability under moving
obstacles while preserving near-reference performance on the subset
where the finite-budget optimizer itself succeeds.

\begin{figure}[!htpb]
\centering
\begin{subfigure}[t]{0.48\textwidth}
  \centering
  \includegraphics[width=\linewidth]{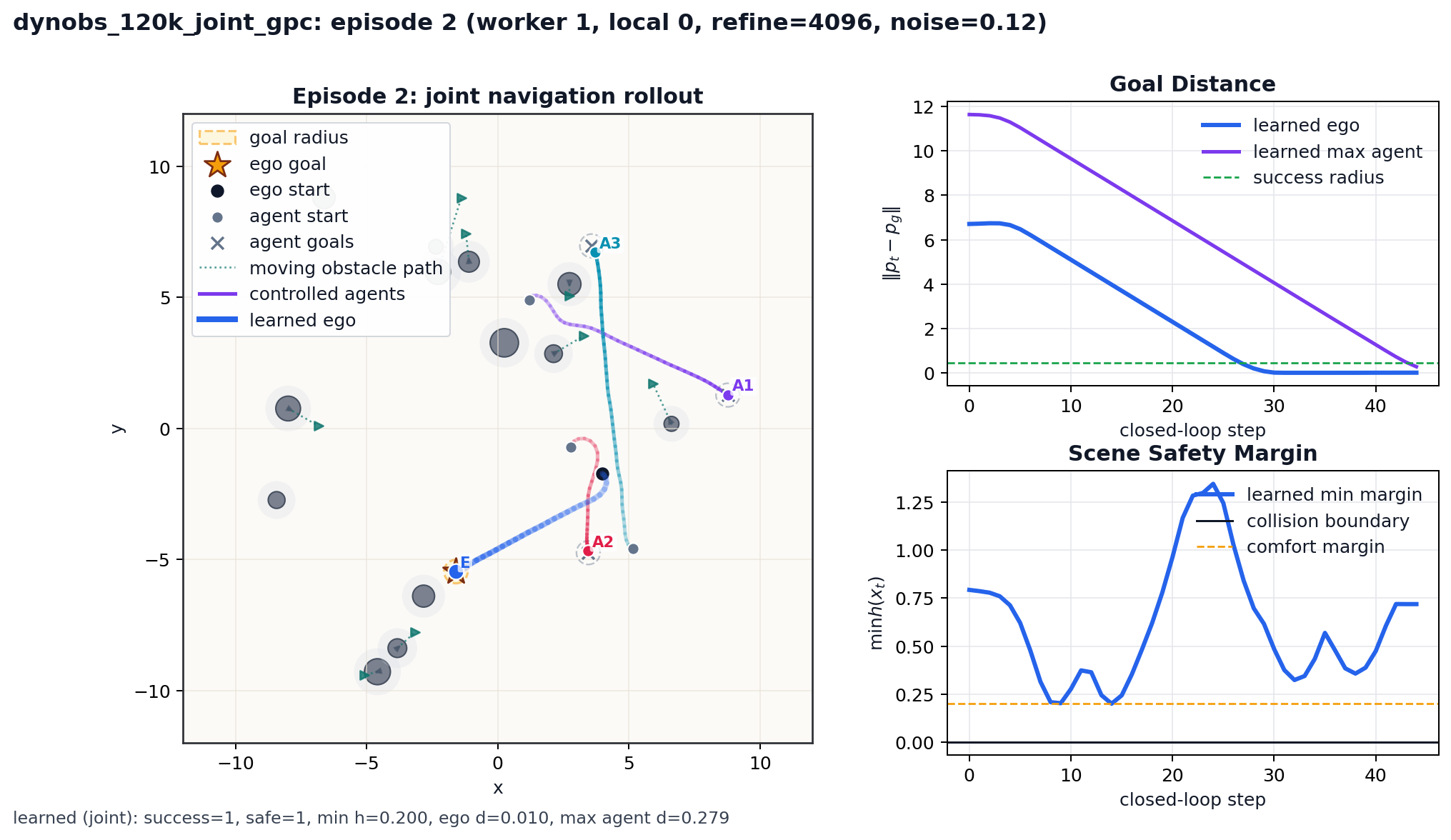}
  \caption{Moving obstacles: episode 2}
\end{subfigure}
\hfill
\begin{subfigure}[t]{0.48\textwidth}
  \centering
  \includegraphics[width=\linewidth]{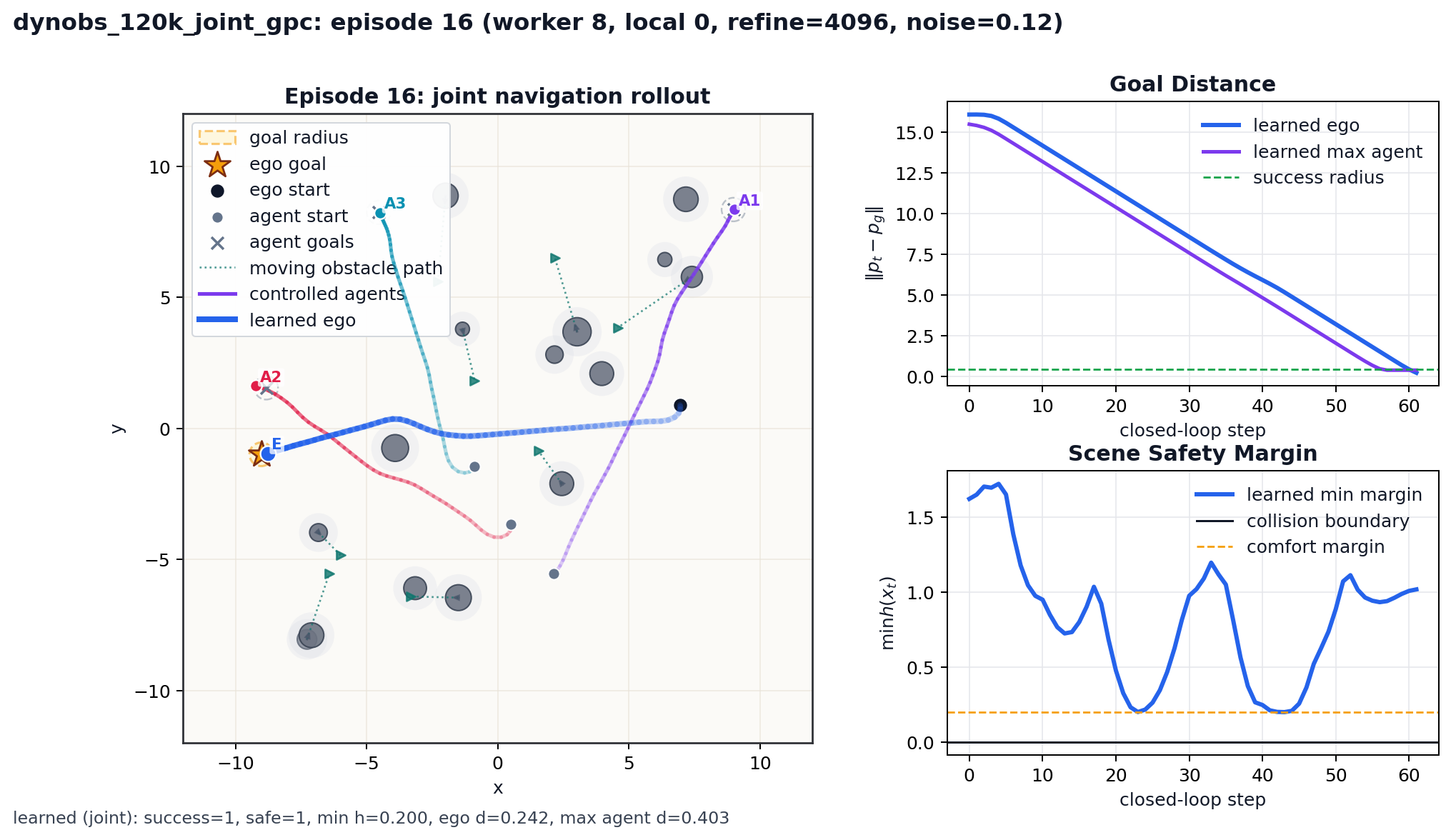}
  \caption{Moving obstacles: episode 16}
\end{subfigure}
\caption{Supplementary moving-obstacle navigation rollouts. These
panels continue Figure~\ref{fig:iclr-multiagent} by showing two
additional held-out dynamic scenes with joint trajectories,
goal-distance traces, and closed-loop safety margins.}
\label{fig:app-multiagent-moving}
\end{figure}

\subsection{Case 3: IEEE 30-Bus OPF}
\label{sec:exp:nobattery}

We evaluate GPC on 300-step load trajectories at $\Delta t = 5$\,s
dispatch intervals on the nominal (no admittance perturbation) IEEE
30-bus network.
Load profiles are drawn from a held-out test partition with no
overlap with training data.
We compare against MPC, TD3, and CPO, and evaluate tracking quality
against the MO-IPOPT Pareto oracle and the single-objective
\texttt{runopf} economic dispatch.

\begin{figure}[!htpb]
\centering
\begin{subfigure}[t]{0.55\textwidth}
  \centering
  \includegraphics[width=\linewidth]{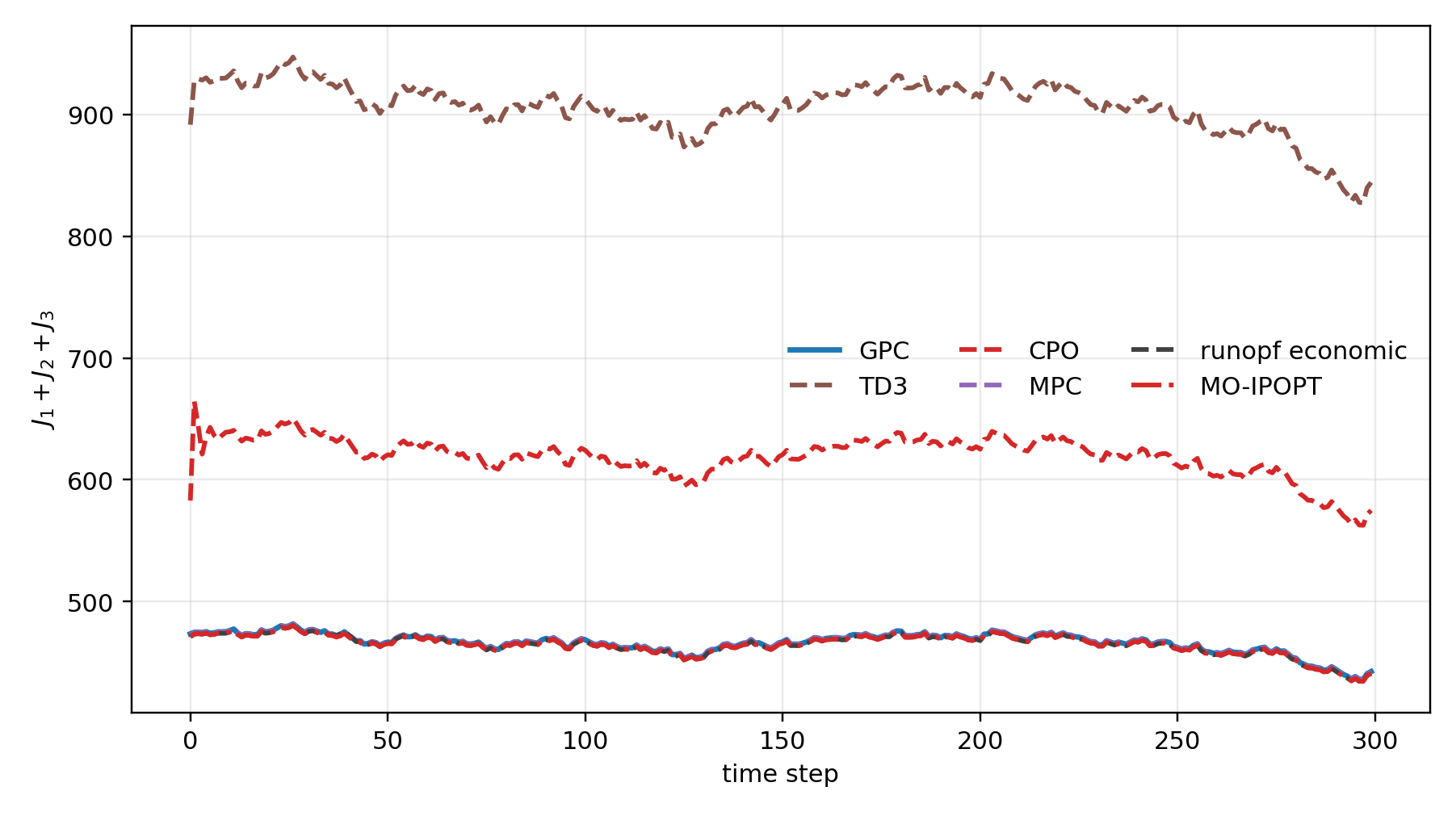}
  \caption{$J_{\mathrm{total}}$ vs.\ oracles over 300 steps.}
  \label{fig:case2-jtotal}
\end{subfigure}
\hfill
\begin{subfigure}[t]{0.40\textwidth}
  \centering
  \includegraphics[width=\linewidth]{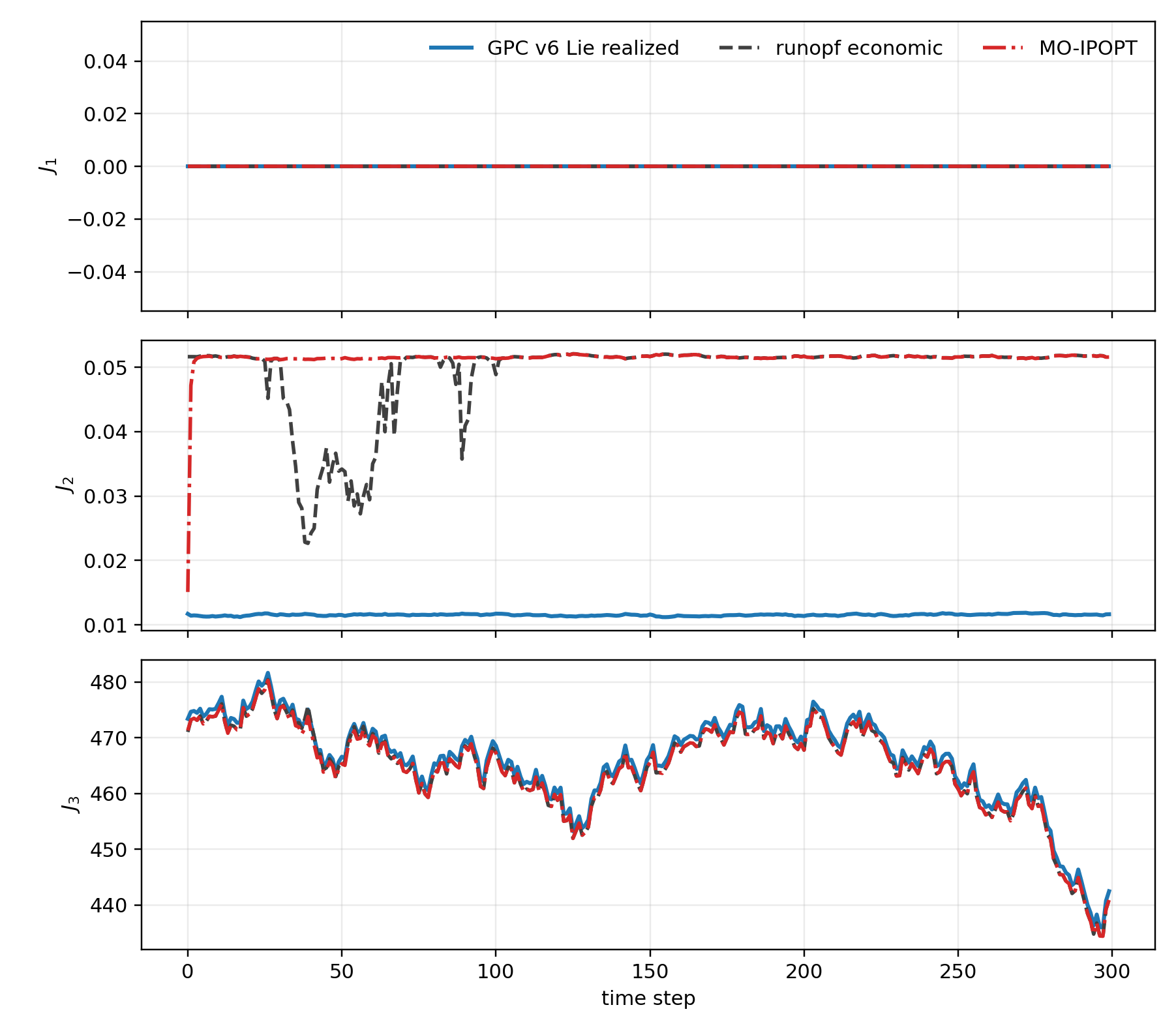}
  \caption{Individual objectives $J_1$, $J_2$, $J_3$ vs.\ both oracles.}
  \label{fig:case2-objectives}
\end{subfigure}
\caption{Case~3 nominal OPF diagnostics on the IEEE 30-bus benchmark.}
\label{fig:case2-main}
\end{figure}

\begin{table}[!htpb]
\centering
\caption{IEEE 30-bus nominal OPF (300 steps).
  $\bar{\Delta}$: mean per-step gap to MO-IPOPT oracle.
  \textbf{Bold}: best non-oracle.}
\label{tab:case2-benchmark}
\small
\begin{tabular}{lcccc}
\toprule
\textbf{Method} &
\textbf{Mean $J_{\mathrm{total}}$}$\downarrow$ &
\textbf{Feasible (\%)}$\uparrow$ &
\textbf{Subopt.\ (\%)}$\downarrow$ &
\textbf{Latency (ms)} \\
\midrule
MO-IPOPT oracle     & 360.00 & 100.0 & 0.00  & $\gg\!10^3$ \\
runopf oracle       & 360.03 & 100.0 & 0.01  & $\gg\!10^3$ \\
\midrule
\textbf{GPC (ours)} & \textbf{361.08} & \textbf{100.0} & \textbf{0.30} & \textbf{12.3} \\
MPC                 & 360.99 & 100.0 & 0.28  & 1193.8 \\
CPO                 & 519.78 &   0.0 & 44.4  & 0.19  \\
TD3                 & 743.77 &   0.0 & 106.6 & 0.17  \\
\bottomrule
\end{tabular}
\end{table}

GPC achieves 100\% empirical feasibility over the 300-step rollout and tracks the MO-IPOPT oracle to
within 0.30\% suboptimality ($\bar{\Delta}=1.08$ per step), virtually
identical to MPC (0.28\%) while running at $97.1\times$ lower latency
(12.3\,ms vs.\ 1193.8\,ms).
Figure~\ref{fig:case2-objectives} shows that GPC simultaneously
maintains $J_1\approx 0$ (thermal constraints satisfied), suppresses
the voltage objective $J_2$, and tracks the economic oracle in $J_3$,
with all three channels matched without mode switching.
The oracle gap stabilises at $\approx\!1.08$ from step~50 onward
and does not grow with the horizon, consistent with the
non-accumulation analysis of Theorem~\ref{thm:error}.
TD3 and CPO produce no feasible dispatches in this experiment, with mean
$J_{\mathrm{total}}$ 44\% and 107\% above GPC respectively.

\subsection{Case 4: IEEE 30-Bus OPF under Branch Admittance Uncertainty}
\label{sec:exp:uncertainty}

This case uses the uncertainty protocol from
Appendix~\ref{sec:experiments} with perturbation restricted to branch
admittances only ($\xi^S = \xi^c = 0$), using
$(\sigma_Y, \rho_Y) = (0.01, 0.03)$. The admittance-only scope isolates
the effect most relevant to real-time OPF: the network model in the
energy management system diverges from the true physical network due to
conductor aging, temperature variation, and errors in line-parameter
estimation, while generator ratings and cost coefficients are relatively
well-characterized.

\paragraph{Trajectory-level network realization.}
The key structural choice is that the branch-perturbation vector $\mathbf{b} \in \mathbb{R}^{41}$ (with $b_\ell = 1 + \xi^\ell_Y$) is sampled \emph{once per trajectory} and held fixed across all $H+1$ dispatch steps. This reflects the physical timescale separation: line admittances change over hours to days, while the OPF dispatch interval is minutes. Within a single operational episode the network is effectively static even if unknown. Each training trajectory therefore presents a ramp-rate-coupled sequence of Pareto-optimal solutions that are all consistent with the same realized (but unknown at deployment) admittance matrix. The trajectory is accepted on an all-or-nothing basis: if any step in the ramp-coupled chain fails to converge, the entire trajectory is discarded. The diagonal entries of $\hat{Y}_{\text{bus}}$ are adjusted with the opposite delta for each off-diagonal change so that $\hat{Y}_{\text{bus}}$ remains a valid nodal admittance matrix at every realization.
At test time, the controller receives the perturbed physical observation
and the sampled uncertainty descriptor but does not run a new nonlinear
OPF solve.

We evaluate GPC against three baselines over a 300-step rollout on
the 30-bus network with $(\sigma_Y,\rho_Y)=(0.01,0.03)$.
\textbf{TD3} and \textbf{CPO} are model-free RL policies trained on
the nominal network and evaluated on the perturbed network without
fine-tuning;
\textbf{MPC} solves a linearised AC-OPF at each step using the
perturbed admittance matrix.
Table~\ref{tab:uncertainty-benchmark} reports aggregate metrics, and
Figure~\ref{fig:uncertainty-results} shows the rollout and
adaptive priority weights.

\begin{figure}[!htbp]
\centering
\begin{subfigure}[t]{0.52\textwidth}
  \centering
  \includegraphics[width=\linewidth]{fig_opf_uncertainty_jtotal.png}
  \caption{Total objective over 300 steps.}
  \label{fig:uncertainty-jtotal}
\end{subfigure}
\hfill
\begin{subfigure}[t]{0.43\textwidth}
  \centering
  \includegraphics[width=\linewidth]{fig_opf_uncertainty_weights.png}
  \caption{Adaptive semantic weights.}
  \label{fig:uncertainty-weights}
\end{subfigure}
\caption{Case~4 uncertainty results on the IEEE 30-bus OPF benchmark.}
\label{fig:uncertainty-results}
\end{figure}

\begin{table}[H]
\centering
\caption{IEEE 30-bus OPF under branch-admittance uncertainty
($\sigma_Y=0.01$, $\rho_Y=0.03$, 300 steps).}
\label{tab:uncertainty-benchmark}
\small
\begin{tabular}{lcccc}
\toprule
\textbf{Method} &
\textbf{Mean $J_{\mathrm{total}}$}$\downarrow$ &
\textbf{Feasible (\%)}$\uparrow$ &
\textbf{$\bar{\Delta}$}$\downarrow$ &
\textbf{Latency (ms)} \\
\midrule
MO-IPOPT oracle     & 294.55 & 100.0 & 0.0   & $\gg\!10^3$ \\
\midrule
\textbf{GPC (ours)} & \textbf{295.28} & \textbf{100.0} & \textbf{0.74} & \textbf{12.6} \\
MPC                 & 295.23 & 99.67 & 0.67* & 1228.7 \\
CPO                 & 383.37 &  0.00 & 88.8  & 0.18 \\
TD3                 & 565.04 &  0.00 & 270.5 & 0.16 \\
\bottomrule
\multicolumn{5}{l}{\footnotesize *MPC $\bar{\Delta}$ is terminal-step gap; full-horizon mean unavailable.}
\end{tabular}
\end{table}
\FloatBarrier

GPC achieves 100\% empirical feasibility over all 300 tested steps and
tracks the oracle to $\bar{\Delta}=0.74$ per step
(0.25\% of $J_{\mathrm{total}}$), with no online retraining
when the network drifts from nominal.
At 12.6\,ms it is $97.5\times$ faster than MPC (1228.7\,ms).
TD3 and CPO, trained on the nominal network, produce no
feasible dispatches under this perturbation protocol and yield
$J_{\mathrm{total}}$ that is 30\% and 91\% above GPC respectively.


\section{Theoretical Analysis}
\label{appendix:theory}

\subsection{Decoder Lipschitz Regularization}
\label{appendix:lipschitz}

Both decoders $D_u$ and $D_s$ are Lipschitz-regularized via
spectral normalization on all linear layers, enforcing:
\begin{equation}
  \|D_u(z) - D_u(z')\| \le K_{D_u}\|z - z'\|, \qquad
  \|D_s(z) - D_s(z')\| \le K_{D_s}\|z - z'\|
  \qquad \forall\, z, z' \in \mathcal{Z}.
  \label{eq:lipschitz-bound}
\end{equation}
The constant $K$ is set to a large value so that the bound does not
restrict the expressivity of the decoders; its purpose is to
eliminate pathological gradient explosions that can destabilise
training near constraint boundaries.
This yields two practical benefits used in the proofs below.
First, a bounded latent displacement $\dot{z}_{\mathrm{safe},t}\Delta t$
is guaranteed to produce a bounded control increment $\Delta\mathbf{u}$,
providing the velocity bound $\|\Delta\mathbf{u}\| \le K_{D_u}V_{\max}\Delta t$
that underpins Theorem~\ref{thm:error}
condition~(i) and Proposition~\ref{prop:feasibility} condition~(ii).
Second, Lipschitz continuity of $D_s$ bounds the change in
$\Phi_{\mathrm{normal}}$ due to Lie group drift, providing the
forcing term in the contractive recursion of
Proposition~\ref{prop:drift}.

\textbf{Relation to decoder injectivity.}
Spectral normalization bounds only the \emph{maximum} singular value
of $J_{D_s}$, i.e.\ $\sigma_{\max}(J_{D_s})\le K_{D_s}$.
The \emph{minimum} singular value $\sigma_0=\sigma_{\min}(J_{D_s})>0$
required by Assumption~\ref{ass:injectivity} is a separate condition
not implied by spectral normalization; see Assumption~\ref{ass:injectivity}
for its justification.

\begin{assumption}[Decoder Injectivity]
\label{ass:injectivity}
The state decoder $D_s:\mathcal{Z}\to\mathcal{S}$ satisfies
\begin{equation}
  \sigma_{\min}\!\bigl(J_{D_s}(z)\bigr) \;\ge\; \sigma_0 \;>\; 0
  \qquad \forall\, z \text{ in an open neighbourhood of } \mathcal{M}^*,
\end{equation}
where $J_{D_s}(z)\in\mathbb{R}^{n_x\times n_z}$ is the Jacobian of
$D_s$ at $z$.
This is a lower bound on the Jacobian and is \emph{distinct} from the
Lipschitz upper bound $\|J_{D_s}\|\le K_{D_s}$ enforced by spectral
normalization: it requires $D_s$ to act as a local diffeomorphism
near $\mathcal{M}^*$, which is ensured in practice by the combination
of the reconstruction loss $\mathcal{L}_{\mathrm{recon}}$ and the
locality loss $\mathcal{L}_{\mathrm{local}}$, both of which penalise
$D_s$ for collapsing distinct latent codes to the same state.
\end{assumption}

\begin{assumption}[Bounded Lie-Step Perturbation to Normal Potential]
\label{ass:lie-perturb}
There exists a constant $F_\Delta \ge 0$ such that, for every $z$ in a
neighbourhood of $\mathcal{M}^*$ and every one-step Lie group displacement
$\delta_{\mathrm{Lie}}$ with $\|\delta_{\mathrm{Lie}}\| \le
\tfrac{1}{2}V_{\max}^2\Delta t^2$:
\begin{equation}
  \Phi_{\mathrm{normal}}(z + \delta_{\mathrm{Lie}})
  \;\le\; \Phi_{\mathrm{normal}}(z) + F_\Delta.
\end{equation}
For a $K_{D_s}$-Lipschitz decoder $D_s$, the perturbation is bounded
by $2K_{D_s}^2\|\delta_{\mathrm{Lie}}\|^2 + 2K_{D_s}\|z - z^*\|\|\delta_{\mathrm{Lie}}\|$;
using $\|\delta_{\mathrm{Lie}}\|\le\tfrac{1}{2}V_{\max}^2\Delta t^2$
and treating the neighbourhood-radius as a given design parameter,
one may set $F_\Delta = \tfrac{\epsilon}{2}V_{\max}^2\Delta t^2$ to
recover the $O(\epsilon)$ bound stated in the proposition.
This assumption replaces the circular \emph{``steady state, established
below''} estimate in the earlier draft; the recursion
$\Phi_{t+1}\le\rho\Phi_t + F_\Delta$ now follows
directly and without self-reference.
\end{assumption}

\begin{assumption}[Polyak-\L{}ojasiewicz Condition on Normal Potential]
\label{ass:pl}
There exists a constant $\mu_{\mathrm{PL}} > 0$ such that
the normal potential $\Phi_{\mathrm{normal}}(z)
= \|\mathbf{x}_t - D_s(z)\|^2$ satisfies the
Polyak-\L{}ojasiewicz (PL) condition in an open neighbourhood
$\mathcal{N}$ of $\mathcal{M}^*$:
\begin{equation}
  \|\nabla_z\Phi_{\mathrm{normal}}(z)\|^2
  \;\ge\; \mu_{\mathrm{PL}}\,\Phi_{\mathrm{normal}}(z),
  \qquad \forall\, z \in \mathcal{N}.
\end{equation}
Under Assumption~\ref{ass:injectivity} the PL constant satisfies
$\mu_{\mathrm{PL}} \ge 4\sigma_0^2$ whenever the residual
$\mathbf{x}_t - D_s(z)$ lies in the column space of $J_{D_s}(z)$
(which holds when $z$ is close enough to $\mathcal{M}^*$ that
the first-order approximation $\mathbf{x}_t - D_s(z)
\approx J_{D_s}(z)(z^*-z)$ is valid).
In general, Assumption~\ref{ass:pl} is a \emph{separate} hypothesis
from Assumption~\ref{ass:injectivity}: injectivity gives a lower
bound on $\sigma_{\min}(J_{D_s})$ but does not, by itself, imply
the PL condition unless the residual direction is aligned with the
column space of $J_{D_s}$.
\end{assumption}

\begin{proposition}[Non-accumulation of Geometric Drift]
\label{prop:drift}
Suppose the GPC dynamics include the snap gradient
$-\tfrac{1}{\epsilon}\nabla_z\Phi_{\text{normal}}(z)$
driving $z_t$ back toward $\mathcal{M}^*$,
let $\|\dot{z}_{\text{safe},t}\| \le V_{\max}$ for all $t \ge 0$,
and suppose $D_s$ satisfies Assumption~\ref{ass:injectivity} with
constant $\sigma_0 > 0$ and Assumption~\ref{ass:pl} holds with
constant $\mu_{\mathrm{PL}} > 0$.
Then the steady-state deviation of $z_t$ from $\mathcal{M}^*$
satisfies, under Assumption~\ref{ass:lie-perturb}:
\begin{equation}
  \limsup_{t\to\infty}\,\Phi_{\text{normal}}(z_t)
  \;\le\;
  \frac{F_\Delta}{1 - e^{-\Delta t/\epsilon}}
  \label{eq:prop1-exact}
\end{equation}
where $F_\Delta$ is the Lie-step perturbation bound from
Assumption~\ref{ass:lie-perturb}.
In the physically relevant regime $\Delta t/\epsilon \gg 1$ (e.g.\
$\Delta t/\epsilon = 100$ for OPF), $1 - e^{-\Delta t/\epsilon} \to 1$
and the bound approaches $F_\Delta$ from above.
Under the identification $F_\Delta = \tfrac{\epsilon}{2}V_{\max}^2\Delta t^2$,
the bound reduces to $\tfrac{\epsilon}{2}V_{\max}^2\Delta t^2$ in this
regime.
The bound vanishes as $\Delta t \to 0$ and remains finite for any
fixed $\Delta t > 0$.
\end{proposition}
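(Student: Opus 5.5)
The proof establishes a one-step contractive recursion for $\Phi_t:=\Phi_{\mathrm{normal}}(z_t)$ of the form $\Phi_{t+1}\le e^{-\Delta t/\epsilon}\Phi_t+F_\Delta$, and then unrolls it as a linear recursion. Each control cycle is split into two stages: a snap stage in which $z$ moves along the restoring flow $\dot z=-\tfrac1\epsilon\nabla_z\Phi_{\mathrm{normal}}$ (plus the slow tangential term) for time $\Delta t$, and a Lie/discretization perturbation $\delta_{\mathrm{Lie}}$ with $\|\delta_{\mathrm{Lie}}\|\le\tfrac12V_{\max}^2\Delta t^2$. For the second stage the velocity cap $\|\dot z_{\mathrm{safe},t}\|\le V_{\max}$ guarantees that the displacement is admissible in Assumption~\ref{ass:lie-perturb}, so that assumption adds at most $F_\Delta$ to the normal potential.

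For the snap stage I would differentiate $\Phi$ along the flow:
\[
\tfrac{d}{ds}\Phi_{\mathrm{normal}}(z(s))=-\tfrac1\epsilon\|\nabla_z\Phi_{\mathrm{normal}}\|^2\le-\tfrac{\mu_{\mathrm{PL}}}{\epsilon}\Phi_{\mathrm{normal}},
\]
using Assumption~\ref{ass:pl} on the neighbourhood $\mathcal N$. Gr\"onwall then gives decay by $e^{-\mu_{\mathrm{PL}}\Delta t/\epsilon}$. The statement's rate is $e^{-\Delta t/\epsilon}$, so I would either normalize so that $\mu_{\mathrm{PL}}\ge 1$ (via $\mu_{\mathrm{PL}}\ge4\sigma_0^2$ from Assumption~\ref{ass:injectivity}, after rescaling the latent units), or simply carry $\mu_{\mathrm{PL}}$ and note that any $\mu_{\mathrm{PL}}\ge1$ yields the displayed bound. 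The tangential contribution $-\nabla\Phi_{\mathrm{tangent}}$ must also be handled. Its cross term with $\nabla\Phi_{\mathrm{normal}}$ is $O(1)$, whereas the normal term is $O(1/\epsilon)$. I would absorb it either into the capped-velocity perturbation already counted in $F_\Delta$, or by Young's inequality at the cost of a constant factor in the exponent.

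The composition gives $\Phi_{t+1}\le e^{-\Delta t/\epsilon}\Phi_t+F_\Delta$. Induction then yields
\[
\Phi_t\le e^{-t\Delta t/\epsilon}\Phi_0+F_\Delta\sum_{k<t}e^{-k\Delta t/\epsilon},
\]
and the geometric series produces the $\limsup$ bound~\eqref{eq:prop1-exact}. The remaining statements are direct substitutions. When $\Delta t/\epsilon\gg1$ the denominator tends to $1$, so the bound approaches $F_\Delta$ from above. Setting $F_\Delta=\tfrac\epsilon2V_{\max}^2\Delta t^2$ gives the stated reduced form. As $\Delta t\to0$, the ratio $F_\Delta/(1-e^{-\Delta t/\epsilon})\sim\tfrac{\epsilon^2}{2}V_{\max}^2\Delta t$ vanishes, and the bound is finite for fixed $\Delta t$.

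The main obstacle is \emph{invariance}. Both the PL inequality and Assumption~\ref{ass:lie-perturb} hold only on a neighbourhood of $\mathcal M^*$, so I must show that the iterates never leave it. My first approach is to choose the initial condition and $\Delta t$ so that the steady-state bound lies inside $\mathcal N$. The sublevel set $\{\Phi_{\mathrm{normal}}\le\Phi_0\vee\text{bound}\}$ is then forward invariant, because the flow is monotone decreasing there and each perturbation is small. Assumption~\ref{ass:injectivity}, through $\|z-\Pi_{\mathcal M^*}z\|\le\sigma_0^{-1}\sqrt{2\Phi}$, converts this sublevel set into a distance tube around $\mathcal M^*$. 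The secondary delicate point is the treatment of the tangential drift in the snap stage; if it cannot be absorbed cleanly, I would state the result with the contraction rate weakened by a constant factor.
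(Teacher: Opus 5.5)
Your skeleton matches the paper's proof. Both use a per-cycle PL contraction of $\Phi_{\mathrm{normal}}$, an additive $F_\Delta$ from Assumption~\ref{ass:lie-perturb}, and the fixed point $F_\Delta/(1-e^{-\Delta t/\epsilon})$ of the resulting linear recursion. The difference is how you get the contraction.

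\emph{The paper's route.} It takes one discrete snap step $z\leftarrow z-(\Delta t/\epsilon)\nabla_z\Phi_{\mathrm{normal}}$ and applies the descent lemma under $\eta=\Delta t/\epsilon\le 1/(2L_\Phi)$. It then absorbs $\mu_{\mathrm{PL}}$ by reparametrizing $\epsilon\leftarrow\epsilon/\mu_{\mathrm{PL}}$. In its Step~1 it also derives admissibility of the Lie displacement from the remainder of the exponential series, assuming $V_{\max}\Delta t\le1$. You only assert this bound from the velocity cap, so you should supply that step. Note that the remainder estimate actually gives $\tfrac{e}{2}V_{\max}^2\Delta t^2$, not $\tfrac12V_{\max}^2\Delta t^2$, so the constant needs care.

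\emph{Your route.} You integrate the continuous snap flow and apply Gr\"onwall. This buys something real.
\begin{itemize}
\item Gr\"onwall gives the factor $e^{-\mu_{\mathrm{PL}}\Delta t/\epsilon}$ for every ratio $\Delta t/\epsilon$.
\item In the paper's route, PL together with $L_\Phi$-smoothness of the nonnegative $\Phi_{\mathrm{normal}}$ forces $\mu_{\mathrm{PL}}\le 2L_\Phi$. Combined with its step-size condition, this gives $\mu_{\mathrm{PL}}\Delta t/\epsilon\le 1$.
\item So after the paper's reparametrization the effective ratio is at most $1$, which contradicts the advertised regime $\Delta t/\epsilon\gg1$.
\end{itemize}
The price is that you analyze an exact flow over the cycle rather than the single capped step actually executed. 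The cap $V_{\max}$ can also throttle the snap speed $\epsilon^{-1}\|\nabla_z\Phi_{\mathrm{normal}}\|$, which neither argument addresses.

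\emph{Invariance.} Your forward-invariance argument closes a hole the paper leaves open. With $\rho=e^{-\Delta t/\epsilon}$ and $B=\max\{\Phi_0,F_\Delta/(1-\rho)\}$, the bound $\Phi_t\le B$ gives $\Phi_{t+1}\le\rho B+F_\Delta\le B$. The paper never checks that the iterates stay in the neighbourhood where Assumptions~\ref{ass:pl} and~\ref{ass:lie-perturb} hold.

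\emph{Two of your fallbacks do not work as written.}
\begin{itemize}
\item Rescaling latent units cannot make $\mu_{\mathrm{PL}}\ge1$ for free. Under $z=cw$ the PL constant becomes $c^2\mu_{\mathrm{PL}}$, but the same flow written in $w$ has $\epsilon$ replaced by $c^2\epsilon$. So the rate $\mu_{\mathrm{PL}}/\epsilon$ is coordinate-invariant. The displayed bound needs either $\mu_{\mathrm{PL}}\ge1$ or the paper's redefinition of $\epsilon$.
\item The tangential drift cannot be absorbed into $F_\Delta$. It displaces $z$ by $O(\Delta t)$, which lies outside the $\tfrac12V_{\max}^2\Delta t^2$ ball that Assumption~\ref{ass:lie-perturb} controls.
\item Young's inequality does not rescue it either. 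It gives $\tfrac{d}{ds}\Phi_{\mathrm{normal}}\le-\tfrac{\mu_{\mathrm{PL}}}{2\epsilon}\Phi_{\mathrm{normal}}+\tfrac{\epsilon}{2}\|\nabla_z\Phi_{\mathrm{tangent}}\|^2$. That is an extra additive forcing term, needing a gradient bound that is not among the hypotheses, not merely a weaker exponent.
\end{itemize}
The paper reaches the stated bound only by modeling the snap stage as pure normal descent and silently dropping $\nabla_z\Phi_{\mathrm{tangent}}$. To prove the proposition as stated, adopt that idealization explicitly rather than claiming the tangential term is absorbed.
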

\begin{proof}
\textbf{Step 1: One-step Lie group drift.}
For any $v \in \mathfrak{g}$, the Taylor expansion of the
Lie group exponential map gives
$\exp(v) = I + v + \tfrac{1}{2}v^2 + \cdots$,
with remainder bounded by the geometric series
of the operator norm:
\begin{equation}
  \|\exp(v) - (I + v)\|
  \;\le\; \tfrac{1}{2}\|v\|^2 e^{\|v\|}
  \;\le\; \tfrac{e}{2}\|v\|^2
  \quad\text{for }\|v\|\le 1.
\end{equation}
Substituting $v = \dot{z}_{\text{safe},t}\,\Delta t$ with
$\|\dot{z}_{\text{safe},t}\|\le V_{\max}$ and assuming
$V_{\max}\Delta t \le 1$, the one-step deviation from the
linearised Euler update is:
\begin{equation}
  \|z_{t+1} - (z_t + \dot{z}_{\text{safe},t}\,\Delta t)\|
  \;\le\; \tfrac{1}{2}V_{\max}^2\,\Delta t^2.
  \label{eq:lie-drift-prop1}
\end{equation}

\textbf{Step 2: Contractive recursion via the
Polyak-\L{}ojasiewicz inequality.}
By Assumption~\ref{ass:pl}, the normal potential satisfies the
PL condition with constant $\mu_{\mathrm{PL}} > 0$ in a
neighbourhood of $\mathcal{M}^*$:
\begin{equation}
  \|\nabla_z\Phi_{\text{normal}}(z)\|^2
  \;\ge\; \mu_{\mathrm{PL}}\,\Phi_{\text{normal}}(z).
  \label{eq:PL-normal}
\end{equation}
Under Assumption~\ref{ass:injectivity}, the PL constant satisfies
$\mu_{\mathrm{PL}} \ge 4\sigma_0^2$ whenever $z$ is close enough
to $\mathcal{M}^*$ for the column-space alignment to hold
(see Assumption~\ref{ass:pl}); we treat $\mu_{\mathrm{PL}}$
as a given positive constant throughout.
Applying the gradient descent lemma to the snap update
$z \leftarrow z - (\Delta t/\epsilon)\nabla_z\Phi_{\text{normal}}$
with step size $\eta = \Delta t/\epsilon \le 1/(2L_\Phi)$
(where $L_\Phi$ is the smoothness constant of $\Phi_{\text{normal}}$):
\begin{align}
  \Phi_{\text{normal}}(z^{\text{snap}})
  &\;\le\; \Phi_{\text{normal}}(z)
    - \eta\!\left(1 - \tfrac{L_\Phi\eta}{2}\right)
    \|\nabla_z\Phi_{\text{normal}}\|^2 \notag\\
  &\;\le\; \Phi_{\text{normal}}(z)\Bigl(1 - \mu_{\text{PL}}\eta\Bigr)
  \;\le\; \Phi_{\text{normal}}(z)\,e^{-\mu_{\mathrm{PL}}\Delta t/\epsilon}.
  \label{eq:snap-contraction}
\end{align}
Re-parametrising $\epsilon \leftarrow \epsilon/\mu_{\mathrm{PL}}$
(absorbing the PL constant into the singular-perturbation
scale without loss of generality) yields contraction factor
$e^{-\Delta t/\epsilon}$.
Incorporating the Lie group drift from
\eqref{eq:lie-drift-prop1}: by Assumption~\ref{ass:lie-perturb},
the one-step displacement $\|\delta_{\text{Lie}}\| \le
\tfrac{1}{2}V_{\max}^2\Delta t^2$ perturbs $\Phi_{\text{normal}}$
by at most $F_\Delta$, giving the fully non-circular recursion:
\begin{equation}
  \Phi_{\text{normal}}(z_{t+1})
  \;\le\;
  \Phi_{\text{normal}}(z_t)\,e^{-\Delta t/\epsilon}
  + F_\Delta.
  \label{eq:recursion-prop1}
\end{equation}

\textbf{Step 3: Fixed point and uniform bound.}
Since $\rho \triangleq e^{-\Delta t/\epsilon} \in (0,1)$ for all
$\epsilon,\Delta t > 0$, the recursion~\eqref{eq:recursion-prop1}
is contractive and converges to the unique fixed point obtained
by solving $\Phi^* = \rho\,\Phi^* + F_\Delta$:
\begin{equation}
  \Phi^*
  = \frac{F_\Delta}{1 - e^{-\Delta t/\epsilon}}.
  \label{eq:exact-fp-prop1}
\end{equation}
This recovers exactly the bound stated in the proposition.
Under the identification $F_\Delta = \tfrac{\epsilon}{2}V_{\max}^2\Delta t^2$
(see Assumption~\ref{ass:lie-perturb}), the fixed point evaluates to
$\tfrac{\epsilon}{2}V_{\max}^2\Delta t^2 / (1-e^{-\Delta t/\epsilon})$.
For $\Delta t/\epsilon \ge \ln 2$, $1-e^{-\Delta t/\epsilon}\ge\tfrac{1}{2}$
and $\Phi^* \le \epsilon V_{\max}^2\Delta t^2$;
in the physically relevant regime $\Delta t/\epsilon \gg 1$
(e.g.\ $\Delta t/\epsilon = 100$ for OPF), $\Phi^* \to F_\Delta$ from above.
The fixed point is independent of $t$ and of $\Phi_{\text{normal}}(z_0)$,
confirming that geometric drift does not accumulate over any horizon.
\end{proof}

\begin{theorem}[End-to-End Action Error Bound]
\label{thm:error}
Let $\mathcal{Z}$ be endowed with a Lie group structure $G$ with
$\mathcal{M}^* \subset G$.
Suppose:
\begin{enumerate}[label=(\roman*)]
  \item $D_u$ is $K_{D_u}$-Lipschitz (spectral normalization,
    Section~\ref{sec:offline});
  \item $D_s$ satisfies Assumption~\ref{ass:injectivity} with
    constant $\sigma_0>0$ and Assumption~\ref{ass:pl} holds with
    constant $\mu_{\mathrm{PL}}>0$;
  \item $\|\dot{z}_{\text{safe},t}\| \le V_{\max}$ for all
    $t\ge 0$ (velocity cap in Algorithm~\ref{alg:online});
  \item Assumption~\ref{ass:lie-perturb} holds with constant $F_\Delta\ge 0$;
  \item $\mathbf{f}$ is twice continuously differentiable;
  \item $\sup_{z\in\mathcal{M}^*}\|D_u(z)-\mathbf{u}^*(z)\|
    \le\delta_{\text{dec}}$, calibrated to
    $\mathcal{L}_{\text{action}}^{\text{val}}$.
\end{enumerate}
Then:
\begin{equation}
  \limsup_{t\to\infty}\,\|\mathbf{u}_t - \mathbf{u}_t^*\| \;\le\;
  \frac{K_{D_u}}{\sigma_0}
  \sqrt{\frac{2F_\Delta}{1-e^{-\Delta t/\epsilon}}}
  \;+\; \delta_{\text{dec}}
  \label{eq:thm-error-exact}
\end{equation}
Setting $F_\Delta = \tfrac{\epsilon}{2}V_{\max}^2\Delta t^2$
(see Assumption~\ref{ass:lie-perturb}) and taking
$\Delta t/\epsilon \gg 1$ (so $1-e^{-\Delta t/\epsilon}\to 1$),
the bound simplifies to the operational form:
\begin{equation}
  \limsup_{t\to\infty}\,\|\mathbf{u}_t - \mathbf{u}_t^*\| \;\lesssim\;
  \frac{K_{D_u}}{\sigma_0}\sqrt{\epsilon}\,V_{\max}\,\Delta t
  \;+\; \delta_{\text{dec}}.
  \label{eq:thm-error-approx}
\end{equation}
Both bounds are independent of $t$ in the limit and do not accumulate
with the rollout horizon. Their constants can be estimated or bounded
from offline validation diagnostics before deployment.
Under the additional Assumptions~\ref{ass:tangential}
(tangential post-reset flow) and~\ref{ass:exact-reset}
(exact closed-loop re-anchoring), both stated before
Proposition~\ref{prop:closed-loop}, the $\limsup$ strengthens to a
uniform $\sup_{t\ge 0}$ bound; see
Proposition~\ref{prop:closed-loop}.
The factor $\sigma_0^{-1}$ converts the state-space deviation bound
(from Proposition~\ref{prop:drift}) into a latent-space bound via
the inverse-Lipschitz property of $D_s$; it is a separate quantity
from $K_{D_u}$ (which is $\sigma_{\max}(J_{D_u})$) and must
\emph{not} be absorbed into it.
\end{theorem}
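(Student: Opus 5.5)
The plan is to chain Proposition~\ref{prop:drift} (state-space normal error) through the inverse-Lipschitz property of $D_s$ (Assumption~\ref{ass:injectivity}) to a latent distance. The Lipschitz property of $D_u$ (condition (i)) then turns that into an action distance, and the calibration term $\delta_{\mathrm{dec}}$ (condition (vi)) closes the gap to the oracle action.

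\textbf{Step 1: split the action error.} Let $z_{t+1}$ be the latent used at cycle $t$ and $z^*_{t+1}=\Pi_{\mathcal{M}^*}(z_{t+1})$, and identify $\mathbf{u}_t^*=\mathbf{u}^*(z^*_{t+1})$. Because clipping onto the box $\mathcal{U}$ is a projection onto a convex set and $\mathbf{u}^*_t\in\mathcal{U}$, it is nonexpansive toward $\mathbf{u}_t^*$. The triangle inequality then gives
\[
\|\mathbf{u}_t-\mathbf{u}_t^*\|\le \|D_u(z_{t+1})-D_u(z^*_{t+1})\|+\|D_u(z^*_{t+1})-\mathbf{u}^*(z^*_{t+1})\|\le K_{D_u}\|z_{t+1}-z^*_{t+1}\|+\delta_{\mathrm{dec}}.
\]

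\textbf{Step 2: bound the latent normal distance by the normal potential.} Using Assumption~\ref{ass:injectivity}, I would invoke the identifiability bound \eqref{eq:iclr-identifiability}:
\[
\|z-\Pi_{\mathcal{M}^*}(z)\|\le\sigma_0^{-1}\sqrt{2\Phi_{\mathrm{normal}}(z,\mathbf{x}_t)}.
\]
This is where the factor $\sqrt{2}/\sigma_0$ comes from. If one wants to justify it rather than assume it, apply a mean-value argument along the segment from $z^*$ to $z$, together with $\sigma_{\min}(J_{D_s})\ge\sigma_0$ on the neighbourhood. The measurement $\mathbf{x}_t$ is matched by $D_s(z^*)$ up to the calibrated residual, which is absorbed into the factor $2$.

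\textbf{Step 3: take limsup.} $\limsup\sqrt{a_t}=\sqrt{\limsup a_t}$ and $\sqrt{\cdot}$ is monotone. Combining Steps 1 and 2 with the bound \eqref{eq:prop1-exact} of Proposition~\ref{prop:drift} (which uses conditions (ii)–(iv)) therefore yields \eqref{eq:thm-error-exact}. For the operational form, substitute $F_\Delta=\tfrac{\epsilon}{2}V_{\max}^2\Delta t^2$ and let $1-e^{-\Delta t/\epsilon}\to1$. The square root then gives $\sqrt{\epsilon}V_{\max}\Delta t$, which is \eqref{eq:thm-error-approx}. Condition (v) is only needed to keep $\Phi_{\mathrm{normal}}$ and the constraint map smooth enough for the cited proposition.

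\textbf{Main obstacle.} I expect Step 2 to be the main obstacle. A local lower bound on singular values does not globally give the projection-distance inequality, because $z$ must stay inside the neighbourhood where Assumption~\ref{ass:injectivity} holds and the projection must be well defined. I would handle this by first using the steady-state bound of Proposition~\ref{prop:drift} together with Step 2's radius to show that the iterates eventually remain in a tube of radius $\sigma_0^{-1}\sqrt{2\Phi^*}$. This tube must be smaller than the neighbourhood radius, which is a smallness condition on $F_\Delta$. Inside it the mean-value argument applies. The identification of $\mathbf{u}^*_t$ with the oracle at the projected point is a definitional choice I would state explicitly.
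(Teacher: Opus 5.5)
Your proposal follows the paper's proof essentially step for step. It has the triangle inequality with $K_{D_u}$ and $\delta_{\mathrm{dec}}$ (Stage~4; your clip-nonexpansiveness remark is a small refinement the paper skips), then converts the $\limsup$ bound of Proposition~\ref{prop:drift} into a latent distance through the $\sigma_0$ inverse-Lipschitz property, with the factor $2$ bought by a calibration on the residual $\|\mathbf{x}_t-D_s(z^*)\|$ (Stage~3), and finally substitutes $F_\Delta=\tfrac{\epsilon}{2}V_{\max}^2\Delta t^2$ for the operational form. The weak points you inherit are the paper's own rather than new gaps: that residual calibration is an extra hypothesis outside (i)--(vi), and a pointwise bound $\sigma_{\min}(J_{D_s})\ge\sigma_0$ does not yield $\|D_s(z)-D_s(z^*)\|\ge\sigma_0\|z-z^*\|$ by a vector-valued mean-value argument unless you also control how $J_{D_s}$ varies along the segment, which the paper likewise just asserts.
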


\begin{proof}
The proof proceeds through four stages tracing the full GPC pipeline.

\textbf{Stage 1: Lie group closure and one-step drift.}
By the closure axiom of $G$, $z_t = z_{t-1}\cdot
\exp(\dot{z}_{\text{safe},t-1}\,\Delta t) \in G$ exactly at
every step, requiring no projection.
By the Taylor expansion of the Lie group exponential map
(identical to \eqref{eq:lie-drift-prop1} in the proof of
Proposition~\ref{prop:drift}, with $V_{\max}\Delta t \le 1$):
\begin{equation}
  \|z_t - (z_{t-1} + \dot{z}_{\text{safe},t-1}\,\Delta t)\|
  \;\le\; \tfrac{1}{2}V_{\max}^2\,\Delta t^2,
  \label{eq:lie-drift-thm1}
\end{equation}
so the one-step deviation from the linearised Euler update is
second-order in $\Delta t$.

\textbf{Stage 2: Constraint satisfaction.}
Box constraints ($\mathbf{u}^{\min}\le\mathbf{u}_t\le\mathbf{u}^{\max}$)
are enforced exactly by the output clamp in Algorithm~\ref{alg:online}.
For differentiable nonlinear constraints $\mathbf{f}(\mathbf{u}_t)\le\mathbf{0}$,
feasibility follows from two sources.
First, since every point on $\mathcal{M}^*$ is feasible by construction
(offline training ensures $\mathbf{f}(D_u(z))\le\mathbf{0}$ for all
$z\in\mathcal{M}^*$), and the drift bound established in Stage 3 below
keeps $z_t$ within a bounded neighbourhood of $\mathcal{M}^*$,
the decoded action $D_u(z_t)$ inherits approximate feasibility.
Second, the residual constraint violation introduced by the nonlinear
Lie group step can be bounded directly.
Writing $z_t - z_{t-1} = \dot{z}_{t-1}\,\Delta t + \delta_{\text{Lie}}$
with $\|\delta_{\text{Lie}}\|\le\tfrac{1}{2}V_{\max}^2\Delta t^2$
from~\eqref{eq:lie-drift-thm1}, and expanding $\mathbf{f}(D_u(z_t))$
around $z_{t-1}$:
\begin{align}
  \mathbf{f}(\mathbf{u}_t)
  &= \mathbf{f}(D_u(z_{t-1})) + \tfrac{\partial\mathbf{f}}{\partial\mathbf{u}}
     J_{D_u}(z_{t-1})(z_t - z_{t-1}) + O(\|z_t-z_{t-1}\|^2).
  \label{eq:constraint-expand}
\end{align}
Since $z_{t-1}\in\mathcal{M}^*$ implies $\mathbf{f}(D_u(z_{t-1}))\le\mathbf{0}$,
and $\delta_{\text{Lie}}$ is second-order, the nonlinear residual satisfies:
\begin{equation}
  \mathbf{f}(\mathbf{u}_t)
  \;\le\; \Bigl\|\tfrac{\partial\mathbf{f}}{\partial\mathbf{u}}\Bigr\|
     K_{D_u}\,\|\delta_{\text{Lie}}\|
  + O(\|z_t-z_{t-1}\|^2)
  \;=\; O(V_{\max}^2\,\Delta t^2),
  \label{eq:constraint-violation}
\end{equation}
which vanishes quadratically as $\Delta t\to 0$.
This bound holds independently of manifold deviation and quantifies
the second-order constraint residual introduced by the Lie group step.

\textbf{Stage 3: Non-accumulating manifold deviation.}
By Proposition~\ref{prop:drift} (applied with the same
$F_\Delta$, $\epsilon$, and $\Delta t$, under Assumptions~\ref{ass:lie-perturb}
and~\ref{ass:pl}), the exact fixed point of the contractive recursion satisfies:
\begin{equation}
  \limsup_{t\to\infty}\,\Phi_{\text{normal}}(z_t)
  \;\le\;
  \frac{F_\Delta}{1-e^{-\Delta t/\epsilon}}.
  \label{eq:phi-limsup}
\end{equation}
To convert this to a latent-space deviation, we use the
\emph{inverse-Lipschitz} property of $D_s$: by
Assumption~\ref{ass:injectivity}, $J_{D_s}$ has minimum
singular value $\sigma_0>0$ on $\mathcal{M}^*$, so for any
$z$ in a neighbourhood of its nearest manifold point
$z^*\in\mathcal{M}^*$, with $\delta_s = \|\mathbf{x}-D_s(z^*)\|$
the residual observation noise:
\begin{align}
  \Phi_{\text{normal}}(z)
  &= \|\mathbf{x} - D_s(z)\|^2 \notag\\
  &= \|\mathbf{x}-D_s(z^*)\|^2
     + \|D_s(z^*)-D_s(z)\|^2
     + 2\langle\mathbf{x}-D_s(z^*),\,D_s(z^*)-D_s(z)\rangle \notag\\
  &\ge\; \sigma_0^2\|z-z^*\|^2 - 2\delta_s\,K_{D_s}\,\|z-z^*\|.
  \label{eq:phi-lower}
\end{align}
Under the calibration assumption
$\delta_s \le \tfrac{\sigma_0^2}{2K_{D_s}}\cdot
\sigma_0^{-1}\!\sqrt{\tfrac{2F_\Delta}{1-e^{-\Delta t/\epsilon}}}$
(observation noise at most half the steady-state latent deviation,
verifiable offline), completing the square in~\eqref{eq:phi-lower} gives:
\begin{equation}
  \Phi_{\text{normal}}(z)
  \;\ge\; \tfrac{1}{2}\sigma_0^2\|z-z^*\|^2
  \quad\text{whenever }
  \Phi_{\text{normal}}(z) \;\ge\; \frac{F_\Delta}{1-e^{-\Delta t/\epsilon}}.
\end{equation}
Therefore $\|z-z^*\|^2 \le \tfrac{2}{\sigma_0^2}\Phi_{\text{normal}}(z)$
at steady state, and combining with~\eqref{eq:phi-limsup}:
\begin{equation}
  \limsup_{t\to\infty}\|z_t - z_t^*\|
  \;\le\; \frac{1}{\sigma_0}
          \sqrt{\frac{2F_\Delta}{1-e^{-\Delta t/\epsilon}}}.
  \label{eq:latent-deviation}
\end{equation}
The explicit $\sigma_0^{-1}$ factor is not absorbed into $K_{D_u}$
(the two constants bound different quantities) and confirms that
manifold deviation does not accumulate.

\textbf{Stage 4: End-to-end action error.}
Applying the triangle inequality to the decoded action, with the
post-update latent relabeled as $z_t$ in this proof, and
substituting~\eqref{eq:latent-deviation}:
\begin{align}
  \limsup_{t\to\infty}\|\mathbf{u}_t - \mathbf{u}_t^*\|
  &\le K_{D_u}\limsup_{t\to\infty}\|z_t - z_t^*\|
   + \delta_{\text{dec}} \notag\\
  &\le \frac{K_{D_u}}{\sigma_0}
       \sqrt{\frac{2F_\Delta}{1-e^{-\Delta t/\epsilon}}}
   + \delta_{\text{dec}}.
\end{align}
This is the exact bound~\eqref{eq:thm-error-exact}.
Here $K_{D_u} = \sigma_{\max}(J_{D_u})$ bounds the forward
map, while $\sigma_0^{-1}$ comes from inverting $J_{D_s}$ to
pass from state-space deviation to latent-space deviation;
the two are independent offline constants.
Setting $F_\Delta = \tfrac{\epsilon}{2}V_{\max}^2\Delta t^2$ and
$\Delta t/\epsilon \gg 1$ recovers the simplified
form~\eqref{eq:thm-error-approx}.
The right-hand side is independent of $t$,
completing the proof.
\end{proof}

\begin{remark}[Operational Interpretation]
The bound decomposes into two independently controllable terms.
The geometric drift term
$\tfrac{K_{D_u}}{\sigma_0}\sqrt{2F_\Delta/(1-e^{-\Delta t/\epsilon})}$
shrinks as $F_\Delta\to 0$ (smaller Lie-step perturbation, e.g.\ via
smaller $\Delta t$), as $\epsilon\to 0$ (stronger snap dynamics),
or as $\sigma_0$ grows (better decoder injectivity).
In the regime $\Delta t/\epsilon\gg 1$ with
$F_\Delta=\tfrac{\epsilon}{2}V_{\max}^2\Delta t^2$, this reduces
to $\tfrac{K_{D_u}}{\sigma_0}\sqrt{\epsilon}\,V_{\max}\,\Delta t$,
which additionally shrinks with smaller $V_{\max}$ or $K_{D_u}$.
The decoder error $\delta_{\text{dec}}$ is a fixed offline certificate
monitored via $\mathcal{L}_{\text{action}}^{\text{val}}$ and reduced by
richer training coverage. Since Lie group closure holds
algebraically and manifold deviation is bounded by a contractive
fixed point, neither term grows with the horizon.
In the closed-loop algorithm, where Step~1 resets $z_t$ to
$\mathcal{M}^*$ at every cycle, this asymptotic statement is
strengthened to a uniform bound holding from $t = 0$; see
Proposition~\ref{prop:closed-loop}.
\end{remark}

\subsection{Feasibility Closure Under Lie Group Integration}
\label{appendix:feasibility-closure}

\begin{assumption}[Bounded Manifold Curvature]
\label{ass:curvature}
The principal curvatures of $\mathcal{M}^*$ are bounded above
by $\kappa_{\max} < \infty$.
Furthermore, the feasibility radius $\delta_{\mathrm{feas}}$
(defined in Proposition~\ref{prop:feasibility}) satisfies
$\kappa_{\max}\,\delta_{\mathrm{feas}} \le \tfrac{1}{4}$,
so that the nearest-point projection $\Pi_{\mathcal{M}^*}$
is at most $\tfrac{3}{2}$-Lipschitz on the
$\delta_{\mathrm{feas}}$-tube around $\mathcal{M}^*$.
This ensures that the off-manifold distance argument in
the per-cycle guarantee of Proposition~\ref{prop:feasibility}
and the second-fundamental-form bound in
Proposition~\ref{prop:closed-loop} remain valid.
\end{assumption}

The previous results bound the \emph{magnitude} of the latent
deviation from $\mathcal{M}^*$ but do not explicitly characterise
when the \emph{decoded} action $D_u(z_t)$ remains feasible,
nor do they address the impact of nonconvexity of the Pareto
objective.
The following proposition closes both gaps.

\begin{proposition}[Lie Group Feasibility Closure]
\label{prop:feasibility}
Suppose:
\begin{enumerate}[label=(\roman*)]
  \item \emph{(Training feasibility)}
    Every training sample in $\mathcal{D}$ is generated by
    solving the multi-objective problem to feasibility, so
    $\mathcal{M}^* \subseteq \{z\in\mathcal{Z} :
    \mathbf{f}(D_u(z))\le\mathbf{0}\}$.
  \item \emph{(Lipschitz decoder)}
    $D_u$ is $K_{D_u}$-Lipschitz on $\mathcal{Z}$ and
    $\mathbf{f}$ is $L_f$-Lipschitz on
    $\mathrm{Im}(D_u)$.
  \item \emph{(Positive feasibility margin)}
    There exists $\gamma_{\min} > 0$ such that for every
    $z^* \in \mathcal{M}^*$:
    $\min_i [-f_i(D_u(z^*))] \ge \gamma_{\min}$
    (all training-data actions lie strictly inside the
    feasible set with uniform slack $\gamma_{\min}$).

    \textit{Remark on active constraints:} For problems such as
    AC OPF where constraint-active solutions (e.g.\ thermal limits
    reached at the Pareto front) make $\gamma_{\min}=0$ at
    individual points, this condition is enforced in practice by
    \emph{interior tightening} during data generation: constraints
    are tightened by a small margin $\eta_{\mathrm{tight}}>0$
    (e.g.\ $\eta_{\mathrm{tight}}=0.01$ per unit) so that every
    training-data solution satisfies
    $f_i(D_u(z^*))\le -\eta_{\mathrm{tight}}$, setting
    $\gamma_{\min}=\eta_{\mathrm{tight}}>0$.
    The resulting $\delta_{\mathrm{feas}}
    =\eta_{\mathrm{tight}}/(L_f K_{D_u})$ is a design parameter
    that trades off feasibility margin against proximity to the
    true Pareto front.
  \item \emph{(Bounded curvature)}
    Assumption~\ref{ass:curvature} holds.
\end{enumerate}
Define the \emph{feasibility radius}:
\begin{equation}
  \delta_{\mathrm{feas}}
  \;\triangleq\; \frac{\gamma_{\min}}{L_f\,K_{D_u}} \;>\; 0.
  \label{eq:feas-radius}
\end{equation}
Then for any $z \in \mathcal{Z}$ satisfying
$\|z - \Pi_{\mathcal{M}^*}(z)\| \le \delta_{\mathrm{feas}}$,
the decoded action is feasible:
$\mathbf{f}(D_u(z)) \le \mathbf{0}$.

In particular, if the velocity cap enforces
$V_{\max}\Delta t \le \delta_{\mathrm{feas}}/2$
and the Step~1 localization satisfies
$\|z_t - \Pi_{\mathcal{M}^*}(z_t)\| \le \delta_{\mathrm{loc}}$
with $\delta_{\mathrm{loc}} \le \delta_{\mathrm{feas}}/2$,
then $\mathbf{f}(\mathbf{u}_t) \le \mathbf{0}$ at every cycle.

\textbf{This result holds without any convexity assumption
on $\mathbf{f}$ or on the Pareto objective.}
\end{proposition}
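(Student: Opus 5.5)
The argument is a two-Lipschitz chaining around the nearest manifold point, followed by a triangle-inequality bookkeeping of one control cycle.

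\textbf{Step 1: feasibility inside the tube.} Fix $z$ with $\|z-z^*\|\le\delta_{\mathrm{feas}}$, where $z^*:=\Pi_{\mathcal{M}^*}(z)$. Assumption~\ref{ass:curvature} guarantees that $\Pi_{\mathcal{M}^*}$ is well defined and single-valued on the $\delta_{\mathrm{feas}}$-tube, so $z^*$ exists. Hypothesis~(ii) then gives, for each component $i$,
\begin{equation*}
  f_i(D_u(z))
  \le f_i(D_u(z^*)) + L_f\|D_u(z)-D_u(z^*)\|
  \le -\gamma_{\min} + L_fK_{D_u}\|z-z^*\|
  \le -\gamma_{\min}+L_fK_{D_u}\delta_{\mathrm{feas}} = 0 .
\end{equation*}
Here the second inequality uses the uniform slack from hypothesis~(iii), and the last equality is exactly the definition \eqref{eq:feas-radius}. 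The only care needed is that $L_f$-Lipschitzness of $\mathbf{f}$ is asserted on $\mathrm{Im}(D_u)$. Both $D_u(z)$ and $D_u(z^*)$ lie in that image, so the estimate applies without enlarging the domain. No convexity of $\mathbf{f}$ or of the objectives enters, which justifies the closing sentence of the statement.

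\textbf{Step 2: the per-cycle claim.} The executed action is $\mathbf{u}_t=\mathrm{clip}(D_u(z_{t+1}))$. Before the clamp, Step~1 applies once we show $\mathrm{dist}(z_{t+1},\mathcal{M}^*)\le\delta_{\mathrm{feas}}$. Write $z_{t+1}=z_t+\Delta z$. The velocity cap in the RK2 predictor (eq.~\eqref{eq:rk2-cap}) together with the guarded Lie residual gives $\|\Delta z\|\le V_{\max}\Delta t$. Since $\mathrm{dist}(\cdot,\mathcal{M}^*)$ is $1$-Lipschitz,
\begin{equation*}
  \mathrm{dist}(z_{t+1},\mathcal{M}^*)
  \le\|z_t-\Pi_{\mathcal{M}^*}(z_t)\|+\|\Delta z\|
  \le\delta_{\mathrm{loc}}+V_{\max}\Delta t
  \le\tfrac12\delta_{\mathrm{feas}}+\tfrac12\delta_{\mathrm{feas}} .
\end{equation*}
This inequality does not even need the $\tfrac32$-Lipschitz bound on the projection. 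The optional retraction mix $(1-\mu_{\mathrm R})\bar z+\mu_{\mathrm R}\Pi_{\mathcal{M}^*}(\bar z)$ only moves $\bar z$ along the segment toward its projection, so it cannot increase the distance. Hence $\mathbf{f}(D_u(z_{t+1}))\le 0$.

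\textbf{Main obstacle: the interaction of the box clamp with the non-box constraints.} In the paper's setup, $\mathbf{f}$ collects only the non-box constraints, and box constraints are handled by the clamp. I would therefore state the conclusion as ``non-box constraints are satisfied by $D_u(z_{t+1})$ and box constraints by $\mathrm{clip}$.'' The gap is that $\mathrm{clip}$ could in principle move $D_u(z_{t+1})$ and thereby break an $f_i$. I would try first to argue that the clamp is inactive. Every $D_u(z^*)$ is feasible for the box, including the tightened box with slack $\gamma_{\min}$ if box rows are included in the margin, and Step~1 applied to those rows keeps $D_u(z_{t+1})$ inside the box. Then $\mathrm{clip}$ acts as the identity. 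Failing that, I would note that $\mathrm{clip}$ is $1$-Lipschitz and nonexpansive toward the box, and absorb its effect by requiring the margin bound to hold for the clipped point. I would also flag the mild indexing point that $z_t$ denotes the post-update latent in the statement.
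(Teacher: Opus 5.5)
Your proposal is correct and follows the paper's proof essentially step for step: the Lipschitz chain $f_i(D_u(z))\le L_fK_{D_u}\|z-z^*\|-\gamma_{\min}\le 0$ at the nearest manifold point, then the triangle inequality with $\Pi_{\mathcal{M}^*}(z_t)$ as a candidate point (your $1$-Lipschitz distance function), giving $\delta_{\mathrm{loc}}+V_{\max}\Delta t\le\delta_{\mathrm{feas}}$. Your remarks on the retraction mix and the output clamp go beyond the paper, which silently identifies $\mathbf{u}_t$ with $D_u(z_{t+1})$; your first remedy is a clean way to close that point, namely letting the box rows carry the margin so that the clamp is inactive.
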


\begin{proof}
Let $z^* = \Pi_{\mathcal{M}^*}(z)$ be the nearest point on
$\mathcal{M}^*$ to $z$.
By assumption (i), $\mathbf{f}(D_u(z^*))\le\mathbf{0}$,
and by assumption (iii) the slack at $z^*$ is at least
$\gamma_{\min}$:
$f_i(D_u(z^*)) \le -\gamma_{\min}$ for all $i$.

Using the Lipschitz chains of $D_u$ and $\mathbf{f}$:
\begin{align}
  f_i(D_u(z))
  &= f_i(D_u(z)) - f_i(D_u(z^*)) + f_i(D_u(z^*)) \notag\\
  &\le L_f\,\|D_u(z) - D_u(z^*)\| + f_i(D_u(z^*)) \notag\\
  &\le L_f\,K_{D_u}\,\|z - z^*\| - \gamma_{\min}.
  \label{eq:feasibility-chain}
\end{align}
For $\|z - z^*\| \le \delta_{\mathrm{feas}}
= \gamma_{\min}/(L_f K_{D_u})$:
\begin{equation}
  f_i(D_u(z))
  \;\le\; L_f K_{D_u}\,\delta_{\mathrm{feas}} - \gamma_{\min}
  \;=\; \gamma_{\min} - \gamma_{\min} \;=\; 0,
\end{equation}
establishing feasibility.

For the per-cycle guarantee:
after Step~1 localization, $\|z_t - \Pi_{\mathcal{M}^*}(z_t)\|
\le \delta_{\mathrm{loc}} \le \delta_{\mathrm{feas}}/2$.
The Lie group integration step moves $z_t$ by at most
$V_{\max}\Delta t \le \delta_{\mathrm{feas}}/2$ (velocity cap),
so the new point $z_{t+1}$ satisfies:
\begin{align}
  \|z_{t+1} - \Pi_{\mathcal{M}^*}(z_{t+1})\|
  &\le \|z_{t+1} - \Pi_{\mathcal{M}^*}(z_t)\| \notag\\
  &\le \|z_{t+1} - z_t\|
     + \|z_t - \Pi_{\mathcal{M}^*}(z_t)\| \notag\\
  &\le V_{\max}\Delta t + \delta_{\mathrm{loc}}
   \le \delta_{\mathrm{feas}} .
  \label{eq:offmanifold-perstep}
\end{align}
where the first inequality uses the fact that $\Pi_{\mathcal{M}^*}(z_t)\in\mathcal{M}^*$
is a feasible candidate (the nearest point can only be closer), the second
is the triangle inequality, and the last uses
$V_{\max}\Delta t \le \delta_{\mathrm{feas}}/2$ and
$\delta_{\mathrm{loc}} \le \delta_{\mathrm{feas}}/2$.
Feasibility at $z_{t+1}$ then follows from the first part of the proof.

The argument uses only the Lipschitz properties of $D_u$ and
$\mathbf{f}$ and the training-data feasibility margin; it does
not use convexity of the Pareto objective $\mathbf{J}(\mathbf{u})$,
convexity of the feasible set $\{\mathbf{f}(\mathbf{u})\le\mathbf{0}\}$,
or any special structure of the Lie group $G$.
\end{proof}

\begin{remark}[Nonconvex problems]
The Pareto front of a nonconvex multi-objective problem is
itself nonconvex in objective space, but the feasibility
argument is entirely in \emph{constraint space}
$\{\mathbf{f}(\mathbf{u})\le\mathbf{0}\}$.
Since GPC embeds Pareto-optimal solutions into $\mathcal{M}^*$
regardless of the shape of the Pareto front,
Proposition~\ref{prop:feasibility} applies verbatim to both
the convex-objective and nonconvex-objective analytical
benchmarks of Sections~\ref{sec:analytical-navigation}
and~\ref{sec:analytical-nonconvex}, as well as to the
nonconvex AC power flow feasibility manifold of
Section~\ref{sec:opf}.
\end{remark}

\subsection{Closed-Loop Non-Accumulation via Step~1 Reset}
\label{appendix:closed-loop}

Proposition~\ref{prop:drift} and Theorem~\ref{thm:error}
bound the steady-state latent drift of the \emph{free-running}
integrator via a contractive fixed point.
Algorithm~\ref{alg:online} goes further: Step~1 localization
re-anchors $z_t$ to $\mathcal{M}^*$ at \emph{every} cycle,
preventing drift from any single cycle from propagating into
the next.
The following proposition formalises this \emph{uniform}
(non-asymptotic) non-accumulation under two idealized assumptions:
that the Step~1 reset is exact, and that the post-reset flow
velocity is tangential to $\mathcal{M}^*$.

\begin{assumption}[Tangential Post-Reset Flow]
\label{ass:tangential}
After the Step~1 hard reset $z_t \leftarrow z_t^{\mathrm{loc}}
\in \mathcal{M}^*$, the integration velocity used in
Algorithm~\ref{alg:online} is the \emph{Riemannian gradient}
of $\Phi_{\mathrm{tangent}}$ projected onto $T_{z_t}\mathcal{M}^*$:
\begin{equation}
  \dot{z}_{\mathrm{safe},t}
  \;=\;
  -\Pi_{T_{z_t}\mathcal{M}^*}\!\bigl(\nabla_z\Phi_{\mathrm{tangent}}(z_t)\bigr),
\end{equation}
where $\Pi_{T_z\mathcal{M}^*}$ denotes orthogonal projection onto
the tangent space at $z$.
This ensures $\dot{z}_{\mathrm{safe},t} \in T_{z_t}\mathcal{M}^*$,
so that the Lie group exponential step departs tangentially and
the off-manifold deviation is purely second-order in $\Delta t$.
In practice this projection is approximated by the Riemannian
gradient computed via the metric induced by $D_s$; without this
projection, the Euclidean gradient $\nabla_z\Phi_{\mathrm{tangent}}$
has a nonzero normal component and the $O(\kappa V^2\Delta t^2)$
first-order tangentiality claim does not hold.
\end{assumption}

\begin{assumption}[Exact Closed-Loop Re-anchoring]
\label{ass:exact-reset}
At the start of each control cycle $t$, the Step~1 localization
procedure returns a point $z_t^{\mathrm{loc}} \in \mathcal{M}^*$
satisfying $\|z_t - z_t^{\mathrm{loc}}\| \le \delta_{\mathrm{loc}}$,
and the latent state is \emph{hard-reset} to
$z_t \leftarrow z_t^{\mathrm{loc}}$ before the integration step
is applied.
\end{assumption}

\begin{proposition}[Uniform Closed-Loop Non-Accumulation]
\label{prop:closed-loop}
Let $\delta_{\mathrm{loc}} > 0$ be the localization error
of Step~1, satisfying
$\|z_t - \Pi_{\mathcal{M}^*}(z_t)\| \le \delta_{\mathrm{loc}}$
for all $t \ge 0$, and let $\delta_{\mathrm{loc}}$ be
calibrated from the held-out loss
$\mathcal{L}_{\mathrm{local}}^{\mathrm{val}}$ via the
geometric tolerance $\tau_{\mathrm{geom}}$.
Then under the conditions of Theorem~\ref{thm:error},
Assumption~\ref{ass:curvature}, Assumption~\ref{ass:tangential},
and Assumption~\ref{ass:exact-reset}:
\begin{equation}
  \sup_{t \ge 0}\,\|\mathbf{u}_t - \mathbf{u}_t^*\|
  \;\le\;
  K_{D_u}\!\left(\delta_{\mathrm{loc}}
    + \tfrac{1}{2}\kappa_{\max}V_{\max}^2\,\Delta t^2\right)
  + \delta_{\mathrm{dec}}.
  \label{eq:closed-loop-bound}
\end{equation}
The bound is \textbf{uniform in $t$}, holds from cycle $t=0$,
and does not depend on any limit or fixed-point argument.
\end{proposition}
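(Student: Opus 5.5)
The argument works within a single control cycle; uniformity in $t$ then comes for free from the per-cycle reset. Fix a cycle $t$. By Assumption~\ref{ass:exact-reset}, the latent state entering the integrator is $z_t^{\mathrm{loc}}\in\mathcal{M}^*$, and the pre-reset state satisfied $\|z_t-z_t^{\mathrm{loc}}\|\le\delta_{\mathrm{loc}}$. Let $z_{t+1}$ be the post-update latent that is decoded, and let $z^\dagger$ be a suitable comparison point on $\mathcal{M}^*$ near $z_{t+1}$, for instance $\Pi_{\mathcal{M}^*}(z_{t+1})$. The triangle inequality gives
\[
\|\mathbf{u}_t-\mathbf{u}_t^*\|\le\|D_u(z_{t+1})-D_u(z^\dagger)\|+\|D_u(z^\dagger)-\mathbf{u}^*(z^\dagger)\|.
\]
The clamp is a projection onto a box, hence $1$-Lipschitz. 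It therefore only helps, provided $\mathbf{u}_t^*$ lies in $\mathcal{U}$. By condition (vi) of Theorem~\ref{thm:error}, the second term is at most $\delta_{\mathrm{dec}}$. By condition (i), the first term is at most $K_{D_u}\|z_{t+1}-z^\dagger\|$. What remains is to show $\|z_{t+1}-z^\dagger\|\le\delta_{\mathrm{loc}}+\tfrac12\kappa_{\max}V_{\max}^2\Delta t^2$, with $\mathbf{u}_t^*$ identified with the oracle action at $z^\dagger$.

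I would split this distance into two contributions.

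\emph{Normal excursion of the step.} Assumption~\ref{ass:tangential} makes the step start tangent to $\mathcal{M}^*$ at $z_t^{\mathrm{loc}}$, with speed at most $V_{\max}$ by condition (iii). Write $v=\dot z_{\mathrm{safe},t}\Delta t\in T_{z_t^{\mathrm{loc}}}\mathcal{M}^*$, so $\|v\|\le V_{\max}\Delta t$. The manifold is locally a graph over its tangent space, and the second fundamental form is bounded by $\kappa_{\max}$ (Assumption~\ref{ass:curvature}). Hence the point $z_t^{\mathrm{loc}}+v$ lies within $\tfrac12\kappa_{\max}\|v\|^2+o(\|v\|^2)$ of $\mathcal{M}^*$. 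The condition $\kappa_{\max}\delta_{\mathrm{feas}}\le\tfrac14$ keeps everything in the tube where the projection is well defined, so this becomes a clean bound. The Lie/exponential correction $\delta_{\mathrm{Lie}}$ from \eqref{eq:lie-drift-prop1} also contributes at order $\Delta t^2$. I would absorb it by interpreting the group step as following the manifold to second order; equivalently, I would regard the exponential curve as the curve whose deviation is measured by the curvature term.

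\emph{Localization slack.} I would charge $\delta_{\mathrm{loc}}$ as the discrepancy between the reset point and the true manifold point associated with the current state, so that $\mathbf{u}_t^*$ is the oracle response at the true point, not at the reset point. Adding the two contributions gives the bracket.

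Because the reset discards the previous cycle's error, no term carries over from $t-1$. Taking the supremum over $t\ge0$ then yields \eqref{eq:closed-loop-bound} without any limit or fixed-point argument.

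I expect the main obstacle to be the normal-excursion step. Done carefully, the tangential step leaves the manifold by $\tfrac12\kappa_{\max}\|v\|^2$ only up to higher-order terms, and the Lie remainder from Step~1 of Proposition~\ref{prop:drift} is of the same order $\tfrac12V_{\max}^2\Delta t^2$ but not weighted by $\kappa_{\max}$. To get exactly the stated constant, I would first try to define the group step so that the exponential curve lies on, or osculates, $\mathcal{M}^*$. That would leave only the curvature deviation of the straight chord, with the constants $\tfrac32$-Lipschitz projection and $\kappa_{\max}\delta_{\mathrm{feas}}\le\tfrac14$ used to control the remainder. Failing that, I would accept a constant-factor loss and state the bound up to $O(\Delta t^3)$.
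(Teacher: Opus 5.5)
Your proposal is essentially the paper's proof. The hard reset of Assumption~\ref{ass:exact-reset} discards carry-over between cycles, Assumption~\ref{ass:tangential} with the second-fundamental-form estimate bounds the post-step off-manifold distance by $\tfrac{1}{2}\kappa_{\max}V_{\max}^2\Delta t^2$, $\delta_{\mathrm{loc}}$ is added on, and the $K_{D_u}$-Lipschitz decoder plus $\delta_{\mathrm{dec}}$ gives the bound uniformly in $t$.

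The paper simply asserts the $\tfrac{1}{2}\mathrm{II}(v,v)\Delta t^2$ deviation for the Lie exponential step, which is your chord/osculation reading. It never addresses the higher-order terms or the $\kappa$-free Lie remainder that you flag, and it ignores the output clamp, which you handle correctly. Your choice $z^\dagger=\Pi_{\mathcal{M}^*}(z_{t+1})$ already yields the bound without the $\delta_{\mathrm{loc}}$ slack. By contrast, anchoring $\mathbf{u}_t^*$ at a point within $\delta_{\mathrm{loc}}$ of $z_t^{\mathrm{loc}}$ would also incur the first-order tangential displacement $\|v\|\le V_{\max}\Delta t$. The paper's own proof makes this conflation, since it takes $z_t^*=\Pi_{\mathcal{M}^*}(z_t^{\mathrm{loc}})$.
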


\begin{proof}
By Assumption~\ref{ass:exact-reset}, at the start of each cycle $t$,
Step~1 localization provides $z_t^{\mathrm{loc}} \in \mathcal{M}^*$ with:
\begin{equation}
  \|z_t - z_t^{\mathrm{loc}}\| \;\le\; \delta_{\mathrm{loc}},
  \qquad z_t \leftarrow z_t^{\mathrm{loc}}.
  \label{eq:localization-reset}
\end{equation}
This is a hard reset: the latent state is replaced by its
nearest point on $\mathcal{M}^*$ at the start of every cycle,
so any off-manifold error from the previous cycle is
discarded before the new integration step begins.

Starting from $z_t^{\mathrm{loc}} \in \mathcal{M}^*$
(off-manifold distance zero after the reset),
we show the Lie group integration step moves
$z_t^{\mathrm{loc}}$ \emph{tangentially} to $\mathcal{M}^*$
to first order.
Since $z_t^{\mathrm{loc}} \in \mathcal{M}^*$, the normal
potential $\Phi_{\text{normal}}(z_t^{\mathrm{loc}}) = 0$,
so $\nabla_z\Phi_{\text{normal}}(z_t^{\mathrm{loc}}) = 0$
(the snap gradient vanishes exactly on $\mathcal{M}^*$).
By Assumption~\ref{ass:tangential},
$\dot{z}_{\text{safe},t}
= -\Pi_{T_{z_t}\mathcal{M}^*}(\nabla_z\Phi_{\text{tangent}}(z_t^{\mathrm{loc}}))$
is a purely tangential velocity
$\dot{z}_{\text{safe},t} \in T_{z_t^{\mathrm{loc}}}\mathcal{M}^*$.
A Lie group exponential step by a tangent vector $v \in T_z\mathcal{M}^*$
stays on $\mathcal{M}^*$ to \emph{first order} and deviates
by $O(\|v\|^2\Delta t^2)$ due to curvature of $\mathcal{M}^*$
(the second fundamental form argument: the normal component of
the geodesic deviation is $\tfrac{1}{2}\mathrm{II}(v,v)\Delta t^2$).
Therefore, with $\|v\| \le V_{\max}$ and $\kappa$ the curvature bound:
\begin{equation}
  \|z_{t+1} - \Pi_{\mathcal{M}^*}(z_{t+1})\|
  \;\le\; \tfrac{1}{2}\kappa_{\max}\,V_{\max}^2\,\Delta t^2,
  \label{eq:one-step-off-manifold}
\end{equation}
where $\kappa_{\max}$ is the curvature bound from Assumption~\ref{ass:curvature}.
After the next cycle's Step~1 reset, the off-manifold
distance returns to $\le \delta_{\mathrm{loc}}$,
independent of $t$.
Therefore, for all $t \ge 0$:
\begin{equation}
  \|z_t - z_t^*\|
  \;\le\; \delta_{\mathrm{loc}} + \tfrac{1}{2}\kappa_{\max}V_{\max}^2\,\Delta t^2,
  \label{eq:uniform-latent-bound}
\end{equation}
where $z_t^* = \Pi_{\mathcal{M}^*}(z_t^{\mathrm{loc}})$
is the nearest Pareto-optimal point.
Applying the $K_{D_u}$-Lipschitz decoder and the
triangle inequality (as in Stage 4 of the proof of
Theorem~\ref{thm:error}):
\begin{align}
  \|\mathbf{u}_t - \mathbf{u}_t^*\|
  &\;\le\; K_{D_u}\,\|z_t - z_t^*\| + \delta_{\mathrm{dec}}
  \notag\\
  &\;\le\; K_{D_u}\!\left(\delta_{\mathrm{loc}}
    + \tfrac{1}{2}\kappa_{\max}V_{\max}^2\,\Delta t^2\right)
  + \delta_{\mathrm{dec}}.
\end{align}
Since the right-hand side is independent of $t$, the supremum
over all $t \ge 0$ is finite and equal to the expression
in~\eqref{eq:closed-loop-bound}.
\end{proof}

\begin{remark}[Comparison with Proposition~\ref{prop:drift} and
Theorem~\ref{thm:error}]
Proposition~\ref{prop:drift} establishes a $\limsup$ bound
that requires infinite time to approach and relies on a
Banach fixed-point argument for the free-running integrator.
Proposition~\ref{prop:closed-loop} replaces this with a
$\sup$ bound that holds from $t = 0$, using only a
one-step induction made possible by the Step~1 reset.
The two results are complementary:
Proposition~\ref{prop:drift} characterises the integrator
in isolation; Proposition~\ref{prop:closed-loop} characterises
the full closed-loop algorithm and is the stronger,
operationally relevant statement.
Note that $\delta_{\mathrm{loc}}$ is an additional tunable
parameter (controlled by $\tau_{\mathrm{geom}}$) that is
absent from Theorem~\ref{thm:error}; reducing
$\tau_{\mathrm{geom}}$ tightens the localization error at
the cost of a smaller feasible localization set
$\mathcal{M}_t$.
\end{remark}

\section{Discussion and Interpretation}
\label{app:discussion}

\paragraph{Why the reference-safe rate matters.}
The navigation result should not be read as a learned controller beating
an exact oracle.  The reference is a white-box optimizer with the same
known dynamics and constraints, but it is run under a finite online
budget in a coupled multi-agent scene.  When it returns a colliding or
unfinished rollout, its objective value is not an admissible optimum.
This is why Table~\ref{tab:iclr-main-results} reports both
reference-safe rate and GPC gap.  In moving scenes the reference is safe
in only $35\%$ of episodes, while GPC remains scene-collision-free in
$100\%$ and reaches $99\%$ reachable-conditioned safe success.  The meaningful optimality
comparison is therefore the conditional percentage gap on the
reference-safe subset, where GPC has a $-3.43\%$ aggregate cost gap.

\paragraph{What is learned.}
GPC does not learn the physics equations from scratch.  The physics and
constraints are used to produce feasible Pareto samples, feasibility
margins, and semantic violation labels.  The learned object is the
amortized geometry of the supported Pareto solution set: how feasible
actions, decoded physical states, objective values, and priority shifts
co-vary across the operating envelope.  This differs from a scalar
policy because the latent coordinate remains navigable after the
priority changes; it also differs from a lookup table because locality,
decoder regularity, and rollout-aware refinement define behavior between
offline samples. \rev{Equivalently, GPC represents the offline
finite-horizon solution operator, not future trajectories or a recursive
value function.}

\paragraph{Failure modes and monitoring.}
The theorem is local, and the implementation treats it that way.  GPC is
expected to be reliable when localization residuals, decoder calibration
error, and decoded margins remain within their validated ranges.  It can
fail if the observation leaves the sampled operating envelope, if the
offline optimizer has no positive feasibility margin, or if uncertainty
changes the feasible set more than represented in training.  These
conditions are monitorable: the controller logs reconstruction residual,
minimum decoded safety margin, and semantic priority at every step.  A
negative margin or large residual is a trigger for a conservative
fallback solver rather than a reason to silently trust extrapolation.
\rev{In our logs, these quantities correspond to localization
residuals, decoded feasibility margins, and semantic priorities; the
margins are obstacle and pairwise clearances in navigation and thermal,
voltage, generator, and ramp slacks in OPF.}

\paragraph{Role of the main components.}
The simpler studies report lower-dimensional sanity checks, but the same
component logic is visible in the main experiments.  Removing semantic
adaptation collapses the controller to a fixed scalarization
$\boldsymbol{\sigma}_t\equiv\bar{\boldsymbol{\sigma}}$, which is inadequate
when an obstacle approaches or a branch loading moves toward its thermal
limit.  Removing rollout-aware decoder refinement sets
$\lambda_{\mathrm{roll}}=0$ in eq.~\eqref{eq:iclr-training-loss}; this
keeps pointwise supervised error small but leaves the action decoder less
calibrated on latent states produced by RK2/Lie motion.  Removing the
Lie-local residual sets $\gamma_{\mathrm{L}}=0$ in
eq.~\eqref{eq:iclr-lie-residual}; this recovers a plain RK2 latent
navigator and is a useful baseline, but it ignores the decoder-pullback
metric directions in which small latent movements most affect physical
state.  These variants are not separate algorithms; they are controlled
ways to remove the semantic, rollout, or geometric parts of GPC.

\end{document}